\documentclass[11pt]{article}
\usepackage{rheaj}
\usepackage{setspace}
\usepackage{xparse}

\newcommand{\mypara}[1]{\medskip \noindent {\bf #1}}

\newcommand{\hDST}{\textnormal{h-DST}}
\newcommand{\hfracDST}{\textnormal{h-frac-DST}}

\newcommand{\nt}{k}
\newcommand{\lc}{h}

\newcommand{\algsep}{\textsc{PruneAndSeparate}}

\begin{document}
\title{Length-Constrained Network Design in Planar Digraphs\thanks{Siebel School of Computing and Data
    Science, University of Illinois, Urbana-Champaign, Urbana, IL
    61801. {\tt \{chekuri,rheaj3\}@illinois.edu}. Supported in part by
    NSF grant CCF-2402667.}}
\author{Chandra Chekuri \and Rhea Jain}
\date{}
\maketitle
\thispagestyle{empty}

\begin{abstract}
  We study \emph{length-constrained} generalizations of Directed Steiner Tree
  (DST) and Directed Steiner Forest (DSF) in planar digraphs. 
  In both problems, the input is a directed graph with edge costs. 
  DST asks for a min-cost subgraph connecting a root to a given set of
  terminals, and DSF asks for a min-cost subgraph connecting each of
  a given set of source-sink terminal pairs.
  In the length-constrained setting, each edge has both a cost and a
  length, and the input includes a length bound $h$; the goal is to
  find a min-cost subgraph connecting each terminal pair via a path
  of length at most $h$.
  Our work is motivated by a recent line of results showing that several
  network design problems that are traditionally hard in directed
  graphs admit polylogarithmic approximation ratios in planar digraphs.
  We give polylogarithmic bicriteria approximation algorithms for
  length-constrained analogues of DST and DSF in planar digraphs.
  Our approximation ratios match the best known for DST and DSF in
  planar digraphs, with an $O(\log k)$ violation of
  the length constraint, where $k$ denotes the number of
  terminals (or terminal pairs).
  As corollaries, we obtain polylogarithmic approximations for
  \emph{buy-at-bulk} DST and DSF in planar digraphs.
\end{abstract}

\section{Introduction}
\label{sec:intro}

\emph{Directed Steiner Tree} (DST) and \emph{Directed Steiner Forest} (DSF) are fundamental 
problems in network design with a wide array of applications. For both 
problems, the input consists of a directed graph $G = (V, E)$ with edge costs $c: E \to \R_+$. 
In Directed Steiner Tree, given a root $s \in V$ and terminals 
$S \subseteq V \setminus \{s\}$, the goal is to find a min-cost subgraph containing an 
$s$-$t$ path for each $t \in S$. Directed Steiner Forest generalizes DST to the 
\emph{multicommodity} setting: given a set of demand pairs $D = \{(s_i, t_i)\}_{i \in [\nt]}$, 
the goal is to find a min-cost subgraph containing an $s_i$-$t_i$ path for 
each $i \in [\nt]$. 

DST and DSF are $\NP$-hard, and in fact have strong lower bounds 
on their approximability. DST generalizes several hard problems in combinatorial 
optimization, such as Set Cover and Group Steiner Tree. 
It is hard to approximate to an 
$\Omega(\log^2 \nt/\log \log \nt)$-factor under plausible complexity assumptions 
\cite{GrandoniLL22}, and to a slightly weaker $\Omega(\log^{2-\eps} \nt)$-factor 
unless $\NP$ is contained in randomized quasi-poly time \cite{HalperinK03};
here, $\nt$ denotes $|S|$.
The best known approximation ratios are $O(\log^2 \nt/\log \log \nt)$ in 
\emph{quasi-polynomial} time \cite{GrandoniLL22,GhugeN22,Charikaretal99} and 
$O(\nt^{\eps})$ for any fixed $\eps > 0$ in polynomial time \cite{Zelikovsky97}.
Whether DST admits a polylogarithmic approximation in polynomial time is a major 
open question. We remark that the natural LP relaxation for DST has polynomial-factor 
integrality gap \cite{ZosinK02,LiL22}; this is one reason that developing improved 
approximations for DST has proven challenging. DSF
is hard to approximate to a factor 
$\Omega(2^{\log^{1 - \eps}(n)})$ for any $\eps > 0$ and thus does not 
admit a polylogarithmic approximation \cite{DodisK99}. 
The best known approximation ratios are $O(\nt^{\frac 1 2 +\eps})$ 
\cite{ChekuriEGS11} and $O(n^{2/3 + \eps})$ \cite{BermanBKRY13}. 
An exciting recent line of work has shown that many problems that are hard to 
approximate in general directed graphs are tractable when the input 
graph is \emph{planar} \cite{KawarabayashiS21,FriggstadM23,polymatroid_esa,dsf_soda}.
In planar digraphs, DST admits an $O(\log k)$-approximation \cite{FriggstadM23},
a natural LP relaxation has integrality gap $O(\log^2 k)$ \cite{polymatroid_esa},
and DSF admits an $O(\log^6k)$-approximation \cite{dsf_soda}. 

In this paper, we consider a generalization of DST and DSF to the 
\emph{length-constrained} setting. As the name suggests, these impose an 
additional constraint on the length of paths used to connect terminal pairs. 
Formally, in \emph{Length-Constrained Directed Steiner Tree} (LC-DST) 
and \emph{Length-Constrained Directed Steiner Forest} (LC-DSF), 
we are given an additional \emph{length} function on the edges 
$\ell: E \to \R_+$ and a length constraint $\lc$; the goal is to find a 
min-cost subgraph in which terminal pairs are connected via a path 
of length at most $\lc$. 
There are several practical reasons why connectivity alone may not suffice: 
short paths are desirable for faster transportation and communication, 
improved tolerance to failure, and more. Furthermore, length-constrained 
network design is closely related to several other important problems 
in combinatorial optimization. For instance, \emph{buy-at-bulk} network design 
(closely related to cost-distance and other problems) is a well-studied and 
practical model which places a penalty on the path lengths in the objective 
function instead of imposing strict bounds on path lengths.
Progress in length-constrained network design has led to the resolution of 
several open problems in buy-at-bulk network design \cite{hop_esa,bab_stoc}.
As a result of its practical and theoretical importance, length-constrained 
network design has been very well studied; see \Cref{sec:rel_work} for details. 

In undirected graphs, 
adding length constraints to problems adds significant complexity to 
a problem. As an example, consider the minimum spanning tree problem; 
this is easily solvable in polynomial time. However, the length-constrained 
minimum spanning tree problem is essentially equivalent in approximability
to DST (see \cite{hershkowitz2025planar} for a formal 
reduction). A common way to handle this difficulty is by relaxing the 
length constraint, allowing for some slack in both the 
length and the cost (see \Cref{def:length_bicriteria}). 
In directed graphs, on the other hand, length constraints do not add 
inherent complexity. A simple reduction shows that LC-DST 
is equivalent to DST, as shown in \Cref{fig:hc_dst_red},\footnote{For polynomial edge lengths one can replace each edge $e$ with a 
path of length $\ell(e)$. For larger edge lengths, one can obtain a similar reduction losing a $(1+\eps)$ 
factor in length slack.} 
and one can define a similar reduction from LC-DSF to DSF. 
However, these reductions do not preserve planarity. Given the 
fruitful progress on network design problems in planar digraphs, it is 
natural to ask if LC-DST and LC-DSF admit polylogarithmic approximations 
in planar digraphs. We answer these questions affirmatively.

\begin{figure}
  \begin{center}
    \includegraphics[width=0.7\linewidth]{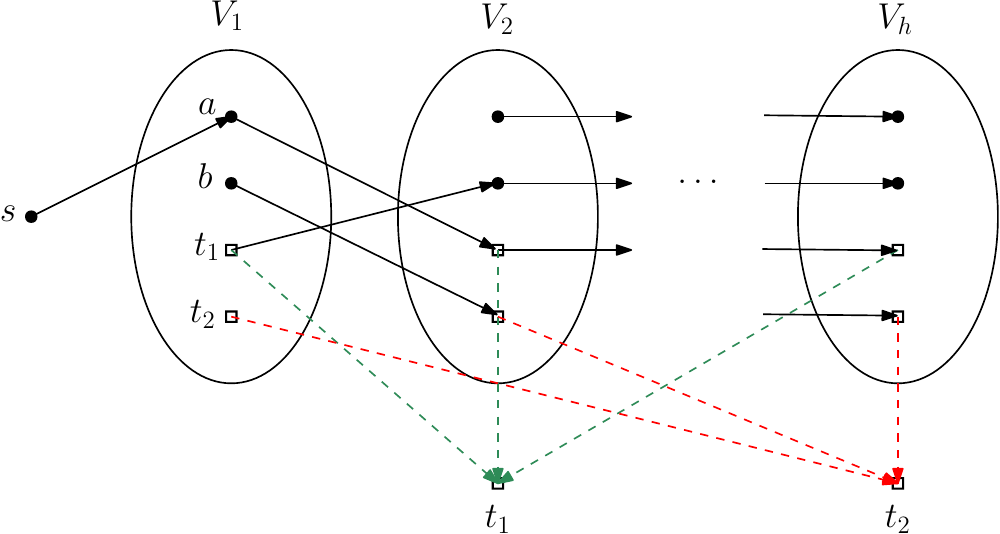}
  \end{center}
  \caption{This example assumes all edge lengths are $1$.
  Suppose the input graph is a path $s,a,t_1,b,t_2$ with 
  terminals $t_1, t_2$. We create $h$ copies of the vertex set 
  $V_1, \dots, V_h$, and for every $(u,v) \in E$ and all $i < h$, we 
  add an edge from the copy of $u$ in $V_i$ to that of 
  $v$ in $V_{i+1}$. The terminals in the resulting DST instance are 
  given at the bottom of the figure.}
  \label{fig:hc_dst_red}
\end{figure}

\subsection{Our Results}

We obtain polylogarithmic \emph{bicriteria} approximation 
algorithms for LC-DST and LC-DSF in planar digraphs. 

\begin{definition}
\label{def:length_bicriteria}
  An $(\alpha, \beta)$-bicriteria approximation for a length constrained 
  network design problem with length constraint $\lc$ is a solution 
  $F \subseteq E$ in which all demand pairs are connected via a path 
  of length at most $\beta \cdot \lc$, and 
  $c(F) \leq \alpha \cdot \opt$, where $\opt$ is the cost of an 
  optimal solution. 
  We refer to $\alpha$ as the \emph{approximation ratio}
  and $\beta$ as the \emph{length slack}.
\end{definition}

The approximation factors for all of our results 
match the best known ratios for DST and DSF in planar digraphs without 
length constraints; all algorithms incur an $O(\log k)$ factor 
in the length slack. We remark that this $O(\log k)$ length slack 
factor is essentially the best known for many problems in length-constrained network design when one seeks a polylogarithmic 
approximation ratio,
including length-constrained MST in undirected graphs. There are known algorithms 
for length-constrained MST with polynomial approximation ratio 
that do not lose any length slack. 
Reducing the length slack to $O(1)$ while ensuring polylogarithmic approximation ratio is an enticing research direction 
with some partial progress discussed further in 
\Cref{sec:rel_work}.
We highlight one such recent work: 
in undirected planar graphs, \cite{hershkowitz2025planar} obtains an 
$O(\log^{1+\eps}(n))$-approximation with $(1+\eps)$ length 
slack. However, it is unknown if this approach would extend to the directed setting, and there appear to be several challenges.

We first consider length-constrained Directed Steiner Tree.  
An approximation algorithm is \emph{LP-competitive} if 
the cost of the given solution is at most $\alpha$ times the 
value of an optimum \emph{fractional} solution to a natural 
LP relaxation. We describe an LP relaxation to LC-DST in 
\Cref{sec:dst}. We obtain two algorithms: the first 
(\Cref{thm:lcdst_integral}) is with respect to the optimum integral solution, 
and the second (\Cref{thm:lcdst_frac}) is LP-competitive. 

\begin{theorem}\label{thm:lcdst_integral}
  There is a bicriteria $(O(\log \nt), O(\log \nt))$-approximation for 
  LC-DST in planar digraphs.
\end{theorem}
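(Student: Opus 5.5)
Our plan is to adapt the Friggstad--Mousavi $O(\log \nt)$-approximation for DST in planar digraphs \cite{FriggstadM23}. That algorithm rests on a shortest-path separator decomposition, and we carry the length constraint through the recursion. We guess the optimal cost $\opt$ (up to a constant factor) and fix an optimal LC-DST solution $T^*$. We may assume $T^*$ is an out-arborescence rooted at $s$ in which every terminal is at tree-distance at most $\lc$ from $s$ with respect to $\ell$, since a shortest-path tree inside an optimal subgraph preserves the length bounds and does not increase cost. We contract no structure. Instead we work with the planar graph $G$ and a planar separator of the \emph{tree} $T^*$ (guessed implicitly, as in \cite{FriggstadM23}). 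By the Lipton--Tarjan/Thorup style separator for planar graphs with a spanning tree, one can find a constant number of root-paths of a suitable shortest-path tree whose removal splits the terminals into pieces, each containing at most a constant fraction of them. The recursion depth is therefore $O(\log \nt)$.

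The key steps are as follows. \textbf{(1)} At each level, we guess $O(1)$ directed paths $P_1,\dots,P_r$ of $T^*$ starting at $s$, each of cost at most $\opt$ and length at most $\lc$. These are computed as min-cost paths subject to the length bound, which is a restricted shortest path problem. We use an FPTAS or a cost-scaled DP, paying a $(1+\eps)$ loss in cost or length. Buying them costs $O(\opt)$ per level. Because the recursion subproblems at a fixed level are disjoint regions of $T^*$, the total per level is $O(\opt)$, giving $O(\log \nt)\cdot\opt$ overall. \textbf{(2)} After buying the separator paths, we contract them into the root. In the recursive subinstance, the root is the contracted separator, and the length budget for a terminal $t$ is reset to $\lc$ measured from the point where $T^*$'s $s$--$t$ path leaves the separator. \textbf{(3)} The length bookkeeping works as follows. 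The actual path to $t$ in our solution goes along a separator path from $s$ (length $\le \lc$), then into the subproblem. Each recursion level therefore adds at most $\lc$ (more precisely at most $(1+\eps)\lc$), so the final paths have length $O(\log \nt)\cdot \lc$. This is the source of the $O(\log\nt)$ length slack. \textbf{(4)} The base case, with $O(1)$ terminals, is solved directly by restricted shortest paths.

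The main obstacle is making the guessing polynomial while preserving planarity and the length semantics after contraction. In \cite{FriggstadM23} the separator paths are guessed via their endpoints (polynomially many choices) and replaced by cheapest paths. Here, replacing a tree path by a cheaper path of bounded length may cross $T^*$ differently. We must argue that contracting our chosen path $P_i'$, which shares endpoints with $P_i$, still leaves a feasible residual solution. We would try the following: $T^* \setminus P_i$ together with $P_i'$ still reaches every terminal, and each terminal's path either stays within its piece or enters via $P_i'$, and in the latter case its length is at most $\ell(P_i')+\lc \le 2\lc$. The separation is then w.r.t. $P_i \cup P_i'$ in the planar embedding. The remaining difficulty is to enumerate the recursion with only polynomially many guesses. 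To do this, we would use a DP over subproblems indexed by $O(1)$ boundary vertices, as in \cite{FriggstadM23}, rather than explicit guessing.
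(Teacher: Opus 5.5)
\textbf{There is a genuine gap.} You never obtain a \emph{computable} balanced separator whose paths have bounded cost \emph{and} bounded length, which is exactly where length constraints differ from plain DST.

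Separating the unknown optimal tree $T^*$ is not enough, because the algorithm cannot compute the induced partition of terminals without knowing $T^*$. If it could, planarity would be irrelevant: a tree is balanced-separated by the single root path to its centroid, so the scheme would give a polynomial-time $O(\log \nt)$-approximation for DST in general digraphs, contrary to its known hardness. Also, \cite{FriggstadM23} does not guess separator endpoints. It prunes vertices far from $s$, takes a shortest-path tree of the \emph{pruned graph} $G$, and applies \Cref{lem:dir_sep} to it, so the components of $G\setminus P$ depend only on $G$.

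Replacing the $T^*$-paths $P_i$ by min-cost length-bounded paths $P_i'$ does not repair this, for three reasons. First, such paths to different vertices need not form a tree ($d^{(\lc)}$ is not a metric), so \Cref{lem:undir_sep} cannot be applied to them. Second, the components of $G\setminus\bigcup_i P_i'$ need not be balanced. Third, a terminal whose $T^*$-path leaves $P_i$ at an interior vertex is not served after contracting only $P_i'$, so your residual-feasibility argument fails. Nor does any single-criterion tree work: a cost-only shortest-path tree has root paths of unbounded length, and a length-only one has root paths of unbounded cost.

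\textbf{The missing idea is the mixture cost} $c'(e)=c(e)+\frac{\gamma}{\lc}\ell(e)$.
\begin{itemize}
\item Delete every vertex at $c'$-distance more than $2\gamma$ from $s$. If $\opt\le\gamma$, no vertex of an optimal solution is deleted, since each lies on an $s$-$t$ path of cost at most $\opt$ and length at most $\lc$, hence of $c'$-cost at most $2\gamma$.
\item The $c'$-shortest-path tree is a genuine tree, so \Cref{lem:dir_sep} applies. Each separator path has $c'$-cost at most $2\gamma$, hence simultaneously $c(P_i)\le 2\gamma$ and $\ell(P_i)\le 2\lc$ (\Cref{claim:dst_int_separator}).
\item Contracting $P$ into $s$ and restricting the optimum to the disjoint components gives $\sum_i\opt_i\le\opt$, with $O(\lc)$ extra length per level.
\end{itemize}

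\textbf{Polynomial running time is also unresolved in your plan.} After $j$ levels a subproblem is determined by all $j$ earlier separators, not by $O(1)$ boundary vertices. So per-subinstance guessing, or a DP over such subproblems, is quasi-polynomial. The paper instead returns the cheaper of two recursive branches. One reruns the same instance with $\gamma/2$; the other separates at $\gamma$ and recurses on the components with the same $\gamma$. The number of calls is at most $\gamma|S|^3$, which is polynomial because costs are polynomially bounded. The cost analysis only needs the separating branch at some $\gamma$ with $\opt\le\gamma<2\opt$.
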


\begin{theorem}\label{thm:lcdst_frac}
  There is a bicriteria $(O(\log^2 \nt), O(\log \nt))$ LP-competitive approximation for LC-DST in planar digraphs. 
\end{theorem}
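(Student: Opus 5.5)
We will mirror the approach used for DST in planar digraphs by Friggstad--Mousavi and the LP-based analysis of Chekuri--Jain, adapting both to length constraints. First we write the natural LP relaxation for LC-DST. It has edge variables $x_e$, and for each terminal $t$ a flow $f^t$ supported on $s$-$t$ paths of length at most $\lc$, with $f^t_e \le x_e$. Even though there are exponentially many path variables, this LP can be solved approximately (up to $(1+\eps)$ length slack) by the standard approach of a dual separation oracle for restricted shortest path. Fix an optimal fractional solution $x^*$ of value $\opt_{LP}$.

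The core is a recursive separator-based rounding with $O(\log \nt)$ levels. As in the planar DST algorithms, we use shortest-path separators, but now measured with respect to lengths $\ell$. The intended decomposition has three steps.
\begin{enumerate}
\item Compute a BFS-like in-tree or out-tree from $s$ under $\ell$, restricted to distances at most $\lc$.
\item Find a balanced separator consisting of $O(1)$ directed paths in this tree, where balance is measured with respect to the terminals.
\item Connect terminals whose LP flow crosses a separator path $P$ to $P$ itself.
\end{enumerate}
The crucial observation is a planarity argument. If the flow from $s$ to $t$ crosses a separator path $P$ (which starts at $s$ and is shortest in length), then we can reroute: go along $P$ from $s$ to the crossing vertex $v$, then follow the flow path from $v$ to $t$. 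The prefix of $P$ has length at most $d(s,v) \le \lc$, so the rerouted path has length at most $2\lc$. In a single level, this reduces the problem to connecting terminals from the vertices of $P$, essentially a rooted problem where $P$ is contracted. Terminals not cut are handled recursively in the pieces, each containing at most half the terminals. Because each level of recursion can add an additive $\lc$ (the separator prefix) to path lengths, the total length slack over $O(\log \nt)$ levels is $O(\log \nt)$. This matches the claimed length slack.

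For the cost, at each recursion level the subproblems are disjoint in edge support. We restrict $x^*$ to each piece, together with the separator portion, which gives a feasible fractional solution of total cost at most $\opt_{LP}$ per level. The remaining task is the one-level rounding of terminals attached to a contracted separator path. Here we follow the LP-competitive rounding of Chekuri--Jain for planar DST, which loses $O(\log \nt)$ per level via a path-covering, group-Steiner-on-a-path style argument. Over $O(\log \nt)$ levels, this gives cost $O(\log^2 \nt)\opt_{LP}$. We must also pay for the separator paths themselves. To handle this, we charge them via the LP by guessing or bounding: instead of buying the whole path, we buy only the prefix of $P$ up to the farthest crossing point. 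That prefix can be charged to the fractional flow that crosses $P$, using that the flow must reach beyond that point.

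The main obstacle is the charging of the separator paths while keeping the length bound. In the unconstrained setting, a separator path can be replaced by any cheap path. Here, however, the path must be simultaneously short in $\ell$ and cheap in $c$, and a shortest-$\ell$ path may be expensive. My first attempt is to choose separators not as $\ell$-shortest paths in $G$, but as paths within the support of the LP flow. Specifically, each separator path will be a flow path of some terminal, which is therefore of length at most $\lc$. Its cost can be paid by scaling: we buy paths sampled proportionally to $f^t$, which costs $O(1)$ times the LP value in expectation per level. Planarity then ensures that any other flow path crossing it can be rerouted with additive length $\lc$.
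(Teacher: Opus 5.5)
There is a genuine gap, and it sits exactly at the point you flag as the ``main obstacle'': you never produce a balanced separator whose paths are both cheap relative to the LP value and short in $\ell$.

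\textbf{Why your proposed fixes fail.} The separator lemma (Lemma~\ref{lem:dir_sep}/Lemma~\ref{lem:undir_sep}) gives balance only for root paths of a \emph{single spanning tree} of the (pruned) graph. Moreover, it is the lemma, not you, that decides which three root paths are used. An $\ell$-shortest-path tree gives length at most $h$ but no cost bound, as you note. Your other ideas do not close the gap:
\begin{itemize}
\item LP flow paths of different terminals do not form a tree, so the lemma cannot be applied to them.
\item Sampling paths proportionally to $f^t$ bounds the expected cost of a \emph{randomly chosen} path, not of the specific paths that balance forces on you.
\item Buying only the prefix of $P$ up to the last crossing point and ``charging it to the flow that crosses $P$'' fails: a flow path meeting $P$ at $v$ need not use $P[s,v]$ at all, so the LP pays nothing for that prefix.
\item The ``one-level group-Steiner-on-a-path rounding losing $O(\log k)$ per level'' is not an argument you supply. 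No such step is needed: terminals on the separator are connected by the separator itself, and all others recurse.
\end{itemize}

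\textbf{The missing idea.} Use the mixture metric, together with the LP value scaled by $\log|S|$ as the guess for the optimum.
\begin{itemize}
\item With $X=\sum_e c(e)x_e$, set $\gamma = 2\log|S|\cdot X$ and $c'(e)=c(e)+\frac{\gamma}{h}\ell(e)$.
\item Delete vertices at $c'$-distance more than $2\gamma$ from $s$, and apply Lemma~\ref{lem:dir_sep} to the $c'$-shortest-path tree. Each of the three separator paths then has $c\le 2\gamma$ \emph{and} $\ell\le 2h$ simultaneously.
\item Pruning removes flow only on paths with $c(p)>\gamma$, because $\ell(p)\le h$ gives $c'(p)\le c(p)+\gamma$. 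By averaging, $\sum_{p:\,c(p)>\gamma} f_p < X/\gamma = 1/(2\log|S|)$ for each terminal.
\item Scaling the solution by $(1+1/\log|S|)$ restores feasibility.
\item Contracting $P$ into $s$ maps each surviving flow path to its suffix after its last vertex in $P\cup\{s\}$, and that suffix lies inside a single component. So the restricted, rescaled solution is feasible for each subinstance.
\item Induction gives $c(F)\le 12X\log|S| + 12(\log|S|)^2(1+1/\log|S|)X \le 12(\log|S|+1)^2 X$.
\item Each of the $O(\log k)$ levels adds at most $\ell(P)\le 6h$ to path lengths, which gives the $O(\log k)$ length slack.
\end{itemize}
So the $O(\log k)$ cost loss per level comes from the size of the guess $\gamma$, not from a separate rounding step.
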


Next, we consider length-constrained Directed Steiner Forest. 
Note that \Cref{thm:lcdsf_main} provides an approximation 
algorithm with respect to the optimum integral solution; obtaining an LP-competitive approximation 
remains open. We note that an LP-competitive approximation 
for planar DSF is not known even without length constraints. 

\begin{theorem}
\label{thm:lcdsf_main}
  There is a bicriteria $(O(\log^6 \nt), O(\log \nt))$-approximation for 
  Length-Constrained Directed Steiner Forest in planar
  digraphs.
\end{theorem}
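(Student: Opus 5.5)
We plan to follow the framework behind the $O(\log^6 k)$-approximation for planar DSF of \cite{dsf_soda}, replacing each connectivity primitive with a length-constrained analogue. That framework reduces DSF to a sequence of "junction tree" style subproblems using shortest-path separators of planar digraphs. The key structural fact is the following. Take a (directed) shortest-path separator $P$, built with respect to a suitable metric, that splits the instance. Every demand pair whose optimal path crosses $P$ can then be routed through a bounded number of "portal" structures attached to $P$, and these can be solved as DST/in-DST instances rooted along $P$. For the length-constrained version we use the separator with respect to the \emph{length} metric $\ell$. If an optimal $s_i$-$t_i$ path $Q_i$ of length at most $h$ hits $P$ at vertices $u$ and then later at $v$, then the shortest-path property gives that the subpath of $P$ from $u$ to $v$ has length at most $\ell(Q_i)\le h$. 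Rerouting along $P$ therefore costs only an additive $O(h)$ in length.

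\textbf{Step 1 (separator and crossing pairs).} Recursively decompose using planar separators consisting of $O(1)$ directed shortest paths in $\ell$ (as in \cite{dsf_soda}), giving recursion depth $O(\log k)$ with respect to terminals. Pairs whose optimal path avoids the separator are passed to one side. At each recursion level each terminal lies in one subproblem, and we charge that level's cost to $\opt$.

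\textbf{Step 2 (crossing pairs via length-constrained junction structures).} For a pair whose optimal path meets separator path $P$, let $u$ be the first and $v$ the last vertex of $Q_i$ on $P$. We need to buy an $s_i\to u$ path of length at most $h$, an $v\to t_i$ path of length at most $h$, and the subpath of $P$ between them. Guessing $u,v$ exactly is impossible, so following \cite{dsf_soda} we partition $P$ hierarchically into segments of geometrically varying length. For each segment we solve an LC-DST instance (out-tree from sources to the segment, reversed for sinks). Here we invoke \Cref{thm:lcdst_integral}, or \Cref{thm:lcdst_frac} for its LP form, whose solutions have length slack $O(\log k)$. The pairing between sources and sinks through segments is handled by the same covering/LP-rounding argument as in \cite{dsf_soda}; this is where the extra polylog factors in cost come from. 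These steps are unchanged, because they only concern which terminals attach to which segment and not path lengths.

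\textbf{Step 3 (length accounting).} This is the main obstacle. Naively, length slack would accumulate multiplicatively across the $O(\log k)$ recursion levels, or additively with an $O(\log k)$ factor per level. To prevent this, each demand pair must be \emph{finalized} at a single separator: the level where its path first hits a separator. Its final route is then one LC-DST tree path (length $O(\log k)\cdot h$), plus a separator subpath (length at most $h$ by the shortest-path property), plus another LC-DST path (length $O(\log k)\cdot h$). This totals $O(\log k)\,h$. We must ensure that segment-granularity rerouting adds only $O(h)$. We will try restricting the segment hierarchy to segments of $P$ whose length is at most $h$ and charging the detour to $\ell(Q_i)$. The cost analysis then multiplies the $O(\log k)$ recursion depth with the DSF cost factors of \cite{dsf_soda}, with LC-DST substituted for DST, giving $O(\log^6 k)$ cost and $O(\log k)$ length slack, as claimed in \Cref{thm:lcdsf_main}.
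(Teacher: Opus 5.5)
Your plan has genuine gaps, and the step you call routine is exactly where length constraints break the earlier argument.

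First, you compute the shortest-path separator algorithmically on the input graph with respect to $\ell$. \Cref{lem:dir_sep} needs a root from which every vertex is reachable, and a DSF instance provides none. Even granting a root, a shortest path in $\ell$ of the input graph has cost unrelated to $\opt$, and it need not have length $O(\lc)$. So buying subpaths of $P$ cannot be charged to $\opt$. The algorithm also cannot tell which pairs' optimal paths meet $P$, so ``finalizing each pair at the first separator its optimal path hits'' is not implementable.

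The paper, like \cite{dsf_soda}, uses separators only in an \emph{existence} argument on the optimal solution $G^*$. An $\lc$-length-constrained layering (\Cref{lem:lcdsf_layers_main}) yields a spanning tree whose root paths are at most three dipaths of length $\le \lc$; three layers are needed here, not two. The separator is therefore at most nine dipaths inside $G^*$, each of length $\le \lc$, and every cost is charged to $c(E^*)$. The algorithm itself never computes a separator. It finds an approximately min-density junction tree via \ref{LC-Den-LP}, bucketing of the $y_i$, and two calls to the LP-competitive algorithm of \Cref{thm:lcdst_frac}.

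Second, you assert that the source/sink pairing steps of \cite{dsf_soda} are unchanged because they do not involve lengths. They do. That argument reroutes terminals whose access paths to $P$ share an edge onto a common entry vertex. Under length constraints such merging can make access paths arbitrarily long (\Cref{fig:lc-one-path}). Your Step~3 (``we will try restricting segments to length $\le h$'') is tentative and does not address this overlap, which is what controls the cost.

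The missing idea is the paper's one-path argument:
\begin{itemize}
  \item let $a_i$ be the first vertex of $P$ reachable from $s_i$ within length $2\lc$, and $b_i$ the last vertex of $P$ that reaches $t_i$ within $2\lc$;
  \item route every pair with $a_i \le_P \mu \le_P b_i$ through the midpoint $\mu$ using paths of length $\le 3\lc$;
  \item recurse on the two halves of $P$;
  \item use the $2\lc$ slack to show that the source-side (and sink-side) edge sets of the two recursive subproblems are disjoint (\Cref{claim:lcdsf_onepath_cost}), so each edge is charged only $O(\log k)$ times.
\end{itemize}
This yields a $3\lc$-length junction tree of density $O(\log^2 k)\,\opt/k$ (\Cref{thm:lcdsf_existence-main}). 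The $O(\log^6 k)$ bound then comes from $O(\log^3 k)$ for finding the junction tree, times $O(\log^2 k)$ for existence, times $O(\log k)$ for the covering argument, with length slack $3\beta = O(\log k)$. Your ``recursion depth times the factors of \cite{dsf_soda}'' count does not derive this.
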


\subsubsection{Extensions}

We mention two further extensions of our work. Details and formal theorem statements 
are omitted as they are technically involved and require new machinery, see \cite{my_thesis} for details.

\mypara{Buy-at-bulk network design.}
  As mentioned earlier, there are known connections between buy-at-bulk 
  and length-constrained network design problems. Prior work in 
  \emph{undirected} graphs \cite{hop_esa,bab_stoc} demonstrated an LP-based 
  reduction from a buy-at-bulk problem to its corresponding length-constrained 
  problem. The resulting approximation ratio for the buy-at-bulk problem 
  depends on both the approximation and length slack of the algorithm for 
  the length-constrained problem. 
  These arguments extend to directed 
  graphs; thus, one can obtain a polylogarithmic approximation for 
  buy-at-bulk DST in planar digraphs as a 
  corollary of \Cref{thm:lcdst_frac}. 
  In the multicommodity setting, more effort is needed, since 
  \Cref{thm:lcdsf_main} does not yield an LP-competitive solution. 
  Nevertheless, one can obtain a polylogarithmic 
  approximation for buy-at-bulk DSF in planar digraphs 
  by arguing that there exists a low-density \emph{junction tree}. 
  Similar arguments have been used in \cite{chks09,acsz11,hop_esa}.

\mypara{Group, Covering, and Polymatroid Steiner Tree.}
  Chekuri et al. \cite{polymatroid_esa} developed polylogarithmic 
  approximation algorithms for several rooted generalizations of 
  Directed Steiner Tree in planar digraphs. This includes 
  the \emph{Directed Polymatroid Steiner Tree} problem, where 
  a polymatroid is defined on the vertex set, and a feasible tree 
  is required to span a base of the polymatroid. A polymatroid 
  is an integer-valued, normalized, monotone, submodular function. 
  This generalizes several important network design problems, 
  including Group and Covering Steiner Tree. At a high level, 
  the algorithm of \cite{polymatroid_esa} explicitly constructs the recursion tree of the 
  divide-and-conquer algorithm of \cite{FriggstadM23}. One can 
  view this recursion tree as a ``tree embedding'' suitable for 
  solving rooted problems, allowing one to reduce to solving the 
  problem on a tree. One can 
  similarly construct a ``tree embedding'' that respects length 
  constraints, based on the divide-and-conquer algorithm of 
  \Cref{thm:lcdst_integral}. Thus, one can obtain polylogarithmic 
  approximations for length-constrained rooted variants of directed 
  polymatroid, covering, and group problems. 

\subsection{Technical Overview}

The proofs of \Cref{thm:lcdst_integral} and \Cref{thm:lcdst_frac} on 
LC-DST adapt established ideas to the length-constrained setting. 
The main novel contribution of this work is \Cref{thm:lcdsf_main} on 
LC-DSF, as the previous approach for DSF in planar digraphs 
fails to extend. We outline the ideas for all 
three algorithms here.   

\mypara{Approximation for DST in planar digraphs.} We start by giving an overview 
of the divide-and-conquer $O(\log \nt)$-approximation given by 
Friggstad and Mousavi \cite{FriggstadM23}. 
The algorithm uses Thorup's shortest path separator theorem applied to directed graphs:
\begin{lemma}[\cite{FriggstadM23,Thorup04}]\label{lem:dir_sep}
  Let $G$ be a planar directed graph with non-negative edge costs $c(e)$,
  non-negative vertex weights $w(v)$, and a root $s \in V$ such that every vertex 
  in $V$ is reachable from $s$. There exists a polynomial time algorithm 
  to find three shortest dipaths $P_1, P_2, P_3$ starting at $s$ such 
  that every weakly connected component of 
  $G \setminus (P_1 \cup P_2 \cup P_3)$ has at most half the vertex weight of $G$.
\end{lemma}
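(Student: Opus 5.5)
The plan is to reduce \Cref{lem:dir_sep} to Thorup's separator for undirected planar graphs, using a shortest-path tree rooted at $s$ as the spanning tree. First I would compute a shortest-path out-arborescence $T$ from $s$ with respect to $c$; this exists because every vertex is reachable from $s$. For every $v$, the tree path $T[s,v]$ is then a shortest $s$-$v$ dipath in $G$. I would assume without loss of generality that the underlying undirected graph is connected; otherwise the weakly connected components can be handled separately, and only the component containing $s$ matters, since all vertices are reachable from $s$. I would also triangulate the underlying undirected planar embedding by adding dummy edges. These dummy edges are used only to define faces and are deleted at the end.

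The core step is the classical fundamental-cycle argument (Lipton--Tarjan, as used by Thorup). Fix the spanning tree $T$, viewed as undirected, of the triangulated graph $G'$. Every non-tree edge $e=(u,v)$ forms a fundamental cycle with $T[s,u]\cup T[s,v]$ (up to their common prefix). In the planar dual, the non-tree edges form a spanning tree $T^*$, and since $G'$ is triangulated, every dual node has degree at most $3$. I would take vertex weights $w$ and walk in $T^*$ toward a heavy side: choose a dual node (a triangular face) such that every component of $T^*$ minus that node corresponds to a region of the plane whose interior vertex weight is at most $w(V)/2$. This is the standard centroid argument on a degree-$\le 3$ tree, with weights pushed from vertices to incident faces or dual edges. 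The chosen face $f$ has three corners $x_1,x_2,x_3$. Set $P_i=T[s,x_i]$. Each $P_i$ is a directed shortest path from $s$ in $G$, because tree paths of a shortest-path arborescence are oriented away from the root.

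It remains to verify separation. Removing $P_1\cup P_2\cup P_3$ together with the face $f$ cuts the sphere into at most three regions, each bounded by two of the paths and an edge of $f$. Each region corresponds to one subtree of $T^*$ hanging off $f$. Any weakly connected component of $G\setminus(P_1\cup P_2\cup P_3)$ lies inside a single region, by planarity and the Jordan curve theorem. Removing the dummy edges only refines components. Therefore each component has weight at most the weight of its region, which is at most half the total weight.

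The main obstacle is choosing $f$ correctly. Vertex weights must be assigned to dual-tree subtrees so that the interior weight of each region equals the total weight of the corresponding subtree. Vertices on the paths themselves are excluded, and a centroid must exist. I would handle this by charging each non-path vertex to the dual subtree containing it: every vertex not in $T$-paths to $x_i$ lies strictly inside exactly one region. I would then pick $f$ by starting at any dual node and repeatedly moving to the unique neighbor subtree of weight more than $w(V)/2$ if one exists. This terminates because the tree is finite and the process never moves back. Polynomial time follows since each step is a tree traversal.
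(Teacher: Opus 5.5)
Your proposal is correct and follows the same route as the paper: take a shortest-path out-arborescence from $s$ as the spanning tree in the undirected tree-separator lemma (Lemma~\ref{lem:undir_sep}), so the three tree paths are directed shortest paths from $s$. The only difference is that you also sketch the Lipton--Tarjan/Thorup proof of that undirected lemma (triangulation, fundamental cycles, degree-$\le 3$ dual tree, centroid walk), whereas the paper simply cites it.
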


\begin{figure}[t]
  \centering
  \begin{subfigure}[b]{0.35\textwidth}
    \centering
    \includegraphics[width=\textwidth]{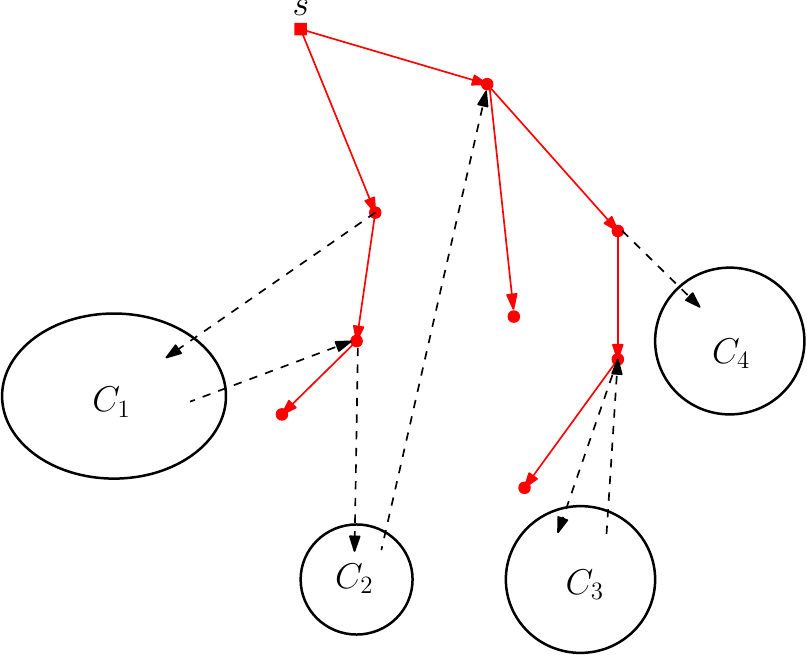}
    \caption{Step 1.}
  \end{subfigure}
  \hfill
  \begin{subfigure}[b]{0.3\textwidth}
    \centering
    \includegraphics[width=\textwidth]{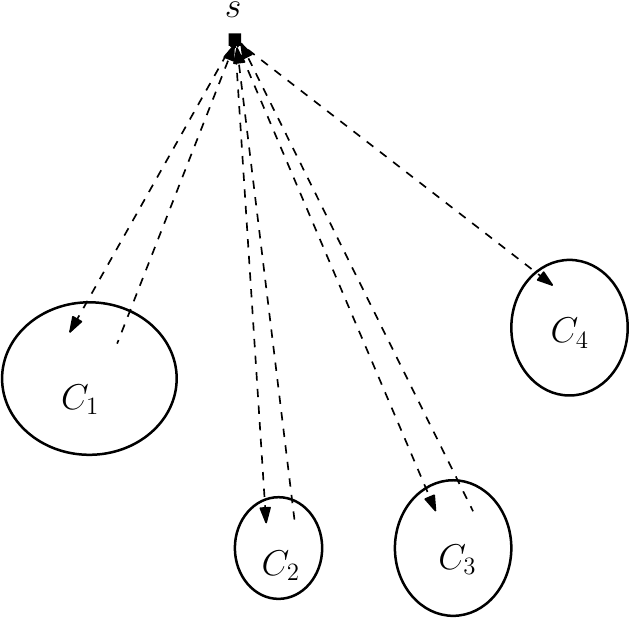}
    \caption{Step 2.}
  \end{subfigure}
  \hfill
  \begin{subfigure}[b]{0.3\textwidth}
    \centering
    \includegraphics[width=\textwidth]{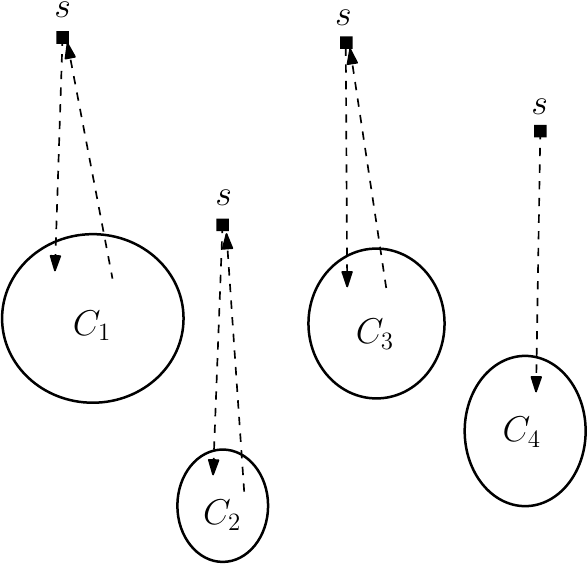}
    \caption{Step 3.}
  \end{subfigure}
  \caption{High-level overview of the divide-and-conquer algorithm. In Step 1, we compute 
  the planar separator (shown in red). Step 2 demonstrates contracting the separator into 
  the root, and Step 3 demonstrates the resulting independent sub-instances.}
  \label{fig:dst_algorithm_steps}
\end{figure}

The high-level idea in \cite{FriggstadM23} is simple; see \Cref{fig:dst_algorithm_steps}. 
Suppose we can guess the optimum 
solution value $\opt$ for a given DST instance. Then, we can remove all vertices 
farther than $\opt$ from $s$ and use \Cref{lem:dir_sep} to find three paths, each 
of cost at most $\opt$, whose removal results in connected components each 
containing at most half the terminals. We shrink the paths into $s$ and recurse 
on the resulting sub-instances. The recursion depth is 
$O(\log \nt)$, which bounds the total cost to $O(\log \nt) \cdot \opt$.
Implementing the guess of $\opt$ in each recursive call is challenging; 
naively guessing requires quasipolynomial time.
\cite{FriggstadM23} obtain a polynomial time algorithm by a refined argument where the 
guess for $\opt$ is folded into the recursion itself. 

\mypara{Handling Length-Constraints and Bounding Integrality Gap.} 
The main challenge in using the approach of \cite{FriggstadM23} 
in \emph{length-constrained} Directed Steiner Tree lies in defining the shortest path tree
on which to apply the planar separator result. Ideally, one would like 
for the tree to capture length-constrained distances, where the $h$-length-constrained
distance from $u$ to $v$ is the min-cost over all
$u$-$v$ paths of length at most $\lc$. Unfortunately, length-constrained 
distances do not form a metric, so one cannot use a standard shortest path 
tree. To circumvent this issue, we use the well-known ``mixture-metric'',
which balances length constraints and costs. This suffices to obtain
\Cref{thm:lcdst_integral}.
We remark that recent work of \cite{hershkowitz2025planar} on length-constrained 
MST in \emph{undirected planar graphs} also uses 
the mixture-metric to apply the separator lemma.
To prove \Cref{thm:lcdst_frac}, we use an idea introduced in 
\cite{polymatroid_esa}, which uses the LP optimum value as the estimate for 
$\opt$ instead of guessing. The algorithm of \cite{polymatroid_esa} extends 
to the length-constrained setting with the mixture-metric in a similar fashion.

\mypara{Directed Steiner Forest via Junction Trees.}
To tackle the multicommodity setting, we build on the \emph{junction tree}
based algorithm of \cite{dsf_soda} for planar DSF. At a high level, 
junction trees allow one to reduce a multicommodity problem to its single-source 
counterpart. This technique has been used in a variety of network 
design problems; it was first explicitly described in the context of non-uniform 
buy-at-bulk network design \cite{hks09,chks09}, though 
it was implicitly used earlier for Directed Steiner Forest in
\cite{Charikaretal99}. The idea is to repeatedly find 
well-structured
\emph{low-density partial solutions}, where a partial solution is feasible 
for a subset of terminal pairs, and the density is the ratio of cost to 
the number of terminal pairs connected. Using a standard
iterative approach for covering problems, one can reduce a multicommodity 
problem to its corresponding min-density partial solution problem, 
losing an $O(\log \nt)$ factor in the approximation ratio. 
We aim to find partial solutions that contain a ``junction'' vertex $v \in V$
through which many pairs connect. This allows us to view the problem 
as a single-source problem with source $v$. 
The approximation for DSF in planar digraphs of \cite{dsf_soda} 
consists of two steps: (1) proving the existence of a low-density 
junction tree, and (2) efficiently finding an approximate min-density 
junction tree. The second step uses a bucketing-and-scaling technique 
introduced by \cite{chks09}, which relies on the existence of an 
LP-competitive single-source algorithm. With \Cref{thm:lcdst_frac}, this extends with little difficulty to the 
length-constrained setting. Our main technical contribution is 
the first step: showing existence of a low-density junction tree,
as length constraints pose several new challenges, and the proof of 
\cite{dsf_soda} does not cleanly extend.
We defer further discussion to \Cref{sec:dsf}.

\subsection{Related Work}
\label{sec:rel_work}

\mypara{Directed Steiner Tree:} Zelikovsky \cite{Zelikovsky97} was the first to address 
the approximability of DST. He obtained an $O(\nt^{\eps})$-approximation for any fixed 
$\eps > 0$ via two ideas. He developed a recursive greedy algorithm and analyzed its 
performance as a function of recursion depth. He then showed that one can 
reduce the problem on a general directed graph to a problem on a depth $d$ DAG 
(via the transitive closure of the original graph) at the loss of an approximation 
factor that depends on $d$. Charikar et al. \cite{Charikaretal99} refined the algorithm 
and analysis in \cite{Zelikovsky97} and combined it with the depth reduction. They 
showed that one can obtain an $O(d^2 \nt^{1/d}\log \nt)$ approximation in $O(n^{O(d)})$-time;
this led to an $O(\log^3 \nt)$-approximation in quasi-polynomial time. 
Subsequently, Grandoni et al. \cite{GrandoniLL22}
improved the approximation to $O(\log^2 \nt/\log \log \nt)$ in quasi-polynomial time 
via a more sophisticated LP-based approach. A different approach that also yields 
the same bound was given by Ghuge and Nagarajan \cite{GhugeN22},
and this is based on a refinement of the recursive greedy algorithm for 
walks in graphs \cite{ChekuriP05}. The approach of \cite{GhugeN22} extends 
to several related problems, including directed single-source buy-at-bulk.

Zosin and Khuller \cite{ZosinK02} showed that the natural cut-based 
LP relaxation for DST has an integrality gap of
$\Omega(\sqrt{\nt})$. However, their example only showed a gap of 
$\Omega(\log n)$ as a function of the number of nodes $n$. There was some hope that 
the integrality gap is polylogarithmic in $n$; however,
\cite{LiL22} recently showed that the gap is $\Omega(n^{\delta})$ for some 
$\delta > 0$ by modifying
the construction in \cite{ZosinK02}. Interestingly, these lower bound examples are 
DAGs with $O(1)$-layers,
for which the recursive-greedy algorithm yields an $O(\log \nt)$-approximation in 
polynomial-time.
Rothvoss \cite{Rothvoss11} showed that $O(\ell)$-levels of the Lasserre SDP 
hierarchy when applied to the standard cut-based LP reduces the integrality 
gap to $O(\ell \log \nt)$ on DAGs with $\ell$ layers.
This was later refined to show that $O(\ell)$-levels of
the Sherali-Adams hierarchy suffices \cite{FriggstadKKLST14}. 
However, both these approaches also require quasi-polynomial time to 
obtain a polylogarithmic approximation. Recent work by 
Laekhanukit \cite{laekhanukit2024integrality}
shows a flow-based LP relaxation that has polylogarithmic integrality 
gap under some strong assumptions on the structure of an 
optimal fractional solution.

\mypara{Directed Steiner Forest.} 
The first nontrivial approximation
for Directed Steiner Forest was an $\tilde O(\nt^{2/3})$-approximation
given by Charikar et al. \cite{Charikaretal99}.
This follows a similar iterative
density-based procedure as the \emph{junction tree} approach; however, they
restrict to trees of a much simpler structure.
This approximation ratio was subsequently improved to $O(\nt^{\frac 1 2 +\eps})$
by Chekuri et al. \cite{ChekuriEGS11}; they showed that given an
instance $(G, D)$ of DSF, there exists a junction tree of density at
most $O(\nt^{1/2})$ times the optimum. They then provided an algorithm
to find a low-density junction tree via height reduction and
Group Steiner Tree rounding. 
DSF has improved approximation ratios when $\nt$ is large. 
Feldman, Kortsarz, and Nutov \cite{feldman_improved_2012} obtained an 
$O(n^{\eps} \cdot \min(n^{4/5},m^{2/3}))$-approximation using a junction-based 
approach. This analysis was refined by Berman et al. \cite{BermanBKRY13} 
using ideas developed for 
finding directed spanners, giving an improved approximation ratio of 
$O(n^{2/3 + \eps})$. For DSF with \emph{uniform} edge costs, 
Chlamt{\'a}{\v{c}} et al. obtained an $O(n^{3/5 + \eps})$-approximation \cite{CDKL20}, 
which was subsequently improved to $O(n^{4/7 + \eps})$ by Abboud and Bodwin 
\cite{abboud2024reachability}.

\mypara{Length-Constrained Network Design.}
Length-constrained network design has been primarily studied in 
undirected graphs. It was first introduced by 
Balakrishnan and Altinkemer \cite{ba92}, and has since been studied 
in both approximation and fast exact algorithms. 
The length-constrained $s$-$t$ shortest path problem is solvable exactly in
polynomial time when edge lengths are polynomially bounded, via the textbook 
Bellman-Ford dynamic programming algorithm. 
However, it is $\NP$-hard for arbitrary lengths \cite{johnson1979computers}.\footnote{The 
length-constrained shortest path problem is often referred to as \emph{Restricted Shortest Path}.} Moreover,
length-constrained MST is $\NP$-hard even in the special case of
hop constraints.\footnote{Hop 
constraints refer to the unit-edge-length setting of length-constrained network design.}
From an approximation algorithms standpoint, until recently, the
majority of the work on length-constrained network design had been
limited to single-source settings, such as $s$-$t$ shortest path
\cite{joksch1966shortest,warburton1987approximation,hassin1992approximation,goel2001efficient,lorenz2001simple,bernstein2012near}, MST 
\cite{dahl98,akgun_tansel11,pirkul_soni03,althaus05,kls05,ravi94,marathe98,HZ25_sosa_mst}, 
Steiner Tree
\cite{kp97,kp06}, and $k$-Steiner Tree \cite{hks09,ks11}. 
While $s$-$t$ shortest path admits an FPTAS \cite{hassin1992approximation}, other 
length-constrained problems have strong hardness results under 
\emph{strict} length constraints. Length-constrained 
MST is as hard as Directed Steiner Tree, 
and the length-constrained variant of Steiner forest has no
$o(2^{\log^{1-\eps}(n)})$-approximation \cite{dkr16}. These problems are 
more tractable when length constraints are relaxed. 
A simple merging algorithm \cite{marathe98}
yields an $(O(\log n), O(\log n))$-bicriteria approximation
for length-constrained MST and Steiner Tree. Recent progress 
by Hershkowitz and Huang allows one to reduce the length slack to 
$O(\log n/\log \log n)$ while increasing the approximation ratio by 
a polylogarithmic factor \cite{HZ25_sosa_mst}. The multicommodity 
setting has been more challenging.
For length-constrained Steiner Forest, polylogarithmic bicriteria approximations 
were implicitly known via buy-at-bulk \cite{chks09,kortsarz_nutov11}. 
There has also been a line of work on hop-constrained \emph{tree embeddings}, 
which have led to bicriteria approximations for hop-constrained 
Steiner Forest and various related problems \cite{hop_distance21,filtser22}.
Chekuri and Jain \cite{hop_esa}, building on hop-constrained oblivious 
routing tools of Ghaffari, Haeupler, and Zuzic \cite{hop_congestion21}, obtained 
LP-competitive algorithms for multicommodity problems in buy-at-bulk and 
length-constrained settings. In directed graphs, a recent work 
obtains polynomial factor approximations for Directed Buy-at-Bulk and 
related spanner problems \cite{grigorescu2025directed}; 
these approximation ratios essentially match the best known factors for 
DSF in general digraphs.

\mypara{Planar and Minor-Free Graphs.}
Improved approximation ratios have been obtained for several problems
in special classes of graphs, such as planar and minor-free graphs.
We first discuss undirected graphs.
In planar graphs, Steiner Tree admits a PTAS \cite{BorradaileKM09};
this was later extended
to a PTAS for Steiner Forest in graphs of bounded genus
\cite{BateniHM11}.
Recently, \cite{cohen-addad_bypassing_2022} obtained a QPTAS for Steiner Tree in
minor-free graphs. Furthermore, although the node-weighted variant
of Steiner Tree captures Set Cover in general graphs,
there exists a constant factor approximation in
planar graphs, and more generally, in any proper minor-closed
graph family \cite{DemaineHK14}. In directed graphs, several 
recent results have built on tools given by Thorup \cite{Thorup04}
to obtain approximation algorithms in planar digraphs. 
Kawarabayashi and Sidiropoulos \cite{KawarabayashiS21} gave a polylogarithmic 
upper bound on the multicommodity flow-cut gap in planar digraphs;
this was derandomized and extended to the node-weighted setting by 
\cite{chekuri2026node}. 
As discussed earlier, Friggstad and Mousavi \cite{FriggstadM23} 
obtained an $O(\log k)$-approximation for DST in planar digraphs. 
This was extended by Chekuri et al. \cite{polymatroid_esa} to an 
LP-competitive $O(\log^2 k)$-approximation. Chekuri et al.
\cite{polymatroid_esa} also obtained polylogarithmic approximations for 
several rooted variants of DST in planar digraphs, such as group, covering, 
polymatroid, and multirooted versions. Chekuri and Jain \cite{dsf_soda} 
then obtained an $O(\log^6 k)$-approximation for DSF in planar digraphs.
In minor-free input graphs that are \emph{quasi-bipartite}, 
Friggstad and Mousavi \cite{FriggstadM21}
obtained a constant-factor approximation for DST.\footnote{The 
input is quasi-bipartite if 
there are no edges between any two non-terminal nodes.}
\section{Preliminaries}
\label{sec:prelim}

\subsection{Notation}

Let $G = (V, E)$ be a directed graph with edge costs $c:E\rightarrow\R_+$. 
For $E'\subseteq E$, we denote $c(E') = \sum_{e \in E'}c(e)$. 
We assume all edge costs $c(e) \geq 1$ and are polynomially bounded in $n$. 
This can be done with a
$(1 + o(1))$ factor loss in approximation, 
by guessing the cost of the optimal solution $\opt$, contracting edges with
cost much smaller than $\opt$, and scaling appropriately. 

For $S \subseteq V$, we use $E[S]$ to denote the set of edges of $E$ with both 
endpoints in $S$, $G[S]$ to denote the subgraph $(S, E[S])$ \emph{induced by $S$} 
in $G$, and $\delta^+(S) = \{(u, v) \in E : u \in S, v \not \in S\}$ to denote 
the \emph{out-cut} of $S$. 
The \emph{height} of an out-tree $T$ rooted at $s$ is the maximum 
over all $v \in V(T)$ of the number of edges on the unique $s$-$v$ path. 
The \emph{size} of the tree is the number of vertices in the tree.
To \emph{contract} an edge $(u, v)$ is to merge its endpoints
into a single vertex $w_{uv}$, replacing all edges incident to 
$u$ or $v$ with edges incident to $w_{uv}$ and removing resulting 
self-loops.
For a subgraph $G' \subseteq G$, we use $G/G'$ to denote the graph obtained by 
contracting every edge of $G'$ and $G - G'$ to denote the graph obtained by deleting 
every edge in $G'$. A graph $H$ obtained from $G$ via edge contractions 
and edge or vertex deletions is called a \emph{minor} of $G$. 
It is
easy to see that if $G$ is planar, then any minor of $G$ is planar as well.
A weakly connected component of $G$ is a maximal connected component 
of the underlying 
undirected graph obtained from $G$ by ignoring the edge orientations. 
We call this undirected graph the \emph{undirected version} of $G$. 

The length $\ell(p)$ of a path $p$ 
is the sum of the lengths of its edges. More generally, we write 
$\ell(F) := \sum_{e \in F} \ell(e)$ for any $F \subseteq E$.
For any subset of edges $F \subseteq E$ and any pair of vertices 
$u,v \in V$, we let $\ell_F(u,v)$ denote the length of the 
shortest-length $u$-$v$ path in $F$. 
We generalize this notation to 
sets: for $A, B \subseteq V$, $\ell_F(A, B)$ is the length of the shortest-length 
path in $F$ from $A$ to $B$. We drop $F$ when it is clear from context.
For a tree $T$ and vertices $u,v \in V(T)$, we write $P_T(u,v)$ 
to denote the unique $u$-$v$ path in $T$. 

\subsection{Mixture Metric}

We write $d^{(\lc)}(s,t)$ to denote the cost of the shortest $\lc$-length constrained 
$s$-$t$ path. Computing $d^{(\lc)}(s,t)$ is $\NP$-hard \cite{johnson1979computers}; however, it can be 
approximated to a $(1+\eps)$-factor \cite{hassin1992approximation}. Unlike traditional graph distances,
$d^{(\lc)}$ does not form a metric. We instead consider 
the \emph{mixture metric}; this is a standard notion to 
facilitate working with length constraints, 
and was highlighted in recent work in length-constrained network design by 
\cite{hop_distance21}.

Given a length constraint $\lc$ and a cost bound $\gamma$, we define the mixture 
metric as follows: for each $e \in E$, the \emph{mixture cost} 
function is
$c'(e) = c(e) + \frac{\gamma}{\lc} \ell(e)$. For all $s,t \in V$, let 
$d'(s,t)$ be the cost of the shortest $s$-$t$ path with respect to $c'$. 
It is not difficult to see that $d'$ forms a metric. 
\Cref{claim:mixture_metric} demonstrates how $d'$ approximates $d^{(\lc)}$.

\begin{restatable}{claim}{mixturemetric}
\label{claim:mixture_metric}
  Let $d'$ be the mixture metric with length constraint $\lc$ and cost bound $\gamma$. 
  \begin{enumerate}
    \item $d'(s,t) \leq d^{(\lc)}(s,t) + \gamma$;
    \item $d^{(\lc')}(s,t) \leq d'(s,t)$, where $\lc' = \frac{d'(s,t)}{\gamma} \cdot \lc$.
  \end{enumerate}
\end{restatable}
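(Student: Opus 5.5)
Both parts come from comparing the cost of one specific path under $c$, $\ell$ and $c'$, using only the definition $c'(e) = c(e) + \frac{\gamma}{\lc}\ell(e)$ and linearity over the edges of a path. For any path $p$ we have $c'(p) = c(p) + \frac{\gamma}{\lc}\ell(p)$. No earlier result is needed. We assume $\gamma > 0$, since otherwise $\lc'$ in part 2 is undefined.

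For part 1, fix an $s$-$t$ path $p$ that attains $d^{(\lc)}(s,t)$. Then $c(p) = d^{(\lc)}(s,t)$ and $\ell(p) \leq \lc$. If no such path exists, $d^{(\lc)}(s,t) = \infty$ and the inequality is trivial. Since $d'(s,t)$ is a minimum over all $s$-$t$ paths,
\[
d'(s,t) \leq c'(p) = c(p) + \frac{\gamma}{\lc}\ell(p) \leq d^{(\lc)}(s,t) + \frac{\gamma}{\lc}\cdot \lc = d^{(\lc)}(s,t) + \gamma .
\]

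For part 2, let $q$ be an $s$-$t$ path attaining $d'(s,t)$, so $c'(q) = d'(s,t)$. All costs and lengths are nonnegative, so $c'(q)$ bounds each of its two terms separately:
\[
c(q) \leq d'(s,t), \qquad \frac{\gamma}{\lc}\ell(q) \leq d'(s,t).
\]
The second inequality gives $\ell(q) \leq \frac{d'(s,t)}{\gamma}\cdot \lc = \lc'$. Hence $q$ is a feasible $\lc'$-length-constrained $s$-$t$ path, and so $d^{(\lc')}(s,t) \leq c(q) \leq d'(s,t)$.

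Nothing here is a real obstacle; the only point needing care is these degenerate cases.
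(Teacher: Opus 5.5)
Your proof is correct and matches the paper's argument. In part 1 you evaluate $c'$ on the optimal $\lc$-length-bounded path, and in part 2 you split $c'(q)$ into its two nonnegative terms to bound both $c(q)$ and $\ell(q)$; your handling of the degenerate cases ($\gamma > 0$, no feasible path) is a harmless addition the paper omits.
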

\begin{proof}
  Fix $s,t \in V$. To prove (1), let $P$ be the shortest $\lc$-length bounded 
  $s$-$t$ path in $G$. Then,
  $d'(s,t) \leq \sum_{e \in P} c'(e) = c(P) + \frac{\gamma}{\lc} \ell(P) 
  \leq d^{(\lc)}(s,t) + \gamma$, since $\ell(P) \leq \lc$.
  To prove (2), let $P'$ be the shortest $s$-$t$ path with respect to $c'$,
  so $d'(s,t) = c'(P')$. Then $\ell(P') \leq d'(s,t) \cdot 
  \frac{\lc}{\gamma} = \lc'$. Thus, $d^{(\lc')}(s,t) \leq c(P') \leq d'(s,t)$. 
\end{proof}

\subsection{Planar Separator}

The key property of planarity used in our algorithms is the fact that planar 
graphs have good shortest path separators, as shown in 
\Cref{lem:dir_sep}. We state a more general lemma 
on \emph{undirected graphs}. 
This lemma, 
essentially proved by Lipton and Tarjan \cite{LiptonTarjan79}, but given explicitly 
by Thorup \cite{Thorup04}, states that every spanning tree on any planar
graph contains three paths whose removal results in connected
components each containing at most half the number of vertices. 
The formal statement is as follows.

\begin{restatable}[\cite{LiptonTarjan79,Thorup04}]{lemma}{undirplanarsep}
  \label{lem:undir_sep}
  Let $G$ be an undirected connected graph with non-negative edge costs $c(e)$,
  non-negative vertex weights $w(u)$, and a spanning tree $T$ rooted at $v \in V$. 
  There exists a 
  polynomial time algorithm to find three vertices $u_1, u_2, u_3 \in V$ 
  such that each component of $G \setminus (P_T(v, u_1) \cup P_T(v, u_2) \cup 
  P_T(v, u_3))$ has at most half the vertex weight of $G$.
\end{restatable}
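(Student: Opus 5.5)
We would follow the classical Lipton--Tarjan fundamental-cycle argument, written out for the three-path form stated by Thorup. Begin with standard reductions. Edge costs play no role in the statement, so ignore them. We may assume $G$ is simple and $2$-connected enough to be triangulated: fix a planar embedding and add edges until every face is a triangle. Adding edges only merges components of $G\setminus X$ for any vertex set $X$, so a separator for the triangulated graph is also one for $G$. The tree $T$ is still a spanning tree of the triangulation. If some single vertex has weight more than half of $w(V)$, pick it together with its root path; components avoiding it trivially have weight at most half. So assume every vertex has weight at most $w(V)/2$.

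Next use the dual structure. In a triangulation, the non-tree edges $E\setminus E(T)$ form a spanning tree $T^*$ of the dual graph, because the dual edges of a spanning tree's complement form a dual spanning tree. Each non-tree edge $e=(a,b)$ gives a fundamental cycle made of $P_T(v,a)\cup P_T(v,b)\cup\{e\}$. Every face of the triangulation, viewed as a node of $T^*$ with degree at most $3$, has three corners $x,y,z$. Removing the root paths $P_T(v,x)\cup P_T(v,y)\cup P_T(v,z)$ deletes the boundary of that face. It also splits the rest of the plane into at most three regions, one for each subtree of $T^*$ hanging off that face.

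The key step is to choose the face by a centroid argument on $T^*$. Assign each vertex not on the separator to the region of $T^*$ that contains it, meaning the faces bounded by fundamental cycles of the corresponding subtree. Orient each edge of $T^*$ toward the side with larger weight of strictly enclosed vertices. Walk in this direction until reaching a face $f$ where no incident subtree region has weight more than $w(V)/2$; such a face exists because $T^*$ is a finite tree. For this face, take $u_1,u_2,u_3$ to be its three corners. Each component of $G\setminus(P_T(v,u_1)\cup P_T(v,u_2)\cup P_T(v,u_3))$ lies entirely inside one region, since by the Jordan curve theorem the boundaries of these regions are contained in the three removed paths plus edges of $f$. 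Hence each component has weight at most $w(V)/2$.

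The main obstacle is the topological bookkeeping. We must show that the region determined by a subtree of $T^*$ at $f$ is bounded exactly by a fundamental cycle whose vertices lie on the chosen root paths. We must also show that the well-defined "inside weight" decreases monotonically along the walk. We would handle this following Lipton--Tarjan: the cycle through the non-tree edge dual to a $T^*$-edge separates the two subtrees of $T^*$. Polynomial running time is immediate, since triangulation, computing $T^*$, computing subtree weights, and finding the centroid are all linear-time operations.
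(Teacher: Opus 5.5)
Your proposal is correct and matches the cited argument: the paper gives no proof of its own and simply invokes Lipton--Tarjan and Thorup, whose proof is the triangulate, dual-cotree, sink-face argument you describe (planarity, which you correctly assume, is needed even though it was dropped from the statement as typeset). Your reduction for a vertex of weight more than $w(V)/2$ and the monotonicity concern along the walk are unnecessary: at a sink of the oriented dual tree, each away-side has interior weight at most $w(V)/2$ because the two open sides of any fundamental cycle are vertex-disjoint and the away-side is the lighter one.
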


Note that one can recover the directed variant 
(\Cref{lem:dir_sep}) by considering 
the spanning tree $T$ to be a shortest path tree rooted at $v$.
\section{Length-Constrained Directed Steiner Tree}
\label{sec:dst}

In this section, we consider length-constrained Directed Steiner Tree. 
We provide an $O(\log \nt)$-approximation in \Cref{sec:dst_integral}, and an 
LP-based $O(\log^2 \nt)$-approximation in \Cref{sec:dst_frac}. 

First, we define a subroutine $\algsep$ (\Cref{algo:prune_sep})
which will be used in both the integral and LP-based algorithms. This subroutine 
takes as input a graph $G = (V, E)$ with edge costs $c(e)$, a
root $s \in V$, terminals $S \subseteq V$, and a guess $\gamma$ for the cost of the 
optimal solution.
$\algsep(G, s, S, \gamma)$ removes all vertices further than $\gamma$ away 
from $s$, and uses \Cref{lem:dir_sep} on the resulting graph with vertex weights 
set to 1 on the terminals and 0 elsewhere. This yields a \emph{planar separator} 
$P := P_1 \cup P_2 \cup P_3$ in which each resulting component of $G \setminus P$ has at 
most half the terminals.  
The subroutine contracts $P$ into $s$; each component of $G \setminus P$ 
corresponds to a new subinstance induced by the terminals in that component 
along with $s$. $\algsep(G, s, S, \gamma)$ returns $P$ 
along with the subinstances for each component. 

\begin{algorithm}[H]
\caption{$\algsep(G, s, S, \gamma)$}
\label{algo:prune_sep}
\begin{algorithmic}
    \State Delete all vertices $v \in V$ with $d_G(s, v) > \gamma$
    \State Let $P := P_1 \cup P_2 \cup P_3$ be given by
    \Cref{lem:dir_sep} with weights $w(v) = \mathbbm{1}_{v \in S}$ for $v \in V(G)$
    \State Let $G_P$ be obtained from $G$ by contracting $P$ into $s$
    \State Let $C_1, \dots, C_{\tau}$ be components of $G \setminus P$, and let
    $G_i \gets G_P[C_i \cup \{s\}]$
    \State \Return $(P, (G_1, C_1), \dots, (G_{\tau}, C_{\tau}))$
\end{algorithmic}
\end{algorithm}

\subsection{Approximation with respect to Integral Optimum}
\label{sec:dst_integral}

We follow the divide-and-conquer algorithm given by
Friggstad and Mousavi \cite{FriggstadM23}, adapted to handle length constraints. 
Recall the high level approach:
the algorithm branches on two cases and returns the cheaper
solution. In the first branch, it recurses with a halved guess for 
$\opt$; this handles the case where $\opt \leq \gamma/2$. 
In the second branch, it
computes a planar separator $P$ under the \emph{mixture cost} $c'$, 
and recursively solves each resulting component
independently. Using the mixture cost for the separator is the main 
adaptation for length constraints, which 
ensures that the separator paths have 
bounded cost \emph{and} bounded length.
For clarity, we write the algorithm as though $d^{(\lc)}(s,t)$ can be
computed exactly; replacing this with a $(1+\eps)$-approximation
does not affect the approximation guarantees.
The algorithm is given by \Cref{alg:hDST}.

\begin{algorithm}[H]
\caption{$\hDST(I, \gamma)$: an instance $I = (G, s, S, \lc)$ and
a ``guess'' $\gamma$ for $\opt$}
\label{alg:hDST}
\begin{algorithmic}
    \If{$\exists t \in S$ such that $d^{(\lc)}(s, t) > \gamma$}
        \State \Return Infeasible ($\infty$)
    \EndIf
    \If{$|S| = 1$}
        \State \Return min-cost $\lc$-length-constrained $s$-$t$ path in $G$,
        where $t$ is the terminal in $S$
    \EndIf
    \Statex \textbf{\textsc{// Recursive Step 1:}}
    \State $F^1 \gets \hDST(I, \gamma/2)$
    \Statex \textbf{\textsc{// Recursive Step 2:}}
    \State Define mixture cost $c'(e) = c(e) + \frac{\gamma}{\lc} \ell(e)$
    \State $(P, (G_1, C_1), \dots, (G_\tau, C_\tau)) \gets \algsep(G = (V, E, c'), s, S, 2\gamma)$
    \State $F^2 \gets P$
    \For{$i \in [\tau]$}
        \State $F_i \gets \hDST(I_i := (G_i, s, S \cap C_i, \lc), \gamma)$
        \State $F^2 \gets F^2 \cup F_i$
    \EndFor
    \State \Return $\argmin_{F \in \{F^1, F^2\}} c(F)$
\end{algorithmic}
\end{algorithm}

First, we argue that the algorithm runs in polynomial time. This follows 
directly from \cite{FriggstadM23}; the proof is given for completeness.
\begin{lemma}
\label{lem:dst_int_runtime}
  \Cref{alg:hDST} runs in polynomial time.
\end{lemma}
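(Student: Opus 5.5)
We bound the size of the recursion tree by a recurrence in two parameters: the number of terminals $k = |S|$ and the ratio $\gamma / c_{\min}$, where $c_{\min} \geq 1$ is the minimum edge cost. Each individual call does only polynomial work outside its recursive calls. It computes (approximate) values $d^{(\lc)}(s,t)$ via the FPTAS of \cite{hassin1992approximation}, runs $\algsep$ (shortest paths under $c'$ followed by \Cref{lem:dir_sep}), and contracts edges. So it suffices to show that the total number of recursive calls is polynomial in $n$.

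\textbf{Base cases.} Since all costs are at least $1$, any instance with $|S| \geq 1$ and $\gamma < 1$ returns Infeasible immediately: every $s$-$t$ path with $t \neq s$ has cost at least $1 > \gamma$. Because costs are polynomially bounded, the initial guess satisfies $\gamma_0 \leq \mathrm{poly}(n)$. Along Recursive Step 1 the guess is halved each time, and along Recursive Step 2 it stays the same. Hence every call satisfies $\gamma = \gamma_0/2^j$ for some $j$, and we may take $j \leq L := \lceil \log_2 \gamma_0 \rceil + 1 = O(\log n)$ before the base case fires.

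\textbf{Recurrence.} Let $N(k, j)$ denote the maximum number of calls generated by a call with at most $k$ terminals and guess $\gamma_0/2^j$. Step 1 contributes $N(k, j+1)$. In Step 2, \Cref{lem:dir_sep} is applied with unit weights on the terminals, so each component $C_i$ contains at most $k/2$ terminals. The components are disjoint, so $\sum_i k_i \leq k$. This gives
\[
N(k,j) \leq 1 + N(k,j+1) + \sum_i N(k_i, j), \qquad k_i \leq k/2, \quad \sum_i k_i \leq k,
\]
with $N(k,j) = 1$ whenever $j > L$ or $k \leq 1$. Components with no terminals can be discarded, so each recursive call in Step 2 has $k_i \geq 1$.

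We guess a bound of the form $N(k,j) \leq k \cdot 2^{c\,(L-j+1)\log_2(2k)}$, or more simply $N(k,j) \leq (2k)^{c\,(L - j + 1)}$, and verify it by induction on $(\log k, L-j)$. Summing over the components uses convexity of the map $x \mapsto x^{p}$ with $p \geq 1$ together with $k_i \leq k/2$:
\[
\sum_i (2k_i)^{p} \leq \frac{k}{k/2} \cdot k^{p} = 2k^{p} = 2^{1-p}(2k)^{p}.
\]
This is at most a small constant fraction of $(2k)^p$. That slack leaves room for the Step 1 term, which carries one fewer factor of the exponent. The resulting bound is $n^{O(\log n)}$, which is only quasi-polynomial.

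\textbf{Main obstacle.} The difficulty is exactly this: the naive recurrence gives quasi-polynomial time. We must exploit the argument of \cite{FriggstadM23} that the two parameters trade off. I would instead bound the number of calls by counting root-to-leaf paths in the recursion tree. Each path is encoded by a sequence of "halve $\gamma$" moves and "halve $k$" moves. There are at most $L = O(\log n)$ moves of the first kind and at most $\log_2 k$ moves of the second kind. The number of such interleavings is $\binom{L + \log k}{\log k}$, which is still superpolynomial in general.

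So I would try a refined potential. Let $\Phi = k \cdot 2^{(L-j)}$, roughly $k \cdot \gamma$ in normalized units. Step 1 halves $\Phi$ while keeping one child. Step 2 splits $\Phi$ across children with $\sum_i \Phi_i \leq \Phi$ and each $\Phi_i \leq \Phi/2$. A subadditive bound $N \leq \Phi^{c}$ then closes for $c \geq 1$ by the same convexity computation as above, and $\Phi \leq n \cdot \mathrm{poly}(n)$. This yields a polynomial number of calls.

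Checking that the additive $+1$ terms and the Step 1 child fit under $\Phi^c$ is the step I expect to require care. It may be necessary to argue that Step 1's child has $\Phi/2$, so that $(\Phi/2)^c + 2(\Phi/2)^c \cdot \tfrac{1}{2}^{c-1} + 1 \leq \Phi^c$ for $c = 2$.
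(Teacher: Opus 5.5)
Your final potential argument is correct and is essentially the paper's proof. The paper shows by induction that the number of calls is at most $\gamma|S|^3$: Step~1 contributes a factor $\tfrac12$, and Step~2 contributes a factor $\tfrac14$ via $k_i\le|S|/2$ and $\sum_i k_i\le|S|$. Your potential $\Phi^2\approx(k\gamma)^2$ swaps these to $\tfrac14$ and $\tfrac12$ and closes as $1+\tfrac34\Phi^2\le\Phi^2$ once $\Phi\ge 2$.

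Two small slips are worth fixing. First, the Step~2 term in your last display should be $2(\Phi/2)^2=\Phi^2/2$; the extra factor $(\tfrac12)^{c-1}$ is spurious, and $c=1$ does not close, since one needs $3\cdot2^{-c}<1$. Second, the discarded interleaving count $\binom{L+\log k}{\log k}\le 2^{L+\log k}$ is in fact polynomial, so you can simply drop the quasi-polynomial detours.
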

\begin{proof}
  We prove by induction on $\gamma + |S|$ that the number of recursive 
  calls (including the current step) is at most $f(\gamma, |S|) := 
  \gamma \cdot |S|^3$. The base case when $|S| = 1$ is clear. 
  For the inductive step, let $C_1, \dots, C_\tau$ be the components 
  constructed in Recursive Step 2, and let $\nt_i = |C_i \cap S|$ 
  for each $i \in [\tau]$.  
  By induction, the number of recursive calls is 
  \begin{align*}
    1+ f(\gamma/2, |S|) + \sum_{i \in [\tau]} f(\gamma, \nt_i) 
    \leq 1 + \frac 1 2 \gamma |S|^3 + \sum_{i \in [\tau]} \gamma \nt_i^3.
  \end{align*}
  Since the number of terminals is at least halved at 
  each step, $\nt_i \leq |S|/2$. Thus, $\sum_{i \in [\tau]} \gamma \cdot \nt_i^3
  \leq \gamma (|S|/2)^2 \sum_{i \in [\tau]} \nt_i$. Since the components 
  $C_1, \dots, C_\tau$ are disjoint, $\sum_{i \in [\tau]} \nt_i \leq |S|$. 
  Therefore, the number of recursive calls is at most 
  $1 + \frac 1 2 \gamma |S|^3 + \frac 1 4 \gamma |S|^3 \leq 
  \gamma|S|^3$, since $|S| \geq 2$. 
\end{proof}

Next, we prove feasibility and bound the cost of the given solution. 
We start with a simple property about the planar separator. 

\begin{claim}
\label{claim:dst_int_separator}
  The planar separator $P$ computed by $\hDST(I, \gamma)$ for any 
  feasible instance satisfies the following: (i) $c(P) \leq 6\gamma$,
  (ii) $\ell(P) \leq 6\lc$. 
\end{claim}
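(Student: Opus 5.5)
The separator $P = P_1 \cup P_2 \cup P_3$ is produced by $\algsep$ applied to the graph with the mixture cost $c'(e) = c(e) + \frac{\gamma}{\lc}\ell(e)$ and threshold $2\gamma$. The plan is therefore to bound the $c'$-cost of each path $P_j$ by $2\gamma$, and then read off separate bounds on $c$ and on $\ell$ from the single bound on $c'$.

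First, I use the structure of $\algsep$. It deletes every vertex $v$ with $d'(s,v) > 2\gamma$, where $d'$ is the shortest-path distance under $c'$. On the remaining graph it invokes \Cref{lem:dir_sep}, which returns three \emph{shortest} dipaths $P_1, P_2, P_3$ starting at $s$, shortest with respect to $c'$. Each $P_j$ ends at some surviving vertex $v_j$, so $c'(P_j) = d'(s, v_j) \le 2\gamma$.

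One point needs care here. Distances in the pruned graph must still equal $d'$, so that each $P_j$ is shortest in the original metric. This holds because every vertex on a shortest $s$-$v$ path has $d'$-distance at most $d'(s,v) \le 2\gamma$, so no such vertex is deleted.

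There is also a hypothesis of \Cref{lem:dir_sep} to check: every vertex must be reachable from $s$. After pruning, each remaining vertex has finite $d'$-distance from $s$ and, by the observation above, a shortest path to it survives the pruning. So the hypothesis holds. Feasibility of the instance guarantees that terminals survive, since the check at the top of \Cref{alg:hDST} gives $d^{(\lc)}(s,t) \le \gamma$. Terminal survival is not actually needed for the bound itself.

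It remains to split the bound $c'(P_j) \le 2\gamma$ into its two parts. Since $c(e) \le c'(e)$ for every edge, $c(P_j) \le 2\gamma$. Since $\frac{\gamma}{\lc}\ell(e) \le c'(e)$ for every edge, $\frac{\gamma}{\lc}\ell(P_j) \le 2\gamma$, that is, $\ell(P_j) \le 2\lc$. Summing over the three paths, and noting that the union counts each edge at most once, gives $c(P) \le 6\gamma$ and $\ell(P) \le 6\lc$.

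I expect no real obstacle. The only subtle step is confirming that pruning preserves shortest paths, so that the paths returned by \Cref{lem:dir_sep} really are $c'$-shortest with length at most the $2\gamma$ threshold.
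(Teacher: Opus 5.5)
Your proposal is correct and follows the paper's proof: each $P_j$ is a $c'$-shortest path from $s$ in the pruned graph, so $c'(P_j)\le 2\gamma$. Splitting $c'=c+\frac{\gamma}{\lc}\ell$ into its two nonnegative parts and summing over the three paths then gives both bounds. The extra checks you include (pruning preserves shortest paths, and the reachability hypothesis of \Cref{lem:dir_sep} holds) are correct details that the paper leaves implicit.
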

\begin{proof}
  Let $P = P_1 \cup P_2 \cup P_3$ be the planar separator computed by 
  $\hDST(I, \gamma)$. Recall that each $P_i$, for $i \in [3]$,
  is a shortest path from 
  $s$ with respect to $c'$. Since all vertices $u \in V$ with 
  $d_{G, c'}(s, u) > 2\gamma$ are removed, each path $P_i$ must have 
  $c'(P_i) = c(P_i) + \frac \gamma \lc \ell(P_i) \leq 2\gamma$. 
  Since both terms are non-negative, $c(P_i) \leq 2\gamma$, so 
  $c(P) \leq 6\gamma$. Similarly, 
  $\frac \gamma \lc \ell(P_i) \leq 2\gamma$, so $\ell(P_i) \leq 2\lc$ 
  and thus $\ell(P) \leq 6\lc$. 
\end{proof}

\begin{lemma}\label{lem:dst_int_feasible}
  Let $F$ be the solution returned by $\hDST(I, \gamma)$ for a feasible 
  instance $(I, \gamma)$. Then, $F$ contains 
  an $s$-$t$ path of length at most $O(\log |S|) \cdot \lc$ for each 
  terminal $t \in S$.
\end{lemma}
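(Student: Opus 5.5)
We prove the statement by strong induction on the recursion, following the structure of \Cref{alg:hDST}. The claim to prove is: for a feasible call $\hDST(I,\gamma)$ with $|S|=k'$, the returned $F$ contains, for each $t\in S$, an $s$-$t$ path of length at most $(1+6\lceil\log_2 k'\rceil)\lc$. Here the length is measured in the original graph after uncontracting the separators, and the constant $6$ comes from \Cref{claim:dst_int_separator}.

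Base case $|S|=1$. The algorithm returns a min-cost $\lc$-length-constrained path, so the length is at most $\lc$. If a $(1+\eps)$-approximation is used, the length is at most $(1+\eps)\lc$ and only constants change.

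Inductive step. If the algorithm returns $F^1$, we apply the induction hypothesis to $\hDST(I,\gamma/2)$. This call has the same terminal set but a smaller guess, so we induct on the pair $(|S|,\gamma)$ as in \Cref{lem:dst_int_runtime}. If $F^1$ is infeasible, its cost is $\infty$ and $F^2$ is chosen instead, provided $F^2$ is finite.

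Now suppose $F^2$ is returned. Fix $t\in S$. If $t$ lies on the separator $P$, then $t$ lies on some $P_i$, which starts at $s$, so $P$ contains an $s$-$t$ path of length at most $2\lc$ by \Cref{claim:dst_int_separator}. Otherwise $t\in C_i$ for some component, and by the induction hypothesis $F_i$ contains an $s$-$t$ path $Q$ in $G_i$ with $\ell(Q)\le (1+6\lceil\log_2 (k'/2)\rceil)\lc$, since $|S\cap C_i|\le k'/2$ by \Cref{lem:dir_sep}. In $G_i$ the vertex $s$ is the contraction of $P$. Uncontracting, the first edge of $Q$ leaves some vertex $x\in P$ (possibly $x=s$), and every later edge of $Q$ lies inside $C_i\cup\{s\}$ and avoids the contracted node. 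Concatenating the $s$-$x$ subpath of the separator path containing $x$, of length at most $2\lc$, with the rest of $Q$ gives an $s$-$t$ walk in $F^2$. Its length is at most $2\lc+(1+6\lceil\log_2 k'\rceil-6)\lc$, which is within the bound.

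One subtlety remains. Edges of $G_i$ entering the contracted $s$ could appear in $Q$, but a shortest-length path never needs to revisit the root, so we take $Q$ to be a simple path that leaves $s$ once. Lengths are unchanged by contraction, since edges keep their lengths.

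Feasibility of the recursive calls. We must check that the subinstances handed to the induction are themselves feasible, so that the hypothesis applies and returns finite solutions. This is the step I expect to be the main obstacle. Pruning in $\algsep$ with bound $2\gamma$ under $c'$ keeps every terminal, because by \Cref{claim:mixture_metric}(1), $d'(s,t)\le d^{(\lc)}(s,t)+\gamma\le 2\gamma$. For each $t\in C_i$, the length-constrained path to $t$ in $G$ maps to a walk in $G_i$: take its last visit to $P\cup\{s\}$ and follow the suffix, which stays inside $C_i$. This gives $d^{(\lc)}_{G_i}(s,t)\le\gamma$, so the first check in the subcall passes.

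The bound therefore becomes $O(\log|S|)\cdot\lc$. We only need the length guarantee for calls that return finite cost. An infeasible guess simply returns $\infty$ and is never selected when a finite alternative exists. The top-level call is assumed feasible, and in Recursive Step 2 it always produces a finite $F^2$ by the argument above.
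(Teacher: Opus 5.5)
Your proof is correct and is essentially the paper's argument: induction over the recursion, where a separator prefix of length $O(\lc)$ is glued to the $x$-$t$ path that the recursive call on $C_i$ supplies, over $O(\log|S|)$ levels. You are slightly more careful than the paper in checking that each subinstance $(I_i,\gamma)$ passes the feasibility test; the one detail you leave unstated there is that every vertex $v$ of the witness path $q$, not just $t$, survives pruning, which holds because $d'(s,v)\le c'(q)\le 2\gamma$.
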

\begin{proof}
  We prove by induction on $\gamma + |S|$ that if the instance $(I, \gamma)$ 
  is feasible, then $F := \hDST(I, \gamma)$ contains a path from $s$ to 
  $t$ of length at most $6\lc (\log |S|+1)$, for each $t \in S$. 
  The base case is clear; if there is one terminal, then $F$ is an 
  $s$-$t$ path of length $\lc$. 

  We consider the two recursive cases separately. For the first recursive 
  case, if $(I, \gamma/2)$ is infeasible, then $c(F^1)$ is 
  infinite and $F \neq F^1$. Else, by induction, $F^1$ contains an 
  $s$-$t$ path of length at most $6\lc (\log |S|+1)$ for each $t \in S$. 

  Next, we consider the second recursive case. Fix $t \in S$.
  Observe that since 
  $(I, \gamma)$ is feasible, $d^{(\lc)}(s, t) \leq \gamma$, so there exists 
  an $s$-$t$ path $q$ of length at most $\lc$ and cost at most $\gamma$. 
  In particular, $c'(q) = c(q) + \frac \gamma \lc \ell(q) \leq 2\gamma$, 
  so $t$ is not deleted in the pruning step. 
  Therefore, $t$ is either in the planar separator $P$ or 
  in some component $C_i$ of $G \setminus P$. If $t \in P$, then 
  $P$ contains an $s$-$t$ path of length at most $\ell(P) \leq 6\lc$,
  by \Cref{claim:dst_int_separator}. If $t \in C_i$, then since the recursive 
  sub-instance contracts $P$ into $s$, $C_i$ contains a path from 
  $\{s\} \cup P$ to $t$ of length at most $6\lc (\log|S \cap C_i| + 1)$, 
  by induction. Therefore, $F^2$ contains an $s$-$t$ path of length 
  at most 
  \begin{align*}
    \ell(P) + 6\lc (\log|S \cap C_i| + 1) 
    \leq 6\lc + 6\lc(\log(|S|/2) + 1) = 6\lc + 6\lc \log |S| = 6\lc(\log |S| + 1).
  \end{align*}
  This concludes the inductive argument and the proof of the lemma. 
\end{proof}

\begin{lemma}\label{lem:dst_int_cost}
  The cost of the solution returned by $\hDST(I, \gamma)$ for a feasible 
  instance $(I, \gamma)$ is at most $O(\log |S|) \cdot \opt$. 
\end{lemma}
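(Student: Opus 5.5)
We prove, by induction on $\gamma + |S|$, the sharper statement: if $\opt(I) \le \gamma$ then $c(\hDST(I,\gamma)) \le \alpha \cdot \opt(I)$, where $\alpha = C(\log|S|+1)$ for a suitable constant $C$ (say $C=12$). Here $\opt(I)$ is the minimum cost of a subgraph of $G$ that connects $s$ to every $t\in S$ via paths of length at most $\lc$. The lemma follows from this stronger statement.

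\textbf{Base case and first branch.} If $|S|=1$, the algorithm returns the min-cost $\lc$-length-constrained path, whose cost is at most $\opt(I)$. Otherwise we compare $\opt(I)$ with $\gamma/2$.

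If $\opt(I) \le \gamma/2$, then the instance $(I,\gamma/2)$ is feasible: every terminal has $d^{(\lc)}(s,t) \le \opt(I) \le \gamma/2$. By induction, $c(F^1) \le \alpha\cdot\opt(I)$. Since the algorithm returns the cheaper of $F^1$ and $F^2$, we are done.

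\textbf{Second branch.} Now assume $\gamma/2 < \opt(I) \le \gamma$. We bound $c(F^2)$.

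By \Cref{claim:dst_int_separator}, $c(P) \le 6\gamma < 12\,\opt(I)$.

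The key step is showing that the subinstances together cost at most $\opt(I)$. Let $F^*$ be an optimal solution, and for each $t$ let $q_t \subseteq F^*$ be an $s$-$t$ path of length at most $\lc$.

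\begin{itemize}
\item Every vertex of $q_t$ has mixture distance at most $c(q_t)+\gamma \le 2\gamma$ from $s$, so no vertex of $F^*$ used by these paths is pruned.
\item In $G_P$, consider a terminal $t \in C_i$. Take the last vertex of $q_t$ lying in $P \cup \{s\}$ (this vertex is contracted into $s$). The suffix of $q_t$ after that vertex lies entirely in $C_i$, because it avoids $P$ and is weakly connected to $t$. This suffix is an $s$-$t$ path in $G_i$ of length at most $\lc$.
\item Hence $F^*_i := F^* \cap E(G_i[C_i])$, together with the edges leaving $s$ into $C_i$, is feasible for $I_i$. These edge sets are disjoint across $i$, since the components are vertex-disjoint and each edge out of the contracted $s$ goes into a single component.
\item Therefore $\sum_i \opt(I_i) \le \opt(I) \le \gamma$. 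Each subinstance $(I_i,\gamma)$ is feasible, because $\opt(I_i) \le \gamma$.
\end{itemize}

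By induction, using $|S\cap C_i| \le |S|/2$,
\[
c(F_i) \le C(\log(|S|/2)+1)\,\opt(I_i) = C\log|S| \cdot \opt(I_i).
\]
Summing over $i$ and adding the separator cost gives
\[
c(F^2) \le 12\,\opt(I) + C\log|S|\cdot\opt(I) \le C(\log|S|+1)\,\opt(I)
\]
for $C \ge 12$. Note that it is essential to work with $\opt(I_i)$ rather than $\gamma$ in the inductive bound, since the subinstances are charged against disjoint pieces of a single optimum.

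\textbf{Main obstacle.} The delicate step is the decomposition of $F^*$ across the subinstances. One must check that the suffix argument gives length at most $\lc$ (it does, since a suffix is no longer than $q_t$). One must also check that the pieces are genuinely disjoint and that sharing of edges incident to $s$ is handled. If this disjointness fails in some edge case, the fallback is to charge using edges of $F^*$ inside each $C_i$, plus edges entering $C_i$, each of which enters a unique component.

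Finally, the recursion is invoked at the top level with $\gamma$ equal to the smallest power of two that is at least the true optimum (or with $\gamma$ polynomially large, since costs are polynomially bounded). The halving branch then brings $\gamma$ down to within a factor of $2$ of $\opt(I)$, which makes the bound $O(\log|S|)\cdot\opt$.
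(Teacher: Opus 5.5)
Your proof is correct and is essentially the paper's argument: induction on $\gamma+|S|$, charging to $F^1$ when $\opt\le\gamma/2$, and otherwise bounding $c(P)\le 6\gamma\le 12\,\opt$ and applying induction to the subinstances using $\sum_i \opt(I_i)\le\opt$. Your explicit invariant $\opt(I)\le\gamma$ is what justifies the paper's brief claim that the optimum restricted to each $C_i$ is feasible for $I_i$, because it ensures that no vertex of the optimal $\lc$-length paths is deleted by the pruning at mixture distance $2\gamma$.
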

\begin{proof}
  Let $F$ be the solution given by $\hDST(I, \gamma)$. 
  We prove by induction on $\gamma + |S|$ that 
  $c(F) \leq 12\gamma(\log|S| + 1)$. 
  We can assume that $\gamma \leq 2\opt$; if not, then $\gamma/2 \geq \opt$,
  so we can apply induction on $c(F^1)$ and use the fact that 
  $c(F) \leq c(F^1)$. 

  We bound the cost of $F^2$; this suffices to complete the proof,
  since $c(F) \leq c(F^2)$. Let $P$ be the separator computed at this step 
  and let $C_1, \dots, C_\tau$ be the resulting components of $G \setminus P$.
  For $i \in [\tau]$, we let $\opt_i$ denote the cost of the optimal solution 
  on $I_i$. Note that $\sum_{i \in [\tau]} \opt_i \leq \opt$, since 
  the optimal solution on $G$ restricted to each $C_i$ is feasible for 
  $I_i$, and all $C_i$ are disjoint from each other.
  By \Cref{claim:dst_int_separator},
  $c(P) \leq 6\gamma \leq 12\opt$. By induction,
  \begin{align*}
    c(F^2) &= c(P) + \sum_{i \in [\tau]} c(F_i) 
    \leq 12\opt + \sum_{i \in [\tau]} 12(\log|S \cap C_i| + 1)\opt_i \\
    &\leq 12\opt + \sum_{i \in [\tau]} 12(\log |S|) \opt_i 
    \leq 12(\log|S| + 1) \opt. \qedhere
  \end{align*}
\end{proof}

\Cref{thm:lcdst_integral} follows from 
\Cref{lem:dst_int_runtime}, \Cref{lem:dst_int_feasible},
and \Cref{lem:dst_int_cost} by applying \Cref{alg:hDST} to 
the given input instance $I$, with $\gamma$ chosen to be any 
upper bound on $\opt$; e.g. $c(E)$. 

\subsection{LP-Based Approximation}
\label{sec:dst_frac}

The goal of this section is to obtain a polylogarithmic upper bound 
on the integrality gap of a natural LP relaxation for LC-DST in 
planar digraphs, defined as follows. For each terminal $t \in S$, 
let $\calP_t^{\lc}$ denote the set of all $s$-$t$ paths in $G$ of length at most $\lc$.
The variables $x_e$ indicate whether an edge $e$ is in the final solution, and 
variables $f_p$ indicate how much $s$-$t$ flow is sent on path $p$. 

\begin{equation}
\label{LC-DST-LP}
\tag{LC-DST-LP}
\begin{aligned}
  \min\quad \sum_{e\in E}c(e)x_e& \\
  s.t.\quad \sum_{p \in \calP_t^{\lc}} f_p &\geq 1 \qquad~
  \forall t \in S \\
  \sum_{p \in \calP_t^{\lc}, e \in p} f_p &\leq x_e \qquad 
  \forall e \in E, t \in S \\
  x_e, f_p &\geq 0 \qquad~
  \forall e\in E, p \in \cup_{t \in S} \calP_t^\lc
\end{aligned}
\end{equation}

The LP contains an exponential number of variables but can be solved approximately 
(up to a $(1+\eps)$ factor) via separation oracle on the dual.

We provide an LP-based algorithm following a similar framework as above;  
here, we use the LP solution as the ``guess'' for $\opt$. This is the 
same approach used by \cite{polymatroid_esa} to obtain an LP-based approximation 
for DST in planar graphs.
The algorithm $\hfracDST(I, (x, f))$ is as follows, where 
$I = (G, s, S, \lc)$ is a problem instance and $(x, f)$ is a feasible 
LP solution to the relaxation for \ref{LC-DST-LP}.

\begin{figure}[t]
\centering
\includegraphics[width=0.5\linewidth]{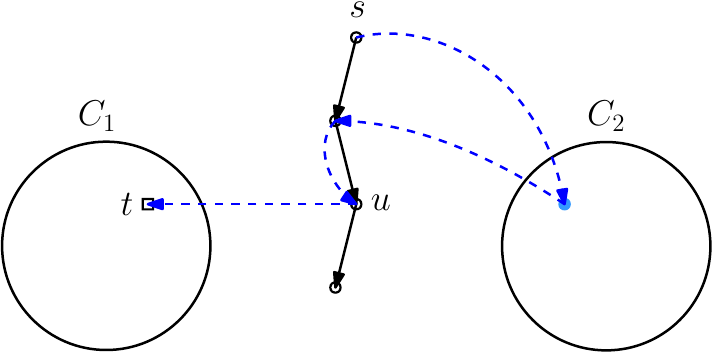}
\caption{Consider the $s$-$t$ path $p$ shown in blue dashed lines;
here the separator $P$ is black and bolded. The portion of the path 
from $u$ to $t$ corresponds to $p'$.}
\label{fig:lcdst_paths}
\end{figure}

\begin{itemize}
  \item \textbf{Base Case.} If $S = \{t\}$, we return the min-cost $\lc$-length 
  constrained $s$-$t$ path.
  \item \textbf{Prune and Separate.} We set 
  $\gamma = 2\log|S| \sum_{e \in E} c(e) x_e$ to be the ``guess'' for 
  $\opt$. As in \Cref{alg:hDST}, we define the mixture edge cost function 
  as $c'(e) = c(e) + \frac \gamma \lc \ell(e)$.
  Let $(P, (G_1, C_1), \dots, (G_\tau, C_\tau))$ denote the output of 
  $\algsep(G = (V, E, c'), s, S, 2\gamma)$. 
  \item \textbf{Recursive Step.} For each component $i \in [\tau]$, 
  we define the recursive sub-instance 
  $I_i = (G_i, s, S \cap C_i, \lc)$. In order to construct 
  a feasible LP solution for $I_i$, we first scale up the solution;
  let $(\bar x, \bar f) = \left(1 + \frac 1 {\log |S|}\right) (x, f)$. 
  We set $x_e^i = \bar x_e$ for all $e \in G_i$. 

  For the path variables, fix a terminal $t \in S \cap C_i$. For each 
  path $p \in \calP^{\lc}_t$ (in the graph $G$ \emph{before} contracting $P$ 
  into $s$), let $p'$ denote the corresponding contracted path with 
  resulting cycles deleted. Note that $\ell(p') \leq \lc$ since contraction 
  cannot increase path length, so $p'$ is a valid $s$-$t$ path in the 
  new sub-instance. We define $\calQ(p')$ as the set of all paths in 
  $\calP^{\lc}_t$ that map to $p'$ after contraction and cycle removal;
  see \Cref{fig:lcdst_paths}.
  We set $f^i_{p'} = \sum_{p \in \calQ(p')} \bar f_p$ for all such $p'$ fully 
  contained in $G_i$. We recursively compute 
  $F_i = \hfracDST(I_i, (x^i, f^i))$.
  \item Return $P \cup \left(\cup_i F_i\right)$
\end{itemize}

It is clear that this algorithm runs in polynomial time.
We show that each recursive call is valid by showing that the 
constructed LP solutions are feasible for the subinstances. 

\begin{lemma}
  For each $i \in [\tau]$, 
  $(x^i, f^i)$ is a feasible LP solution on instance $I_i$. 
\end{lemma}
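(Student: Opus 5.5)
We have to verify the three families of LP constraints for $(x^i, f^i)$ on $I_i$. Nonnegativity is immediate, since both $x^i$ and $f^i$ are nonnegative combinations of $\bar x, \bar f \ge 0$. The two remaining constraints are \textbf{capacity} and \textbf{coverage}.

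\textbf{Capacity.} Fix a terminal $t \in S \cap C_i$ and an edge $e \in E(G_i)$. Every path $p'$ in $G_i$ with $f^i_{p'} > 0$ that uses $e$ arises from some path $p \in \calP^{\lc}_t$. Moreover, since $p'$ is obtained from $p$ by contraction and cycle removal, every edge of $p'$ corresponds to an edge of $p$; in particular $e \in p$. The classes $\calQ(p')$ are disjoint, because each $p$ maps to exactly one $p'$. Hence
\[
\sum_{p' \ni e} f^i_{p'} \le \sum_{p \in \calP^{\lc}_t,\, e \in p} \bar f_p \le \bar x_e = x^i_e .
\]
Here we identify an edge of $G_i$ incident to the contracted vertex $s$ with its preimage in $G$. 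If several parallel preimages exist, we keep them as parallel edges, so the identification is well defined.

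\textbf{Coverage.} This is the main step. We need
\[
\sum_{p'} f^i_{p'} \ge 1 ,
\]
where the sum ranges over $p'$ fully contained in $G_i$. Any $p \in \calP^{\lc}_t$ whose contracted image $p'$ leaves $G_i$ must, after its last visit to $P \cup \{s\}$, pass through a vertex outside $C_i$, or through a vertex deleted by pruning, before reaching $t$. Since $C_i$ is a weakly connected component of $G \setminus P$, the suffix of $p$ after its last vertex in $P \cup \{s\}$ is a path avoiding $P$ and ending at $t \in C_i$, so it lies entirely in $C_i$ unless it hits a pruned vertex. So the only loss comes from paths containing a vertex $v$ with $d_{c'}(s,v) > 2\gamma$. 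Such a path $p$ has
\[
c'(p) = c(p) + \frac{\gamma}{\lc}\,\ell(p) > 2\gamma ,
\]
and since $\ell(p) \le \lc$, this forces $c(p) > \gamma$.

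We bound the flow on such expensive paths by a Markov-type argument. We have
\[
\sum_{p} f_p\, c(p) = \sum_e c(e) \sum_{p \ni e} f_p \le \sum_e c(e)\, x_e = \frac{\gamma}{2\log|S|} .
\]
Therefore the $f$-mass of paths with $c(p) > \gamma$ is less than $1/(2\log|S|)$. After scaling by $1 + 1/\log|S|$, the surviving mass is at least
\[
\Bigl(1 + \frac{1}{\log|S|}\Bigr)\Bigl(1 - \frac{1}{2\log|S|}\Bigr) \ge 1
\]
whenever $\log|S| \ge 1/2$, which holds in the recursive case $|S| \ge 2$.

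Finally, each surviving $p'$ has $\ell(p') \le \ell(p) \le \lc$, so it lies in the path family of $I_i$. This completes the feasibility check.
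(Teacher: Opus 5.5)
Your proof is correct and follows the paper's argument essentially step for step. It uses the same Markov bound: paths with $c(p)>\gamma$ carry less than $1/(2\log|S|)$ flow, and since $\ell(p)\le h$ these are the only paths that can meet a pruned vertex. It uses the same weak-connectivity argument placing the contracted suffix of every surviving path inside $G_i$, and the same disjointness of the classes $\mathcal{Q}(p')$ for the capacity constraints.

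One small slip: $(1+a)(1-a/2) = 1 + \frac{a}{2}(1-a)$ is at least $1$ only when $a = 1/\log|S| \le 1$, i.e.\ $\log|S|\ge 1$ rather than $\log|S|\ge 1/2$. This still holds for $|S|\ge 2$ with base-$2$ logarithms, so your conclusion stands.
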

\begin{proof}
  Fix $i \in [\tau]$. We show that $(x^i, f^i)$ satisfies all LP constraints. 
  Fix $t \in S \cap C_i$. To 
  distinguish between the set of $\lc$-length-constrained $s$-$t$ paths 
  in the new instance $I_i$ versus in the original input instance $I$, we 
  write $\calP_t^i$ versus $\calP_t$; the length-constraint 
  $\lc$ is dropped from this notation for visual clarity. 

  We start by showing that 
  $\sum_{p \in \calP_t, c'(p) \leq 2\gamma} \bar f_p \geq 1$; 
  that is, scaling up by a $1 + 1/\log|S|$ factor compensates for vertices 
  deleted during the pruning step.  
  Consider a path $p \in \calP_t$ with $c'(p) > 2\gamma$. 
  Since $p \in \calP_t$, $\ell(p) \leq \lc$. Thus, 
  $c(p) = c'(p) - \frac \gamma \lc \ell(p) > 2\gamma - \gamma = \gamma$. 
  Observe that since $(x, f)$ satisfies the LP constraints, 
  \begin{align*}
    \sum_{e} c(e)x_e 
    \geq \sum_e c(e) \sum_{p \in \calP_t: e \in p} f_p 
    = \sum_{p \in \calP_t} f_p c(p) 
    > \gamma \sum_{p \in \calP_t: c(p) > \gamma} f_p.
  \end{align*}
  In particular, $\sum_{p \in \calP_t: c(p) > \gamma} f_p
  < \frac 1 \gamma \sum_{e} c(e) x_e = 1/(2 \log |S|)$. Furthermore, 
  since $(x, f)$ is feasible for $I$, $\sum_{p \in \calP_t} f_p \geq 1$. 
  Therefore, 
  \begin{align*}
    \sum_{p \in \calP_t, c'(p) \leq 2\gamma} \bar f_p
    \geq (1 + \frac 1 {\log|S|}) \sum_{p \in \calP_t: c(p) \leq \gamma} f_p 
    \geq \left(1 + \frac 1 {\log|S|}\right)\left(1 - \frac 1{2\log|S|}\right) 
    \geq 1.  
  \end{align*} 
  Next, we show that $\sum_{p \in \calP^i_t} f_{p}^i = 
  \sum_{p \in \calP_t, c'(p) \leq 2\gamma} \bar f_p$. Let $p \in \calP_t$, 
  and let $p'$ denote the corresponding contracted path. 
  Since $c'(p) \leq 2\gamma$, no vertices of $p$ are deleted during 
  the pruning step. We want to 
  show that $p'$ is contained in $G_i$. Let $v$ be the 
  \emph{last} vertex on $p$ that is not in $C_i$; note that $p$ must 
  end in $C_i$ since $p$ is an $s$-$t$ path and $t \in C_i$. 
  Since the components $C_1, \dots, C_\tau$ are weakly connected in 
  $G \setminus P$, there must be no edges between them. 
  Thus, $v \in P \cup \{s\}$. Therefore, when $P$ is contracted 
  into $s$, the path $p[s,v]$ corresponds to a walk starting and ending 
  at the contracted supernode $s$. This implies that $p' = p[v,t]$, 
  which in the contracted graph, is contained in 
  $G_i$. 
  Therefore, $p \in \calQ(p')$ for some $p' \in \calP_t^i$, so 
  $\sum_{p \in \calP_t, c'(p) \leq 2\gamma} \bar f_p 
  \leq \sum_{p' \in \calP_t^i} \sum_{p \in \calQ(p')} \bar f_p
  = \sum_{p' \in \calP_t^i} f_{p'}^i$, as desired. Thus, 
  $\sum_{p \in \calP^i_t} f_{p}^i \geq 1$, so the first set of 
  LP constraints is satisfied. 

  Proving that the second set of LP constraints are satisfied is more 
  straightforward. We want to show that for all edges $e \in E(G[C_i \cup \{s\}])$,
  $\sum_{p \in \calP_t^i, e \in p} f_p^i \leq x_e^i$. 
  \begin{align*}
    \sum_{p' \in \calP^i_t, e \in p'} f_{p'}^i 
    = \sum_{p' \in \calP^i_t, e \in p'} \sum_{p \in \calQ(p')} \bar f_p
    \leq \sum_{p \in \calP_t, e \in p} \bar f_p,
  \end{align*}
  since the sets $\calQ(p')$ are disjoint. To conclude, since $(x, f)$ 
  is a feasible LP solution,
  \begin{align*}
    \sum_{p \in \calP_t, e \in p} \bar f_p
    = (1 + \frac 1 {\log|S|}) \sum_{p \in \calP_t, e \in p} f_p
    \leq (1 + \frac 1 {\log|S|}) x_e = x_e^i. &\qedhere
  \end{align*}
\end{proof}

The ``prune and separate'' portion of \hfracDST\ is the same as 
in \hDST\ (\Cref{alg:hDST}). Therefore, 
\Cref{claim:dst_int_separator} holds for \hfracDST. 
It is not difficult to see that the 
resulting solution $F = \hfracDST(I, (x, f))$ contains 
an $s$-$t$ path of length at most $O(\log |S|) \cdot \lc$ 
for each terminal $t \in S$, using reasoning analogous to 
that of \Cref{lem:dst_int_feasible}. 
We bound the approximation ratio in 
the following lemma, which concludes the proof of \Cref{thm:lcdst_frac}.

\begin{lemma}
\label{lem:dst_frac_cost}
  If $(x, f)$ is a feasible LP solution to the 
  LC-DST instance $I$, then
  the cost of the solution returned by $\hfracDST(I, (x, f))$ is at most 
  $O(\log^2 \nt) \sum_{e} c(e) x_e$.
\end{lemma}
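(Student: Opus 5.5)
We prove by induction on $|S|$ that the cost of $\hfracDST(I,(x,f))$ is at most $g(|S|)\cdot \sum_e c(e)x_e$. Here $g$ is a function we will pin down: roughly $g(k)=C\log^2 k$ plus a constant for small $k$, with $C$ absolute. The base case $|S|=1$ is handled directly. Let $t$ be the terminal. The LP constraints give $\sum_{p\in\calP_t^\lc} f_p c(p)\le \sum_e c(e)x_e$ and $\sum_p f_p\ge 1$. Hence some $h$-length-bounded $s$-$t$ path has cost at most $\sum_e c(e)x_e$, and the min-cost such path (computed up to $1+\eps$) is within this bound.

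For the inductive step, write $L=\sum_e c(e)x_e$ and $k=|S|\ge 2$.

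\begin{itemize}
\item \textbf{Separator cost.} By \Cref{claim:dst_int_separator}, which applies verbatim since the prune-and-separate step is identical, $c(P)\le 6\gamma = 12\log k\cdot L$.

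\item \textbf{Cost of the subinstances.} The subinstance LP solutions $x^i$ are restrictions of $\bar x=(1+1/\log k)x$ to the edge sets of the $G_i$. These edge sets are disjoint, because the $C_i$ are disjoint weakly connected components of $G\setminus P$ and every edge of $G_i$ has an endpoint in $C_i$. Hence
\[
\sum_i L_i \le \left(1+\tfrac1{\log k}\right)L, \qquad \text{where } L_i=\sum_e c(e)x^i_e .
\]
Each subinstance has $|S\cap C_i|\le k/2$ terminals, so by induction the total cost satisfies
\[
c(F)\le 12\log k\cdot L + g(k/2)\left(1+\tfrac1{\log k}\right)L .
\]

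\item \textbf{Solving the recurrence.} We need $g(k)\ge 12\log k + (1+1/\log k)\,g(k/2)$. Unrolling over the $\log k$ levels, the multiplicative factors compound to
\[
\prod_{j\le \log k}\left(1+\tfrac1j\right) = O(\log k).
\]
Here the level-$j$ subinstance has about $2^j$ terminals, so its scaling factor is $1+1/j$. This product is at most $\log k+1$, which is polylogarithmic but not constant. The additive terms are $12j\,L_{\text{level}}$, each at most $12\log k$. The resulting bound is therefore $\sum_j 12 j\cdot O(\log k/j)\cdot L$. To make this precise, we define $g(k)=12(\log k+1)^2$. The check then becomes
\[
12(\log k+1)^2 \ge 12\log k + \frac{\log k+1}{\log k}\cdot 12(\log k)^2 = 12\log k + 12\log k(\log k+1),
\]
and the right-hand side is $12\log^2 k + 24\log k \le 12(\log k+1)^2$. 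So the induction closes.

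\item \textbf{Small cases.} The case $|S|=2$ needs care, since there $\log|S|=1$ and the scaling factor is $2$. The same inequality still holds, as the computation above is valid for all $k\ge 2$.
\end{itemize}

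The main obstacle is this compounding of the $(1+1/\log|S|)$ scalings across levels. The choice $g(k)=12(\log k+1)^2$ absorbs it exactly, because the relative growth $g(k)/g(k/2)$ is at least $1+1/\log k$ plus the additive separator term. A secondary point to verify is the disjointness of the $x^i$ supports, so that the $L_i$ sum correctly.
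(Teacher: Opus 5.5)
Your proposal is correct and matches the paper's proof essentially line for line. Both use the same induction hypothesis $c(F)\le 12(\log|S|+1)^2\sum_e c(e)x_e$, the separator bound $c(P)\le 6\gamma=12\log|S|\cdot\sum_e c(e)x_e$, the bound $\sum_i X_i\le(1+1/\log|S|)\sum_e c(e)x_e$ on the subinstance LP values, and the identical closing algebra $12\log^2|S|+24\log|S|\le 12(\log|S|+1)^2$. You also spell out why the edge sets of the $G_i$ are disjoint (every edge of $G_i$ has an endpoint in $C_i$), a detail the paper only gestures at, and your heuristic unrolling discussion is unnecessary but harmless.
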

\begin{proof}
  Let $F$ be the solution returned by $\hfracDST(I, (x, f))$. 
  We let $X = \sum_{e} c(e)x_e$.
  We prove by induction on the number of terminals that 
  $c(F) \leq 12(\log|S| + 1)^2 X$. 

  In the base case, it is easy to see that the cost of the min-cost 
  $\lc$-length-bounded $s$-$t$ path is at most $X$. For the recursive step, 
  for each $i \in [\tau]$, let $X_i$ denote the value of the LP solution 
  $(x^i, f^i)$, and let $F_i$ denote the output of 
  $\hfracDST(I_i, (x^i, f^i))$. 
  By \Cref{claim:dst_int_separator}, 
  $c(P) \leq 6\gamma = 12X \log |S|$. By induction, 
  \begin{align*}
    c(F) = c(P) + \sum_{i \in [\tau]} c(F_i) 
    \leq 12X \log |S| + \sum_{i \in [\tau]} 12X_i (\log|S \cap C_i| + 1)^2.
  \end{align*}
  Since the number of terminals is halved at each recursive step, 
  $\log|S \cap C_i| \leq \log(|S|/2) \leq \log|S| - 1$. 
  Furthermore, since $C_1, \dots, C_\tau$ are disjoint, 
  $\sum_{i \in [\tau]} X_i \leq (1 + \frac 1 {\log|S|}) X$. Thus
  \begin{align*}
    c(F) &\leq 12\log |S| X + \sum_{i \in [\tau]} 12(\log |S|)^2 X_i 
    \leq 12\log |S| X + 12(\log |S|)^2 (1 + \frac 1 {\log|S|})  X \\
    &= 12X (\log^2|S| + 2\log|S|) 
    \leq 12X (\log|S|+1)^2. 
  \end{align*}
  This concludes the inductive argument; thus, 
  $c(F) \leq O(\log^2\nt) \sum_{e \in E} c(e)x_e$.
\end{proof}
\section{Length-Constrained Directed Steiner Forest}
\label{sec:dsf}

In this section, we consider length-constrained Directed Steiner Forest 
in planar digraphs (called length-constrained planar DSF). We are given as input 
a directed graph $G = (V, E)$ with non-negative edge costs $c(e)$, lengths 
$\ell(e)$, length constraint $h$, and demand pairs $D = \{(s_i, t_i)\}_{i \in [\nt]}$.
The goal is to find a min-cost subgraph 
$H \subseteq G$ such that for each $(s_i, t_i) \in D$, $H$ contains an $s_i$-$t_i$ path of 
length at most $h$. 
We employ the \emph{junction tree} approach, discussed in \Cref{sec:intro}. 
We formally define an $\lc$-length-constrained junction tree.

\begin{definition}
  Given an instance $(G, D, \lc)$ of LC-DSF, an 
  \emph{$\lc$-length-constrained junction tree} that covers terminal pairs 
  $D_H \subseteq D$ 
  is a subgraph $H \subseteq G$ with a root $v \in V(H)$ such that 
  for each $(s_i, t_i) \in D_H$, $H$ contains an $s_i$-$v$ path and a $v$-$t_i$ path, 
  each of length at most $\lc$. The \emph{density} of $H$ is 
  $c(H)/|D_H|$. 
\end{definition}

We first show, in \Cref{sec:jt_existence}, that given any instance to planar LC-DSF, 
there exists a low-density length-constrained 
junction tree. In \Cref{sec:dsf_algo}, we describe the approximation algorithm to find such 
a junction tree, and conclude the proof of \Cref{thm:lcdsf_main}. 

\subsection{Existence of a good junction tree}
\label{sec:jt_existence}

We prove the following theorem. 

\begin{theorem}
\label{thm:lcdsf_existence-main}
  Given an instance $(G, D, \lc)$ of length-constrained planar DSF, there exists 
  a $(3\lc)$-length-constrained junction tree of density 
  at most $O(\log^2 \nt) \cdot \opt/\nt$. 
\end{theorem}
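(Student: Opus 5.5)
Let $H^*$ be an optimal solution of cost $\opt$, and for each pair $i$ fix an $s_i$-$t_i$ path $q_i\subseteq H^*$ of length at most $\lc$. The plan is a divide-and-conquer in the spirit of \cite{dsf_soda}, but driven by the \emph{undirected} separator \Cref{lem:undir_sep}, which works for any spanning tree. The goal is to show that either some single vertex already serves as a good junction, or we can split the instance. We work inside $H^*$, one weakly connected component at a time, and assume first that $H^*$ is weakly connected. Take a spanning tree $T$ of the undirected version of $H^*$, and give each vertex $v$ weight equal to the number of demand pairs $i$ with $v\in\{s_i,t_i\}$. \Cref{lem:undir_sep} then yields three tree paths whose union $Q$ has $c(Q)\le 3\opt$. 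Every component of $H^*\setminus Q$ carries at most half the weight.

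We split the pairs by how $q_i$ interacts with $Q$. If $q_i$ avoids $Q$, then $s_i$ and $t_i$ lie in the same component, and we recurse on that component with its restricted optimum. Since the components are disjoint, their optima sum to at most $\opt$, and the weight halves each time, so the recursion depth is $O(\log\nt)$. Otherwise $q_i$ hits $Q$, and these are the pairs we must route through a junction. The difficulty is that $Q$ consists of undirected tree paths, so it neither has bounded length nor respects edge directions. To handle this, subdivide each of the three paths into segments, and charge each crossing pair $i$ to the first vertex $x_i$ at which $q_i$ meets $Q$. Through $x_i$ we have a directed path $s_i\to x_i\to t_i$ of length at most $\lc$.

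The main step is to turn this into a junction tree of density $O(\log\nt)\opt/\nt'$, where $\nt'$ is the number of crossing pairs. I would partition each separator path into $O(\log\nt)$ groups, or buckets, and pick one bucket holding many crossing pairs. That loses an $O(\log\nt)$ factor: a factor $3$ for the choice of path, and the rest from bucketing by position. Inside the bucket we choose a junction $v$ and, for every pair, connect $s_i\to x_i\to v\to t_i$. We pay for the hops between $x_i$ and $v$ with the directed subpaths of $q_j$ for other pairs $j$ in the bucket whose paths cross at the relevant points. This is exactly why the length bound is $3\lc$: each $s_i$–$v$ and $v$–$t_i$ route is a concatenation of at most three subpaths of the $q$'s, each of length at most $\lc$. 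Concretely, for an ordered sequence of crossing points I would take $v$ to be a crossing point of a single reference path $q_j$, and route $s_i\to x_i$ along $q_i$, then along $q_j$'s segment, and then onward. The cost is then at most $c(H^*)$ plus $c(Q)$, which is $O(\opt)$.

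Assembling the pieces, at each recursion level either a constant fraction of the pairs cross the separator, giving a junction tree of density $O(\log \nt)\cdot\opt/\nt$, or the pairs pass to subinstances. Averaging over the $O(\log\nt)$ levels and invoking the standard fact that a collection of junction trees with total pair count $\nt$ and total cost $C$ contains one of density $\le C/\nt$ gives density $O(\log^2\nt)\opt/\nt$. Disconnected $H^*$ is handled by the same averaging over its components. The step I expect to be the main obstacle is the junction construction within a bucket. There, the undirectedness of $Q$ means that $x_i$ need not reach $v$ along $Q$, and I would try to resolve it by using the paths $q_j$ themselves, rather than $Q$, as the connectors. The separator would then serve only to guarantee the halving of weight.
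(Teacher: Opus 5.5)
\textbf{There is a genuine gap.} It lies exactly at the step you flag: turning the crossing pairs into a low-density junction tree. As written, that step is not just unproven but false.

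You claim that inside one bucket a single junction $v$ serves every pair via $s_i\to x_i\to v\to t_i$. This fails even in the most favourable case, where $Q$ is one \emph{directed} path of length at most $h$. Take $H^*$ to be a directed path $v_1\to\cdots\to v_m$ of length at most $h$ with pendant edges $s_i\to v_i$ and $v_i\to t_i$; add the pair $(v_1,v_m)$ if you want the spine forced into the optimum. Every pair $(s_i,t_i)$ crosses the path. However, the only vertices that are reachable from $s_i$ and can reach $t_i$ are $s_i,v_i,t_i$. So no vertex is a junction for two of these pairs, and no bucket containing two or more of them admits your single tree.

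Hence the density bound cannot come from one junction per bucket. It has to come from many junction trees together with a bound on how often each edge of $H^*$ is reused. Your proposal has no such overlap argument; the estimate $c(H^*)+c(Q)$ concerns a single tree.

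The undirectedness of $Q$ makes this worse. With an arbitrary spanning tree, the tree path can alternate orientation, so consecutive crossing points need not reach each other at all. Nothing forces any $q_j$ to pass through $x_i$, so ``using the $q_j$'s as connectors'' has no footing. As you note yourself, the separator then only guarantees halving, which leaves the bucket step carrying essentially the full difficulty of the theorem.

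\textbf{The paper supplies two missing ingredients.} First, \Cref{lem:lcdsf_layers_main} layers $H^*$ by alternating forward/backward $h$-reachability and contracts earlier layers into a root.
\begin{itemize}
\item Every demand path then lies in three consecutive layers.
\item Each piece has a spanning tree whose root paths are concatenations of at most three dipaths of length at most $h$.
\item Each edge lies in at most three pieces, which costs a factor $3$.
\end{itemize}
Running \Cref{lem:undir_sep} on \emph{that} tree makes the separator at most nine dipaths of length at most $h$. The level/component/path averaging is then essentially your recursion, except that the separator is contracted into the root so the layered structure persists. This reduces to the one-path case.

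Second, in the one-path case \Cref{lem:lcdsf_one_path_main} proceeds as follows.
\begin{itemize}
\item It defines $a_i$ (resp.\ $b_i$) as the first (resp.\ last) vertex of $P$ reachable from $s_i$ (resp.\ reaching $t_i$) within length $2h$.
\item It uses the midpoint $\mu$ as a junction only for pairs with $a_i\le_P\mu\le_P b_i$. This is where $3h=2h+h$ comes from.
\item It recurses on the two halves of $P$.
\end{itemize}
The key \Cref{claim:lcdsf_onepath_cost} shows that the source-side (and sink-side) edges used by the left and right halves are disjoint. A shared edge would let a right source $s_j$ reach $a_i'<_P\mu<_P a_j$ within length $2h$, contradicting the choice of $a_j$. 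So each edge lies in $O(1)$ junction trees per level and in $O(\log k)$ overall. Averaging over all the junction trees produced gives density $O(\log k)\,c(E^*)/k$, which is the second logarithmic factor.
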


For planar-DSF \emph{without} length-constraints, \cite{dsf_soda} showed that 
every instance of planar-DSF contains a low density junction tree. We outline the approach 
of \cite{dsf_soda}, as we follow a similar framework. The proof proceeds by considering 
an optimum solution $E^* \subseteq E$ and finding several junction trees in $E^*$ that are 
essentially disjoint and cover a large fraction of terminal pairs. The construction of 
these junction trees follows three steps. The first is a ``layering'' of a directed graph,
where each layer has a tree-like structure, and all paths in the original graph are contained 
in at most two consecutive layers. This layering scheme was given by Thorup \cite{Thorup04}
and also used by Kawarabayashi and Sidiropoulos \cite{KawarabayashiS21} 
to obtain improved upper bounds on the multicommodity flow-cut gap in directed planar graphs.
The second step is a recursive approach based on the planar separator theorem 
(\Cref{lem:undir_sep}). The third step is to show that in one level of recursion, 
if many $s_i$-$t_i$ paths pass through the separator, then there exist vertices on the 
separator that behave as roots of low-density junction trees. 

The proof of \Cref{thm:lcdsf_existence-main} follows the same three-step strategy.
First, we reduce to the case where the optimal solution is an \emph{$\lc$-length-constrained
$3$-layered digraph} (\Cref{lem:lcdsf_layers_main}).
\begin{definition}
  An $h$-length-constrained $3$-layered spanning tree $T$ of a digraph $G$ 
  is a spanning tree of the \emph{undirected version} of $G$, 
  such that any root-to-leaf path in $T$ is the concatenation of at 
  most $3$ dipaths in $G$, each of length at most $h$. 
  An $h$-length-constrained $3$-layered digraph is one that has an 
  $h$-length-constrained $3$-layered spanning tree. 
  The root of the digraph is the root of the spanning tree.
\end{definition}
The argument uses a similar decomposition as in \cite{dsf_soda},
but requires additional care to handle lengths.
Second, we apply the planar separator and corresponding divide-and-conquer 
argument to reduce to a so-called ``one-path'' setting
(\Cref{lem:lcdsf_separator_main}).
This step is essentially identical to the corresponding argument in
\cite{dsf_soda}.
The third step is a new contribution: handling the one-path setting under
length constraints. The argument of
\cite{dsf_soda} no longer applies, and we give a new argument
(\Cref{lem:lcdsf_one_path_main}).

\subsubsection{Reduction to length-constrained layered digraphs}
\label{sec:layers}

First, we reduce to the case where the optimal solution is an 
$h$-length-constrained $3$-layered digraph, by proving the following lemma. 

\begin{lemma}
\label{lem:lcdsf_layers_main}
  Let $(G, D, \lc)$ be an instance of length-constrained planar DSF and let 
  $G^* = (V^*, E^*)$ be an optimal solution. Then, 
  there exists a subgraph $H \subseteq G^*$ and a subset of terminals 
  $D_H \subseteq D$ such that the following holds.
  \begin{enumerate}
    \item $H$ is an $\lc$-length-constrained $3$-layered digraph with root $v$;
    \item For each $(s_i, t_i) \in D_H$, there exists an $s_i$-$t_i$ path 
    in $H \setminus \{v\}$ of length at most $\lc$;
    \item $H$ is a minor of $G^*$, and $H \setminus \{v\}$ is a subgraph of $G^*$. 
    \item $\frac{c(E(H))}{|D_H|} \leq 3 \cdot \frac{c(E^*)}{|D|}$. 
  \end{enumerate}
\end{lemma}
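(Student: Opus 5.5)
We adapt Thorup's layering to lengths. Fix an optimal solution $G^*=(V^*,E^*)$. For each demand $i$, fix an $s_i$-$t_i$ path $p_i$ in $G^*$ of length at most $\lc$. We may assume $G^*$ is weakly connected; otherwise we treat each weakly connected component separately. Some component then has density at most $c(E^*)/|D|$, since the cost and the demands split additively over components.

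\textbf{Layering.} Pick an arbitrary vertex $r$ and build a sequence of layers by alternating forward and backward length-bounded reachability. Let $L_0=\{r\}$. Let $R_1$ be the set of vertices reachable from $r$ in $G^*$ by a dipath of length at most $\lc$. Let $R_2$ be the set of vertices that can reach $R_1$ by a dipath of length at most $\lc$. Continue alternating directions, so that $R_j$ is the $\lc$-length-bounded forward or backward closure of $R_{j-1}$, and set $L_j=R_j\setminus R_{j-1}$.

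The key claim to verify is that each path $p_i$ has all its vertices in at most two consecutive layers $L_j\cup L_{j+1}$. This is where lengths add care. Let $j$ be the first index with $V(p_i)\cap R_j\neq\emptyset$. Then a subpath of $p_i$ from or to a vertex of $R_j$ has length at most $\ell(p_i)\le\lc$. Hence the appropriate closure at step $j+1$ or $j+2$ swallows all of $p_i$. The direction of the closure matters here: a forward closure captures the suffix of $p_i$ after its first vertex in $R_j$, and the next backward closure captures the prefix. I expect this to give containment in $L_j\cup L_{j+1}\cup L_{j+2}$. That is why the statement says \emph{three} dipaths, with one extra level relative to \cite{dsf_soda}.

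\textbf{Choosing the subgraph.} Group the layers into windows $W_a=L_a\cup L_{a+1}\cup L_{a+2}$, and partition the demands according to the window containing $p_i$. Each edge of $E^*$ lies within consecutive layers. Take residue classes $a\bmod 3$, and consider the subgraphs $G^*[R_{a+2}\setminus R_{a-1}]$. Over the three residues, every edge is counted at most a constant number of times. Each demand is assigned to one residue, so there is a residue class whose demands $D'$ satisfy
\[\sum_{a}c(E^*[W_a])\le 3\,\frac{|D'|}{|D|}c(E^*).\]
Pick the single window $W_a$ in that class with the best ratio. To form $H$, take $G^*[W_a]$ and contract $R_{a-1}$ (which is weakly connected through $r$'s tree) into a single root $v$. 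Then $H\setminus\{v\}$ is a subgraph of $G^*$, $H$ is a minor of $G^*$, and each $p_i$ with $p_i\subseteq W_a$ avoids $v$, giving properties 2–4.

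\textbf{Spanning tree and main obstacle.} For property 1, build the spanning tree by concatenating the shortest-length reachability trees used to define $R_a,R_{a+1},R_{a+2}$, each rooted at the contracted vertex $v$. Every vertex of $W_a$ is then reached from $v$ by at most three alternating dipaths, each of length at most $\lc$. The main obstacle is getting the layering claim right, namely that paths span only the intended number of layers while each layer's tree remains composed of at most three length-$\lc$ dipaths. If the bookkeeping fails, I would instead define each closure from the previous layer with length budget $\lc$ and absorb the constant-factor loss into the stated factor $3$.
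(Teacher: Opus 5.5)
Your proposal follows the paper's route: alternating forward/backward length-$h$ reachability layers, windows of three consecutive layers with everything earlier contracted into the root $v$, and an averaging step. Your residue classes mod $3$ are equivalent to the paper's count that each edge lies in at most three of the graphs $G_j$.

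Taking each closure from the cumulative set $R_{j-1}$, rather than from the last layer alone, is the right choice, and it makes your containment sketch rigorous. Suppose $R_{j+1}$ is a forward closure. Then the part of $p_i$ after its first vertex in $R_j$, in particular $t_i$, lies in $R_{j+1}$. The backward closure $R_{j+2}$ then contains every vertex of $p_i$, since each reaches $t_i$ within length $h$ (symmetrically with $s_i$). Hence $p_i\subseteq R_{j+2}\setminus R_{j-1}=L_j\cup L_{j+1}\cup L_{j+2}$. Cumulative closures also exhaust a weakly connected $G^*$: an optimal solution has no edge longer than $h$, so any edge between $R_j$ and its complement is absorbed within the next two closures. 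With closures taken only from $L_{j-1}$, as in the paper's definition, this step can fail when $t_i$ (resp.\ $s_i$) already lies in $L_j$, and the layering can even stall. So do not retreat to that version as your closing fallback suggests. In any case the factor $3$ in property 4 bounds cost density, so it could not absorb a loss in length.

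\textbf{The cost bound needs repair.} Contracting $R_{a-1}$ inside $G^*[W_a]$ does not make sense, so your $H$ must be $G^*[R_{a+2}]$ with $R_{a-1}$ contracted. Then $E(H)$ contains, besides $E^*[W_a]$, all edges between $R_{a-1}$ and $W_a$. These become the edges at $v$, and your spanning tree for property 1 hangs from them, yet $\sum_a c(E^*[W_a])$ does not count them. The fix is to charge each edge of $E^*$ to the window containing its endpoint in the later layer. Within one residue class this window is unique: for a later window of the class the edge is contracted into a self-loop, and for an earlier one its later endpoint is deleted. Hence $\sum_{a\equiv\rho}c(E(H_a))\le c(E^*)$, and your averaging goes through verbatim. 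Your remark that every edge joins consecutive layers is false for cumulative closures (an edge out of $L_a$ can land in $L_{a+2}$ when $R_{a+1}$ is a backward closure), but the charging does not need it.

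Two smaller points:
\begin{itemize}
\item For the window starting at $L_0$ there is nothing to contract. Use a dummy root joined to $r$ by a zero-length edge, as the paper's footnote does, so that paths through $r$ remain in $H\setminus\{v\}$.
\item The forests covering $L_{a+1}$ and $L_{a+2}$ are rooted at the sets $\{v\}\cup L_a$ and $\{v\}\cup L_a\cup L_{a+1}$, not at $v$. This still gives at most three dipaths of length at most $h$ on every root-to-leaf path, which is all property 1 needs.
\end{itemize}
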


To prove \Cref{lem:lcdsf_layers_main}, we will decompose $G^*$ into a 
series of $3$-layered digraphs and show that at least one satisfies 
the necessary properties. We assume $G^*$ is weakly connected; else, 
this decomposition can be applied separately on each weakly connected component. 
Let $v_0$ be an arbitrary node in $V^*$. Let $L_0$ be the 
set of all nodes in $V^*$ that are reachable from $v_0$ (in $G^*$)
with a path of length at most $h$. We then define alternating layers
until all vertices have been covered by a layer. 

\[
  L_j = \begin{cases}
    \{u \in V^* \setminus \cup_{j' < j} L_{j'}: \ell(u, L_{j-1}) \leq \lc\} &j \text{ is odd} \\
    \{u \in V^* \setminus \cup_{j' < j} L_{j'}: \ell(L_{j-1}, u) \leq \lc\} &j \text{ is even} \\
  \end{cases}
\]

Let $q$ denote the total number of layers. For each layer $j \leq q-2$, we define $G_j$ 
to be the graph obtained from $G^*$ by contracting $\cup_{j' < j} L_{j'}$ into a new 
node $v_j$, and deleting $\cup_{j'' > j+2} L_{j''}$. 

\begin{claim}
\label{claim:lcdsf_3layers_path}
  For each demand pair $(s_i, t_i) \in D$, there exists $j \in \{0, \dots, q-2\}$ such that 
  $G_j \setminus \{v_j\}$ contains an $s_i$-$t_i$ path of length at most $\lc$. 
\end{claim}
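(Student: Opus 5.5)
Fix a demand pair $(s_i,t_i)\in D$ and let $p$ be an $s_i$-$t_i$ path in $G^*$ with $\ell(p)\le \lc$; such a path exists since $G^*$ is feasible. The plan is to show that every vertex of $p$ lies in at most three consecutive layers $L_j, L_{j+1}, L_{j+2}$. Once this holds, $p$ avoids every layer of index below $j$, so it survives the contraction of $\cup_{j'<j}L_{j'}$ into $v_j$ and does not pass through $v_j$. It also avoids every layer of index above $j+2$, so it survives those deletions. Hence $p$ is a path in $G_j\setminus\{v_j\}$ of the same length, at most $\lc$.

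To locate $p$, let $a$ be the minimum layer index among the vertices of $p$, attained at some vertex $u\in p$; note the layers partition $V^*$ since $G^*$ is weakly connected. I claim every vertex $w$ of $p$ lies in a layer of index at most $a+2$, and I would split by the parity of $a+1$.

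\textbf{Case $a+1$ odd.} Every vertex $w$ of $p$ before $u$ satisfies $\ell(w,u)\le \ell(p)\le\lc$, so $\ell(w,L_a)\le\lc$. By the definition of the odd layer $L_{a+1}$, $w$ lies in $L_{a+1}$ unless it was already placed in some layer of index at most $a$. Either way its index is at most $a+1$. Now pick any such layer index $\le a+1$ present among the vertices at or before $u$, and use the vertex $u'\in L_{a+1}\cup L_a$ closest to $t_i$ on the prefix. The vertices after $u$ are reachable from $u$ by a subpath of length at most $\lc$, so $\ell(L_a,w)\le\lc$. To place them I would use $L_{a+1}$ and $L_{a+2}$: vertices after $u$ reachable from a vertex of $L_{a+1}$ within length $\lc$ are in $L_{a+2}$ (an even layer) or earlier. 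The cleaner route is to choose $u$ as the \emph{first} vertex of $p$ in the minimum layer, or to argue directly on the suffix from the last vertex in $L_a\cup L_{a+1}$.

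\textbf{Case $a+1$ even.} This is symmetric. Vertices after $u$ are within length $\lc$ from $L_a$, so they have index at most $a+1$. Vertices before $u$ then reach the suffix region within length $\lc$, which places them in index at most $a+2$ through the next odd layer.

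Finally, set $j=\min(a,q-2)$. If $a>q-2$, all of $p$ lies in the last two layers, which is fine for $j=q-2$ (or the appropriate boundary index).

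The main obstacle is the case analysis in the middle step. A vertex at distance $\le \lc$ "the wrong way" from $L_a$ is not directly controlled. I expect the right argument is: let $x$ be the last vertex of $p$ in $L_a\cup L_{a+1}$ where the layer index is chosen so the next layer $a+2$ has the correct orientation. Then the suffix after $x$ is within $\lc$ of $L_{a+1}$ (or $L_a$) in the direction that defines $L_{a+2}$, so it is in index $\le a+2$. The prefix is handled by minimality of $a$ together with the one-step orientation bound. I would verify carefully that the orientations alternate correctly so that each bound is applied in the direction in which the corresponding layer is defined.
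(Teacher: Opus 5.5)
\textbf{There is a genuine gap, and under the displayed definition it cannot be closed.}

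Your plan has the same shape as the paper's proof: take the minimum layer index $a$ met by $p$, then use the orientation of the next layers. The step you flag as the main obstacle is where it breaks, and your sketch does not get past it.

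Take $a$ odd and let $u$ be the first vertex of $p$ in $L_a$. The vertices from $u$ on lie in $L_a\cup L_{a+1}$. But if none of them is in $L_{a+1}$ (e.g.\ $u=t_i$), an earlier vertex $w$ only satisfies $\ell(w,L_a)\le h$. With the layers as displayed, this places $w$ in no layer: $L_{a+1}$ needs $\ell(L_a,w)\le h$, and $L_{a+2}$ needs $\ell(w,L_{a+1})\le h$. The even case breaks symmetrically when no vertex of $p$ before its last vertex in $L_a$ lies in $L_{a+1}$.

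Your suggested repair, via ``the last vertex of $p$ in $L_a\cup L_{a+1}$ with the correct orientation'', presupposes that $p$ meets $L_{a+1}$ on the right side, which need not happen. Thorup's two-layer argument avoids this only through transitivity of reachability, and length bounds destroy transitivity.

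In fact, with the displayed definition the claim is false. Let $h\ge 2$ and take:
\begin{itemize}
\item vertices $v_0,u,w,y,z$;
\item edges $u\to v_0$ and $u\to z$ of length $h$, and edges $w\to u$, $y\to w$, $y\to z$ of length $1$;
\item unit costs, with the five edges as the demand pairs.
\end{itemize}
Each demand's only path of length at most $h$ is its own edge (the other $y$-$z$ path has length $h+2$). So $G^*$ is this entire planar, weakly connected graph. Starting from $v_0$, the layers are $L_0=\{v_0\}$, $L_1=\{u\}$ (note $\ell(w,v_0)=h+1$ and $\ell(y,v_0)=h+2$), $L_2=\{z\}$, $L_3=\{y\}$ (note $\ell(w,z)=h+1$), and $L_4=\{w\}$. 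The pair $(w,u)$, whose only path is the edge $w\to u$, has endpoints in $L_4$ and $L_1$. Hence no $G_j\setminus\{v_j\}$ contains a $w$-$u$ path.

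\textbf{How the paper's argument fares, and the missing idea.} The paper's one-line step, ``all of $P_i$ can reach $t_i\in\cup_{j'\le j+1}L_{j'}$, hence $P_i\subseteq\cup_{j'\le j+2}L_{j'}$'', fails on exactly this example ($j=1$, $t_i=u\in L_1$) under the displayed definition. It is correct if each layer is defined relative to \emph{all} earlier layers. That is, odd $L_j$ collects the uncovered $x$ with $\ell(x,\cup_{j'<j}L_{j'})\le h$, and even $L_j$ those with $\ell(\cup_{j'<j}L_{j'},x)\le h$. This is presumably the intended construction.

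It is also the missing idea in your argument. With it, every $w$ on $p$ before $u$ has $\ell(w,L_a)\le h$, and $L_a$ lies below whichever of $L_{a+1},L_{a+2}$ is odd, so $w\in\cup_{j'\le a+2}L_{j'}$. Symmetrically, every $w$ after $u$ is placed by whichever of the two is even. Minimality of $a$ then gives $p\subseteq L_a\cup L_{a+1}\cup L_{a+2}$, with no case analysis on where $s_i$ or $t_i$ lie.

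\textbf{A related error.} Your assertion that the layers partition $V^*$ because $G^*$ is weakly connected is false for the displayed definition. For $v_0\to x$ of length $h$ and $x\to y$ of length $1$, one gets $L_1=\emptyset$, and $y$ is never covered. It does hold for the union version: an optimal $G^*$ has no edge longer than $h$, so every neighbour of a covered vertex is covered within two further layers.
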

\begin{proof}
  Let $P_i$ be an $s_i$-$t_i$ path of length at most $\lc$ in $G^*$; such a path must 
  exist as $G^*$ is a feasible solution to the given LC-DSF instance. Let $j$ be the 
  minimum index such that $P_i \cap L_j \neq \emptyset$. We assume $j$ is odd; the 
  case where $j$ is even is similar. Let $u$ be in $P_i \cap L_j$. Then, since 
  $\ell(u, t_i) \leq \lc$, $t_i$ must be contained in $\cup_{j' \leq j+1} L_{j'}$.
  Since all of $P_i$ can reach $t_i$ with a path of length at most $\lc$, 
  $P_i \subseteq \cup_{j' \leq j+2} L_{j'}$.
  Therefore, since $L_j$ is the first layer that intersects with $P_i$, $P_i \subseteq 
  L_j \cup L_{j+1} \cup L_{j+2} \subseteq G_j \setminus \{v_j\}$.\footnote{We 
  point out a minor technicality -- this claim 
  is not true when $j = 0$. However, one can circumvent this issue by creating a 
  new dummy vertex as the root of $G_0$ and adding a $0$-length edge to $v_0$.} 
\end{proof}

\begin{remark}
  When considering planar-DSF \emph{without} length constraints,
  \cite{dsf_soda} shows that all $s_i$-$t_i$ paths are contained in at most 2 layers. 
  With length constraints, however, $3$ layers are needed; see 
  \Cref{fig:3-layer-eg} 
  for an example of a graph in which the $s$-$t$ path spans 3 layers. 
\end{remark}

\begin{figure}
  \centering
  \includegraphics[width=0.6\linewidth]{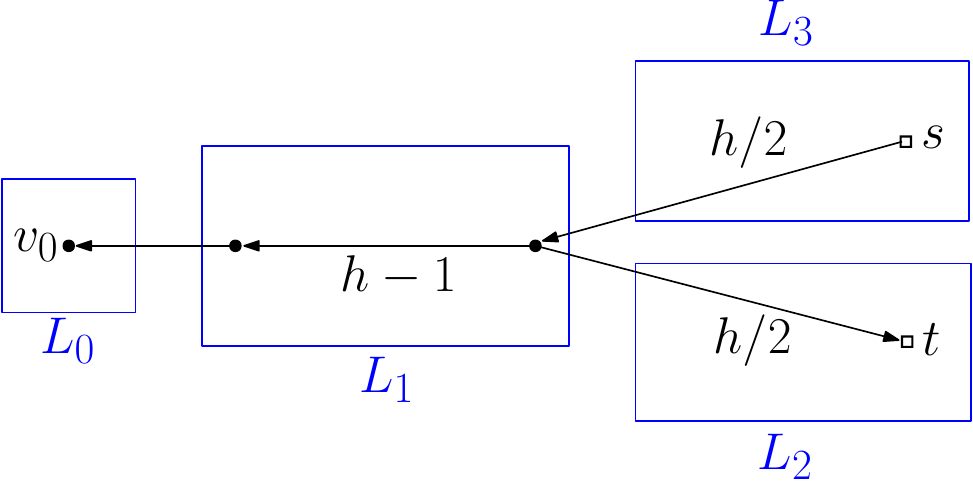}
  \caption{Example of graph in which $s$-$t$ path spans 3 layers, 
  $L_1, L_2, L_3$.}
  \label{fig:3-layer-eg}
\end{figure}

\begin{proof}[Proof of \Cref{lem:lcdsf_layers_main}]
  For each $j \in \{0, \dots, q-2\}$, let $D_j$ denote the set of terminal pairs 
  for which $G_j \setminus v_j$ contains an $s_i$-$t_i$ path of length at 
  most $\lc$. 
  We show that the first three properties required by \Cref{lem:lcdsf_layers_main} 
  hold for all $(G_j, D_j)$. The first 
  and third properties follow by construction: 
  each $G_j$ is an $\lc$-length-constrained $3$-layered digraph with root $v_j$, 
  and $G_j$ is constructed from $G^*$ via contraction (to form the root) and deletion. 
  The second property follows by definition of $D_j$. 
  
  To conclude, we show that there exists some $(G_j, D_j)$ that satisfies the fourth 
  property. By \Cref{claim:lcdsf_3layers_path}, $\cup_{j} D_j = D$. 
  Furthermore, it is not difficult to 
  see that $\sum_{j < q-1} c(G_j) \leq 3c(E^*)$, since each edge of $E^*$ 
  is either contained in one layer $L_j$, or goes between two layers $L_j$ and $L_{j+1}$ 
  -- in either case, it exists only in $G_j$, 
  $G_{j-1}$, and $G_{j-2}$. By an averaging argument, there must be some 
  $j \in [q-2]$ such that 
  \[\frac {c(G_j)}{|D_j|} \leq \frac{\sum_{j} c(G_j)}{\sum_{j} |D_j|}
  \leq \frac {3c(E^*)}{|D|}. \qedhere\]
\end{proof}

\subsubsection{Divide-and-conquer via planar separator}
\label{sec:dsf_separator}

Next, we use the fact that planar digraphs have good separators 
to reduce to the so-called ``one-path'' case. We prove the following 
lemma. 

\begin{lemma}
\label{lem:lcdsf_separator_main}
  Let $G^* = (V^*, E^*)$ be a planar $h$-length-constrained $3$-layered 
  digraph with root $v$, 
  and let $D \subseteq V^* \times V^*$ be a 
  set of demand pairs such that for each 
  $(s_i, t_i) \in D$, $G^* \setminus \{v\}$ has an $s_i$-$t_i$ path 
  of length at most $\lc$. Then, there exists a subgraph $H \subseteq G^* \setminus \{v\}$, 
  an $\lc$-length path $P \subseteq H$, and subset of terminals 
  $D_H \subseteq D$ satisfying the following.
  \begin{enumerate}
    \item For each $(s_i, t_i) \in D_H$, there exists an $\lc$-length 
    $s_i$-$t_i$ path in $H$ that intersects $P$;
    \item $\frac{c(E(H))}{|D_H|} \leq O(\log |D|) \frac{c(E^*)}{|D|}$. 
  \end{enumerate}
\end{lemma}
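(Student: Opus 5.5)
We prove \Cref{lem:lcdsf_separator_main} by a divide-and-conquer argument over the separator of \Cref{lem:undir_sep}, following the corresponding argument of \cite{dsf_soda}. We track a family of disjoint pieces of $G^*\setminus\{v\}$, each carrying a set of surviving demand pairs. We then show that some separator path, together with the piece it lives in, achieves the required density.

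\textbf{Separator and paths.} Let $T$ be the $\lc$-length-constrained $3$-layered spanning tree of $G^*$ rooted at $v$. Apply \Cref{lem:undir_sep} to the undirected version of $G^*$ with tree $T$, putting weight $1$ on each terminal vertex (or on each demand pair, via an endpoint). This gives three tree paths $P_T(v,u_1)$, $P_T(v,u_2)$, $P_T(v,u_3)$.
\begin{itemize}
\item Each tree path is a concatenation of at most $3$ dipaths of length at most $\lc$. Removing $v$, the separator $X$ therefore consists of at most $9$ directed paths, each of length at most $\lc$.
\item Every component of $G^*\setminus X$ carries at most half the terminal weight.
\end{itemize}

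\textbf{One level of recursion.} Sort each pair $(s_i,t_i)$ by its witness path $Q_i$, an $\lc$-length $s_i$-$t_i$ path in $G^*\setminus\{v\}$.
\begin{itemize}
\item If $Q_i$ intersects $X$, then $Q_i$ meets one of the $9$ dipaths.
\item Otherwise $Q_i$ lies entirely inside one component $C$ of $G^*\setminus(X\cup\{v\})$. We recurse on $G^*[C\cup\{v\}]$, which is again a planar $\lc$-length-constrained $3$-layered digraph rooted at $v$: contract $X$ into $v$, as in $\algsep$, so that the tree structure is preserved. The recursion keeps only the pairs whose witness path lies in $C$.
\end{itemize}

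\textbf{Charging argument.} Since the terminal weight halves at each level, the recursion depth is $O(\log|D|)$. Distinct components at one level are edge-disjoint except for the contracted root, so their total cost is at most $c(E^*)$. Every pair is eventually captured by some separator at some level; at the base case a single remaining pair trivially has its own path as $P$. Summing over all levels, the total cost of all pieces is at most $O(\log|D|)\,c(E^*)$, and all $|D|$ pairs are covered.

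Each captured pair is assigned to one of the at most $9$ dipaths of its level's separator. Take $H$ to be the current piece minus $v$, and $P$ to be the chosen dipath. Across the whole recursion, the (piece, path) candidates have total cost at most $9\cdot O(\log|D|)\,c(E^*)$ and together cover all of $D$. By averaging, some candidate satisfies
\[
  \frac{c(E(H))}{|D_H|} \le O(\log|D|)\,\frac{c(E^*)}{|D|}.
\]
For every $(s_i,t_i)\in D_H$, its witness $Q_i$ is an $\lc$-length path inside $H$ that intersects $P$, which is property 1.

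\textbf{Main obstacle.} The delicate step is verifying that the recursive pieces remain valid inputs: after contracting $X$ into $v$, each piece must still be planar and $3$-layered. The spanning tree restricted to $C$ together with $v$ must still consist of root paths made of at most $3$ $\lc$-length dipaths. I would argue this as in \cite{dsf_soda}: for $x\in C$, the tree path from $v$ to $x$ leaves $X$ at a last vertex, and its suffix inside $C$ is a subpath of a $3$-dipath concatenation.

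A related point is that $H\subseteq G^*\setminus\{v\}$ is needed as an actual subgraph. Since the witness paths avoid $v$ and lie in $C\cup X$, I would take $H=G^*[C\cup X]\setminus\{v\}$ for the piece $C$ at the relevant level. The cost of these pieces is still bounded per level, because the separators contribute only paths already charged within the parent piece.
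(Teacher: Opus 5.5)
Your proposal is correct and follows essentially the same argument as the paper: the same recursive separator decomposition into at most $9$ dipaths of length at most $\lc$ per piece, the same use of disjointness of pieces within a level, and the same choice of $H = G^*[C]$ for the piece $C$ whose separator captures the pairs (since that separator lies in $C$, your $G^*[C\cup X]$ is just $G^*[C]$). The only difference is in the averaging: the paper first fixes a level $j^*$ that captures at least $|D|/(\lceil\log |D|\rceil+2)$ pairs and then averages over that level's components and their $9$ dipaths, whereas you average once over all (piece, dipath) candidates across all levels; both give the same $O(\log|D|)$ bound.
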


The proof follows the same argument as that of \cite{dsf_soda}; 
we rewrite it here for completeness and borrow the terminology. 
Fix an instance $(G, D)$ of planar-DSF and a feasible solution $E^*$ satisfying 
the conditions outlined in the statement of \Cref{lem:lcdsf_separator_main} above. Let 
$T^* \subseteq E^*$ be an 
$h$-length-constrained $3$-layered spanning tree of $G^*$. We will follow a
recursive process to partition $D$ into subsets on which we build junction trees.

\begin{itemize}
  \item Set vertex weights as $w(s_i) = w(t_i) = 1$ for all 
  $(s_i, t_i) \in D$, and $0$ otherwise.
  \item Consider the undirected version of $E^*$ and apply \Cref{lem:undir_sep}
  on the undirected version of spanning tree $T^*$ with vertex weights $w$. 
  From this, we obtain $u_1, u_2, u_3$.
  \item Since $T^*$ is an $h$-length-constrained $3$-layered tree, 
  each (undirected) path $P_{T^*}(v, u_i)$ consists of at most $3$ 
  dipaths, each of length at most $h$. Remove the root $v$ and let 
  $Q_i^j$ for $i, j \leq 3$ denote the $3$ dipaths of 
  $P_{T^*}(v, u_i) \setminus \{v\}$. Let $S_0 = \cup_{i, j \leq 3} Q_i^j$ 
  denote this collection of at most $9$ dipaths, we call this the 
  \emph{separator}.
  \item Define $D_0 \subseteq D$ to be the set
  of all terminal pairs $(s_i, t_i)$ such that $E^*$ contains an $h$-length 
  $s_i$-$t_i$ path that intersects
  one of the dipaths in $S_0$. Equivalently, $(s_i, t_i) \in D_0$ iff
  there exists an $h$-length $s_i$-$t_i$ path $P_i \subseteq E^*$ such that
  $V(P_i) \cap V(S_0) \neq \emptyset$.
  We say these terminal pairs are \emph{separated.}
  \item Let $\calC_1$ be the set of weakly connected
  components of $G \setminus (\cup_{i \in [3]} P_{T^*}(v, u_i))$; we 
  drop ``weakly connected'' and simply refer to these
  as ``components'' in the remainder of this section. Note that
  each $C \in \calC_1$ has at most half the total number of terminals.
\end{itemize}

We then follow a recursive process, similar to \algsep\ described 
in \Cref{sec:dst}. For each $C \in \calC_1$, 
we contract $S_0$ into $v$ 
and recurse on the sub-instance consisting of 
$G^*[C \cup \{v\}]$ with terminal set 
$\{(s_i, t_i) \in D: s_i, t_i \in C\} \setminus D_0$. 
It is easy to verify that $G^*[C \cup \{v\}]$ is an $h$-length-constrained 
$3$-layered digraph 
with root $v$. Furthermore, each $(s_i, t_i)$ pair in the new 
terminal set has an $h$-length-bounded path that does not intersect $v$; 
else, $(s_i, t_i)$ would have been included in $D_0$. 
Therefore, we can apply the same process as above, and repeat 
this recursive process until each component has at most one terminal.
Since the number of terminals halves
at each step, the recursion depth is at most $\lceil \log 2\nt
\rceil = \lceil \log \nt \rceil + 1$.

We use the following notation. For an instance defined on component 
$C$ at recursive depth $j$, we let $S_j^C, D_j^C$ denote the separator 
and corresponding separated terminal pairs. We let 
$\calC_j$ denote the set of all components at recursive depth $j$ 
(in particular, $\calC_0 = \{G^*$\}), and denote by $S_j := \cup_{C \in \calC_{j}} S_j^C$
and $D_j := \cup_{C \in \calC_j} D_j^C$. The following claim is simple.

\begin{claim}
\label{claim:all-terminals-sep}
  $D \subseteq \cup_{j = 0}^{\lceil \log \nt \rceil + 1} D_j$.
\end{claim}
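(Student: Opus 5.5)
The plan is to show that every demand pair in $D$ is eventually placed in some $D_j^C$. Fix a pair $(s_i,t_i)\in D$ and follow it down the recursion tree. We maintain the invariant that at depth $j$ either (a) the pair was already separated at some earlier depth $j'<j$, or (b) there is a unique component $C\in\calC_j$ with $s_i,t_i\in C$. Moreover, in case (b), $G^*[C\cup\{v\}]$ contains an $h$-length $s_i$-$t_i$ path avoiding the (contracted) root. At depth $0$ the invariant holds by hypothesis: $C = G^*$, and the assumption of \Cref{lem:lcdsf_separator_main} supplies an $h$-length path in $G^*\setminus\{v\}$.

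For the inductive step, suppose case (b) holds at depth $j$, with path $P_i$ in component $C$, and suppose $(s_i,t_i)\notin D_j^C$. Then $P_i$ avoids every dipath of $S_j^C$. Since $P_i$ also avoids $v$, it avoids all of $\cup_{r} P_{T^*}(v,u_r)$, because these tree paths consist exactly of $v$ together with the dipaths in $S_j^C$. Hence $P_i$ is a connected subgraph of $G^*[C]\setminus(\cup_r P_{T^*}(v,u_r))$ and so lies inside a single component $C'\in\calC_{j+1}$. In particular $s_i,t_i\in C'$, and the pair lies in the terminal set of the recursive subinstance on $C'$. Contracting $S_j^C$ into $v$ does not touch $P_i$, so $P_i$ survives as an $h$-length path in $G^*[C'\cup\{v\}]$ avoiding $v$. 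This re-establishes the invariant at depth $j+1$.

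Termination is where the depth bound matters. Each component at depth $j$ carries at most $2\nt/2^j$ terminal weight, so at depth $\lceil\log 2\nt\rceil$ every component has weight at most $1$. Such a component cannot contain both endpoints of a pair still in case (b), since $s_i$ and $t_i$ together contribute weight $2$. There is a corner case when $s_i=t_i$, but then the path is trivial and the pair is separated as soon as that vertex lies on a separator; if needed we can simply discard such pairs. So case (b) cannot persist through all $\lceil\log\nt\rceil+2$ levels, and the pair must fall into case (a) at some depth $j\le\lceil\log\nt\rceil+1$, giving $(s_i,t_i)\in D_j$.

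The main point requiring care is the step where a path avoiding the separator dipaths also avoids the full tree paths $P_{T^*}(v,u_r)$. This relies on the root $v$ being excluded from the path, which is exactly why the invariant keeps a root-avoiding path at each level. The rest is bookkeeping about the recursion.
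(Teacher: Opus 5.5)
Your proof is correct and follows essentially the same route as the paper: fix a root-avoiding $h$-length $s_i$-$t_i$ path and follow it down the recursion until the first level where it meets a separator. At that level the path lies entirely inside one component, so the pair is placed in $D_j^C$ there. Your explicit invariant (including case (a), for pairs already removed earlier via some other path) and your weight-halving termination bound just spell out bookkeeping that the paper's short proof leaves implicit.
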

\begin{proof}
  Fix $(s_i, t_i) \in D$ and let $P_i$ be an $h$-length $s_i$-$t_i$ path in $E^* \setminus \{v\}$. 
  Let $j$ be
  the first recursive level such that $P_i$ intersects $S_j$; such a level
  must exist since by the last step of recursion, $s_i$ and $t_i$ are in different
  components. Let $S_j^C \subseteq S_j$ be the first 
  separator that $P_i$ intersects. Then $P_i \subseteq C$; else $P_i$ would have 
  intersected an earlier separator. Thus $(s_i, t_i) \in D_j^C \subseteq D_j$.
\end{proof}

\begin{corollary}
\label{cor:good-recursion-level}
  There exists a level of recursion $j^* \in \{0, \dots, \lceil \log \nt \rceil + 1\}$
  such that $|D_{j^*}| \geq \frac \nt {\lceil \log \nt \rceil + 2}$.
\end{corollary}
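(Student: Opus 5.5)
This is a direct averaging (pigeonhole) argument applied to \Cref{claim:all-terminals-sep}. The plan is to bound $|D|$ by the sum of the $|D_j|$ over all recursion levels and then observe that the largest summand must be at least the average.

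First, I count the recursion levels. They are indexed by $j \in \{0, 1, \dots, \lceil \log \nt \rceil + 1\}$, so there are exactly $\lceil \log \nt \rceil + 2$ of them.

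Second, I apply \Cref{claim:all-terminals-sep}, which gives $D \subseteq \bigcup_{j=0}^{\lceil \log \nt \rceil + 1} D_j$. By the union bound,
\[
  \nt = |D| \leq \sum_{j=0}^{\lceil \log \nt \rceil + 1} |D_j| .
\]
I do not need the sets $D_j$ to be disjoint; I only need that they cover $D$. They are in fact essentially disjoint, since a separated pair is removed from the terminal set before the next level of recursion, but the argument does not rely on this.

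Third, a sum of $\lceil \log \nt \rceil + 2$ nonnegative integers that is at least $\nt$ must have some term at least $\nt/(\lceil \log \nt \rceil + 2)$. Choosing $j^*$ to be an index attaining the maximum of $|D_j|$ gives $|D_{j^*}| \geq \nt/(\lceil \log \nt \rceil + 2)$, which is the statement.

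There is no real obstacle here. The only point that needs care is confirming that the index range in \Cref{claim:all-terminals-sep} matches the range in the corollary, so that the denominator counts the levels correctly.
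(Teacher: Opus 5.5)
Your proposal is correct and is the intended argument: the paper gives no separate proof, treating the corollary as an immediate averaging consequence of Claim~\ref{claim:all-terminals-sep} over the $\lceil \log k \rceil + 2$ recursion levels. You are also right that disjointness of the $D_j$ is not needed, since only the covering $D \subseteq \bigcup_j D_j$ is used.
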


\begin{proof}[Proof of \Cref{lem:lcdsf_separator_main}]
  Let $j^*$ be the recursion level given by
  \Cref{cor:good-recursion-level} such that
  $|D_{j^*}| \geq \frac \nt {\lceil \log \nt \rceil + 2}$.
  Since components of $\calC_{j^*}$ are disjoint, 
  $\sum_{C \in \calC_{j^*}} c(E(C)) \leq c(E^*)$ and $\{D_{j^*}^C\}_{C \in \calC_{j^*}}$ 
  form a partition of $D_{j^*}$. Using an averaging argument, there exists 
  a component $C$ such that $c(E(C))/|D_{j^*}^C| \leq c(E^*)/|D_{j^*}|$. 
  Furthermore, since $S_{j^*}^C$ consists of at most $9$ dipaths, there exists 
  at least one dipath $Q^*$ such that at least $\frac 1 9$ terminal pairs of 
  $D_{j^*}^C$ intersect $Q^*$; we denote this set of terminals by $D^*$. 

  We claim $C, Q^*, D^*$ satisfy the required properties of the lemma statement. 
  The first follows by construction; for all terminal pairs $(s_i, t_i) \in D^*$,
  there exists an $h$-length $s_i$-$t_i$ path in $C$ that intersects $Q^*$. 
  For the second, 
  $
    \frac{c(E(C))}{|D^*|} \leq 9 \cdot \frac{c(E(C))}{|D_{j^*}^C|}
    \leq 9 \cdot \frac{c(E^*)}{|D_{j^*}|} 
    \leq O(\log k) \frac{c(E^*)}{|D|}.
  $
\end{proof}

\subsubsection{One Path Case}
\label{sec:one_path}

To conclude, we show that in the ``one-path'' setting, there always exists a
low-density junction tree. 
This subsection contains the core technical contribution over 
\cite{dsf_soda}, since previously used techniques for handling the one-path 
setting do not extend to length-constraints. 

\begin{lemma}
\label{lem:lcdsf_one_path_main}
  Let $G^* = (V^*, E^*)$ be a planar subgraph that contains an $h$-length 
  path $P \subseteq G^*$, and let $D \subseteq V^* \times V^*$ 
  be a set of demand pairs such that for each $(s_i, t_i) \in D$,
  there exists an $h$-length $s_i$-$t_i$ path in $G^*$ that intersects $P$. 
  Then, $G^*$ contains a $3\lc$-length-constrained junction tree of density 
  $O(\log \nt) c(E^*)/\nt$, where $\nt = |D|$. 
\end{lemma}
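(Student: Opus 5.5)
Write $P = (p_1, p_2, \dots, p_m)$ in path order. For each $(s_i,t_i) \in D$ fix an $h$-length $s_i$-$t_i$ path $P_i \subseteq G^*$ meeting $P$. Let $a_i$ be the first vertex of $P$ on $P_i$ and $b_i$ the last, and write their positions along $P$ as $\alpha_i \le \beta_i$; if the last vertex sits earlier on $P$ than the first, take $b_i=a_i$ instead. Then $P_i[s_i,a_i]$ is an $s_i$-to-$P$ path of length at most $h$, and $P_i[b_i,t_i]$ is a $P$-to-$t_i$ path of length at most $h$.

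A single pair admits a trivial junction tree rooted at $a_i$: use the segment $P_i[s_i,a_i]$ into the root and $P_i[a_i,t_i]$ out of it, both of length at most $h$. The difficulty is choosing one common root for many pairs. I would root the junction tree at a vertex $p_r$ of $P$. I would route a pair $i$ with $\alpha_i \le r$ as $s_i \to a_i$ along $P_i$, then $a_i \to p_r$ along $P$, then $p_r \to t_i$. The middle segment has length at most $\ell(P) \le h$, so the in-path has length at most $2h$. For the out-path I need $p_r$ to reach $t_i$. If $r \le \beta_i$, I would use $P[p_r,b_i]$ followed by $P_i[b_i,t_i]$, of length at most $2h \le 3h$. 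So the pairs that can be served at root $p_r$ are those whose interval $[\alpha_i,\beta_i]$ contains $r$.

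Using the subpath $P[a_i,b_i]$ in place of $P_i[a_i,b_i]$ is legitimate here because $\ell(P)\le h$. This is exactly where the $3h$ slack (indeed $2h$) comes from, and it is the point where length constraints break the argument of \cite{dsf_soda}. That argument rerouted along $P$ freely; here we rely on $P$ itself being short.

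The main obstacle is cost. Building the tree at $p_r$ uses edges of $G^*$ that may be shared across many pairs, and we must show that some root serves many pairs at low total cost. The cost is fine as long as the tree is a subgraph of $G^*$: its cost is at most $c(E^*)$. So it suffices to find a root $r$ whose interval stabbing number is large, namely $\Omega(k/\log k)$ pairs. In general that may fail, since intervals can be pairwise disjoint. Therefore I would do a divide-and-conquer along $P$. Take the median position $r$ of the interval endpoints; the intervals containing $r$ form a candidate group. Recurse on the intervals lying entirely left of $r$ and, separately, on those entirely right of $r$, using the disjoint subpaths of $P$ and the corresponding portions of $G^*$.

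The catch is that the left and right subproblems share $G^*$'s off-path edges. I expect this to be the hardest part. I would handle it via planarity and non-crossing: left-side intervals' paths $P_i$ with $i$ on the left are restricted to attaching to $P$ on the left, but their off-path portions may overlap with others. As a fallback, charge edges rather than require disjointness. Over the $O(\log k)$ recursion levels, pick the level where the stabbed pairs number at least $k/O(\log k)$. Within that level, the groups are stabbed at distinct roots, and I would argue that each edge of $E^*$ is needed by $O(1)$ groups. If that holds, averaging over groups gives one group with density $O(\log k)\,c(E^*)/k$.

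If the overlap counting fails, I would instead use the full $G^*$ for every group and accept density $c(E^*)/(\text{group size})$. I would then try to ensure a single root stabs $\Omega(k/\log k)$ intervals by a more careful choice of the $P_i$ (for instance, choosing $P_i$ to meet $P$ maximally, so the intervals are long).
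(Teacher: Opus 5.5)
\textbf{There is a genuine gap.} The skeleton matches the paper: roots on $P$, divide and conquer on intervals along $P$, $O(\log k)$ levels, then averaging. But the step you flag as hardest (off-path edges shared by the left and right subproblems) is left unresolved, and with your choice of intervals it is false.

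You take $[\alpha_i,\beta_i]$ to be the first and last contacts of a \emph{fixed} path $P_i$ with $P$. Consider $P=p_1\to\dots\to p_k$ and one edge $(u,w)$ of cost $C$. Add zero-cost, zero-length edges $s_i\to u$, $w\to p_j$ for all $j$, and $p_i\to t_i$. This graph is planar (a fan plus pendant stars), and $P_i=s_i,u,w,p_i,t_i$ is a valid $h$-length path meeting $P$. Your intervals are the distinct points $[i,i]$. So every group your recursion forms contains one pair, and at depth $d$ the edge $(u,w)$ is needed by about $2^d$ groups. Every junction tree you build then has density about $C\approx c(E^*)$, not $O(\log k)\,c(E^*)/k$, even though a single tree rooted at $p_1$ serves all pairs at cost about $C$.

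Hence neither the ``$O(1)$ groups per edge'' charging nor planarity or non-crossing rescues the argument. Your last fallback, one root stabbing $\Omega(k/\log k)$ intervals, also fails in general. If each pair is a private gadget $s_i\to p_i\to t_i$, no junction tree serves more than one pair, and the lemma holds only by averaging over many roots.

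\textbf{The missing idea} is to widen the intervals using reachability rather than a fixed path. Let $a_i$ be the first vertex of $P$ that $s_i$ reaches by a path of length at most $2h$, and $b_i$ the last vertex of $P$ that reaches $t_i$ within length $2h$. Write $a_i',b_i'$ for your contact points, $P_{s_i}=P_i[s_i,a_i']$ and $P_{t_i}=P_i[b_i',t_i]$.
\begin{itemize}
\item $a_i\le_P a_i'\le_P b_i'\le_P b_i$.
\item A root $\mu$ in $[a_i,b_i]$ gives $s_i$-$\mu$ and $\mu$-$t_i$ paths of length at most $2h+h=3h$. This is where the $3h$ comes from; your $2h$ bound reflects intervals that are too narrow.
\item Split at $\mu$ into $L=\{i: b_i<_P\mu\}$ and $R=\{i: a_i>_P\mu\}$. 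Recurse on the part of $P$ before $\mu$ together with $\bigcup_{i\in L}(P_{s_i}\cup P_{t_i})$, and symmetrically after $\mu$ with $R$.
\item The off-path parts of the two sides are edge-disjoint by an exchange argument. Suppose an edge $(u,w)$ lies on $P_{s_i}$ with $i\in L$ and on $P_{s_j}$ with $j\in R$. Then $P_{s_j}[s_j,u]\circ P_{s_i}[u,a_i']$ has length at most $2h$ and ends at $a_i'<_P\mu<_P a_j$, contradicting the choice of $a_j$. The $t$-sides are symmetric. The $2h$ budget is exactly what makes this concatenation admissible.
\item Treat the $P$-, $s$- and $t$-occurrences of each edge as separate copies. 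Then each edge lies in at most one subproblem per level, so all junction trees together cost $O(\log k)\,c(E^*)$ while covering all $k$ pairs, and averaging finishes the proof.
\end{itemize}

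Your handling of the case where the last contact precedes the first, by setting $b_i=a_i$, is a correct detail that the paper glosses over.
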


Let $v_1, \dots, v_{|P|}$ denote the vertices of $P$. 
For each terminal pair $(s_i, t_i) \in D$, the lemma assumption 
guarantees an $\lc$-length $s_i$-$t_i$ path in $G^*$ that intersects 
$P$. Let $a_i'$ denote the first vertex of $P$ on this path, 
and let $b_i'$ denote the last vertex of $P$ on this path; 
note that $a_i' \leq_P b_i'$. Let $P_{s_i}$ be the subpath 
from $s_i$ to $a_i'$, and let $P_{t_i}$ be the subpath from 
$b_i'$ to $t_i$; both have length at most $\lc$.
We can assume without loss of generality that $E^* = 
P \cup (\cup_{i \in [\nt]} (P_{s_i} \cup P_{t_i}))$, since this 
satisfies the lemma assumptions. 
The main challenge with length constraints in the one-path setting 
is in handling edges of $E^* \setminus P$. In particular, 
the argument in \cite{dsf_soda} relies on the fact that 
for any $i, i' \in [\nt]$, if
$P_{s_i} \cap P_{s_{i'}} \neq \emptyset$, then $a_i' = a_{i'}'$.
This no longer holds, as 
demonstrated by \Cref{fig:lc-one-path}.

\begin{figure}
  \centering
  \includegraphics[width=0.4\linewidth]{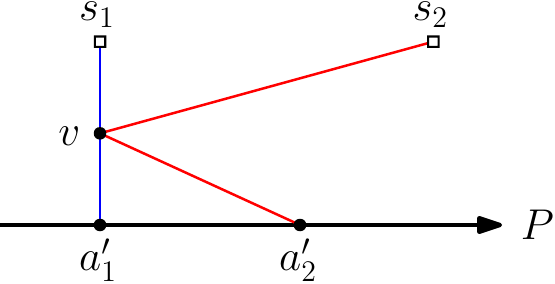}
  \caption{Suppose $\ell(s_1, v) = 1, \ell(v, a_1') = \lc-1, 
  \ell(s_2, v) = \lc-1, \ell(v, a_2') = 1$. The argument in 
  \cite{dsf_soda} would re-route $s_2$ to 
  $P$ via the path from $v$ to $a_1'$; however, this would increase 
  the length of the path from $s_2$ to $P$ to $2\lc-2$. In general, 
  merging arbitrarily many terminals would make these paths to 
  $P$ arbitrarily long.}
  \label{fig:lc-one-path}
\end{figure}

We adopt a different perspective. For each $(s_i, t_i) \in D$, 
let $a_i$
denote the first node in $P$ that $s_i$ can reach within 
a path of length \emph{at most $2\lc$}, and let 
$b_i$ denote the last node in $P$ that can reach $t_i$
within length at most $2\lc$. 
We use the following simple claim.
\begin{claim}
  \label{claim:lcdsf_path_order}
  For all $i \in [\nt]$, $a_i \leq_P a_i' \leq_P b_i' \leq_P b_i$.
\end{claim}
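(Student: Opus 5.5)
The claim splits into three inequalities, each following from the definitions and the length bounds on $P_{s_i}$ and $P_{t_i}$. Fix $i \in [\nt]$. The middle inequality $a_i' \leq_P b_i'$ holds by construction: we chose an $\lc$-length $s_i$-$t_i$ path meeting $P$, and $a_i'$ and $b_i'$ are the first and last vertices of $P$ on it. I also take "$\leq_P$" to mean the order of vertices along the directed path $P$ (from $v_1$ to $v_{|P|}$), as the paper already does for $a_i' \leq_P b_i'$. On that reading, the "first vertex in $P$ that $s_i$ can reach" is the $\leq_P$-minimum of the set of reachable vertices of $P$, and similarly $b_i$ is the $\leq_P$-maximum.

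For $a_i \leq_P a_i'$, I will show that $a_i'$ is a candidate in the definition of $a_i$. The subpath $P_{s_i}$ goes from $s_i$ to $a_i'$ and has length at most $\ell$ of the whole $s_i$-$t_i$ path, which is at most $\lc \leq 2\lc$. So $s_i$ reaches $a_i'$ within length $2\lc$. The set of vertices of $P$ that $s_i$ reaches within length $2\lc$ is therefore nonempty, so $a_i$ is well defined, and it contains $a_i'$. Since $a_i$ is the $\leq_P$-first element of this set, $a_i \leq_P a_i'$.

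Symmetrically, $P_{t_i}$ is a $b_i'$-$t_i$ path of length at most $\lc \leq 2\lc$. Hence $b_i'$ lies in the set of vertices of $P$ that reach $t_i$ within length $2\lc$. Because $b_i$ is the $\leq_P$-last element of this set, $b_i' \leq_P b_i$. Chaining the three inequalities gives the claim.

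I expect essentially no obstacle. The only point needing care is interpreting the words "first" and "last" in the definitions of $a_i$ and $b_i$ via $\leq_P$, rather than via some order along a path from $s_i$. One could also use the slack $2\lc$ to absorb an extra subpath of $P$: for instance, $s_i$ reaches any vertex after $a_i'$ along $P$ within length $\lc + \ell(P) \leq 2\lc$. This slack is not needed for the claim itself, and is presumably what the later argument exploits.
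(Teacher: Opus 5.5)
Your proof is correct and is essentially the paper's own argument. The paper also takes $a_i' \leq_P b_i'$ directly from how $a_i'$ and $b_i'$ are defined. It gets the outer two inequalities the same way you do: $P_{s_i}$ and $P_{t_i}$ have length at most $\lc \le 2\lc$, so $a_i'$ lies in the set whose $\leq_P$-minimum defines $a_i$, and $b_i'$ lies in the set whose $\leq_P$-maximum defines $b_i$.

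One fine point applies equally to the paper. $a_i' \leq_P b_i'$ is automatic only if ``first/last vertex of $P$ on the path'' is read with respect to $\leq_P$, as your convention suggests. If it is read in order along the $s_i$-$t_i$ path, it can fail. Even then, your argument still gives $a_i \leq_P w \leq_P b_i$ for every vertex $w$ of $P$ on that path.
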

\begin{proof}
  By definition, $a_i' \leq_P b_i'$. For the other two inequalities,
  $a_i \leq_P a_i'$, since $a_i$ is the \emph{first} node on 
  $P$ that $s_i$ can reach within length $2\lc$, and $P_{s_i}$ is an 
  $s$-$P$ path of length $\lc$. Similarly, $b_i' \leq_P b_i$, 
  since $b_i$ is the 
  \emph{last} node on $P$ that can reach $t_i$ within length $2\lc$, and $P_{t_i}$ 
  is a path from $P$ to $t_i$ of length $\lc$. 
\end{proof}

Let $\mu$ be the midpoint of 
the path $P$ (i.e., $\mu = v_{\lfloor|P|/2\rfloor}$). 
We partition $D$ into three sets: 
\[
\begin{cases}
L = \{(s_i, t_i): a_i \leq_P b_i <_P \mu\}\\
R = \{(s_i, t_i): b_i \geq_P a_i  >_P \mu\} \\
M = \{(s_i, t_i): a_i \leq_P \mu \leq_P b_i\}
\end{cases}
\]

\begin{claim}
\label{claim:lcdsf_onepath_junction}
  $E^*$ contains a $3\lc$-length-constrained junction tree 
  with root $\mu$ on terminals $M$.  
\end{claim}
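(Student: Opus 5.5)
The junction tree is simply the whole graph $E^*$ with root $\mu$. By the definition of an $\lc$-length-constrained junction tree, it suffices to show that for every $(s_i,t_i) \in M$, $E^*$ contains an $s_i$-$\mu$ path of length at most $3\lc$ and a $\mu$-$t_i$ path of length at most $3\lc$. The density bound is not part of this claim, so cost plays no role here; the subgraph is $E^*$ itself, or the union of the two paths per pair if we want it minimal.

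\textbf{Source side.} Fix $(s_i,t_i)\in M$, so $a_i \leq_P \mu$. By definition of $a_i$, there is a path of length at most $2\lc$ from $s_i$ to $a_i$. We need to check that this path lives in $E^*$. Here I read the definition of $a_i$ as reachability within $E^*$, i.e.\ lengths are measured via $\ell_{E^*}$, which is the natural convention given the earlier notation $\ell_F$. Since $a_i \leq_P \mu$, the dipath $P$ contains the subpath $P[a_i,\mu]$, whose length is at most $\ell(P) \leq \lc$. Concatenating gives an $s_i$-$\mu$ walk in $E^*$ of length at most $2\lc+\lc = 3\lc$, and shortcutting repeated vertices yields a path of no greater length.

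\textbf{Sink side.} Symmetrically, $\mu \leq_P b_i$, so the subpath $P[\mu,b_i]$ has length at most $\lc$. By definition of $b_i$, there is a $b_i$-$t_i$ path of length at most $2\lc$. Concatenating gives a $\mu$-$t_i$ path of length at most $3\lc$.

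\textbf{Main subtlety.} The only real point to verify is that $a_i$ and $b_i$ are well defined and that the witnessing $2\lc$-paths lie in $E^*$. Well-definedness follows from Claim~\ref{claim:lcdsf_path_order}: $a_i'$ is itself reachable from $s_i$ within length $\lc \leq 2\lc$ via $P_{s_i} \subseteq E^*$, so the set of candidate vertices for $a_i$ is nonempty, and similarly for $b_i$ via $b_i'$ and $P_{t_i}$. Membership in $E^*$ holds because the definitions are taken with respect to $G^*=(V^*,E^*)$, the ambient graph of Lemma~\ref{lem:lcdsf_one_path_main}. We also use that $P$ is a dipath oriented from $v_1$ to $v_{|P|}$ with $\ell(P)\le\lc$, so every forward subpath of $P$ has length at most $\lc$.
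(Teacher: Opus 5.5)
Your proof is correct and matches the paper's argument: concatenate the path of length at most $2\lc$ from $s_i$ to $a_i$ with the subpath $P[a_i,\mu]$ of length at most $\lc$, and symmetrically concatenate $P[\mu,b_i]$ with the path of length at most $2\lc$ from $b_i$ to $t_i$. Your extra checks are details the paper leaves implicit: that $a_i,b_i$ are well defined via $a_i',b_i'$, and that all witnessing paths lie in $E^*$.
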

\begin{proof}
  It suffices to show that for each $(s_i, t_i) \in M$, there exist $3\lc$-length 
  paths in $E^*$ from $s_i$ to $\mu$ and from $\mu$ to $t_i$. By definition, 
  $\ell(s_i, a_i) \leq 2\lc$. Furthermore, since $a_i \leq_P \mu$ and 
  $P$ is an $\lc$-length path, $\ell(a_i, \mu) \leq \lc$. Therefore, 
  $\ell(s_i, \mu) \leq 3\lc$. The argument for $t_i$ is similar. 
\end{proof}

We follow a recursive process. First, we construct the junction tree $H_\mu$ 
with root $\mu$ given by \Cref{claim:lcdsf_onepath_junction}. 
We conflate indices with vertices, and let $(\mu-1)$ and $(\mu+1)$ denote the 
vertices directly before and after $\mu$ on $P$. At a high level, the 
idea is then to recurse on the sub-paths $P[0, \mu-1]$, $P[\mu+1, v_{|P|}]$, 
with corresponding paths in $E \setminus P$ to connect terminals. 
Formally, let $E_L = P[0, \mu-1] \cup \left(\cup_{(s_i, t_i) \in L} (P_{s_i} \cup P_{t_i})\right)$,
and $E_R = P[\mu+1, v_{|P|}] \cup \left(\cup_{(s_i, t_i) \in R} (P_{s_i} \cup P_{t_i})\right)$.
We recurse on the edge-induced subgraph
$G_L := G[E_L]$ with terminal set $L$, and 
$G_R := G[E_R]$ with terminal set $R$.
The following claim follows directly from \Cref{claim:lcdsf_path_order},
and proves that these are valid recursive calls.

\begin{claim}
\label{claim:lcdsf_onepath_recursion}
  For each $(s_i, t_i) \in D$, recall that $a_i', b_i'$ denote the endpoints 
  on $P$ of $P_{s_i}, P_{t_i}$ respectively. If $(s_i, t_i) \in L$, then 
  $a_i', b_i' \in P[0, \mu-1]$, and if $(s_i, t_i) \in R$, then 
  $a_i', b_i' \in P[\mu+1, v_{|P|}]$.
\end{claim}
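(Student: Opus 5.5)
The claim should follow directly from \Cref{claim:lcdsf_path_order} and the definitions of $L$ and $R$. I treat $L$ in detail; the case of $R$ is symmetric, with the roles of the $s$-side and $t$-side reversed.

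Fix $(s_i,t_i)\in L$, so $a_i \leq_P b_i <_P \mu$. By \Cref{claim:lcdsf_path_order}, $a_i \leq_P a_i' \leq_P b_i' \leq_P b_i$. Chaining these gives $b_i' \leq_P b_i <_P \mu$, and hence also $a_i' \leq_P b_i' <_P \mu$. Since $\mu-1$ is the vertex immediately before $\mu$ on $P$, any vertex strictly before $\mu$ lies in $P[0,\mu-1]$. Therefore both $a_i'$ and $b_i'$ lie in $P[0,\mu-1]$.

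For $(s_i,t_i)\in R$ we have $\mu <_P a_i \leq_P b_i$. Combining this with $a_i \leq_P a_i' \leq_P b_i'$ from \Cref{claim:lcdsf_path_order} gives $\mu <_P a_i' \leq_P b_i'$. Both vertices therefore lie strictly after $\mu$, i.e.\ in $P[\mu+1, v_{|P|}]$.

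The only point needing care is that the order $\leq_P$ is a genuine total order on the vertices of $P$, so that the inequalities compose transitively. This holds because $P$ is a simple path. Given that, I do not anticipate any real obstacle.

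For the use of this claim afterwards, note what it buys. The subpaths $P_{s_i}$ and $P_{t_i}$ attach to $P$ inside the corresponding half. So $P_{s_i}$ followed by $P[a_i',b_i']$ followed by $P_{t_i}$ is an $s_i$-$t_i$ path contained in $E_L$ (respectively $E_R$). This path intersects the subpath $P[0,\mu-1]$ (respectively $P[\mu+1,v_{|P|}]$), so the hypotheses of \Cref{lem:lcdsf_one_path_main} are preserved in the recursive instances.
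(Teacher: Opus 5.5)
Your argument is correct and is exactly what the paper intends: the paper says the claim follows directly from \Cref{claim:lcdsf_path_order}, i.e.\ by chaining $a_i \leq_P a_i' \leq_P b_i' \leq_P b_i$ with the defining inequalities of $L$ and $R$, as you do. One caveat on your closing remark: $P_{s_i}\circ P[a_i',b_i']\circ P_{t_i}$ can be longer than $\lc$ (only $2\lc$ is guaranteed), so the recursion does not literally preserve the $\lc$-length hypothesis of \Cref{lem:lcdsf_one_path_main}; what it preserves is that $P_{s_i}$ and $P_{t_i}$, each of length at most $\lc$, lie in $E_L$ (resp.\ $E_R$) and attach to the current subpath at $a_i' \leq_P b_i'$, and that is all the subsequent claims use.
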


We continue 
this recursive process until $P$ consists of one vertex $v$, in which case 
all remaining terminal pairs can be routed through $v$. This process 
terminates after $O(\log|P|) = O(\log \nt)$ recursive steps, since $|P|$ 
is halved at each step. The following key lemma bounds the total 
cost of the junction trees constructed.
Let $E_s = E(\cup_{i \in [\nt]} P_{s_i})$
and $E_t = E(\cup_{i \in [\nt]} P_{t_i})$. 

\begin{claim}
\label{claim:lcdsf_onepath_cost}
  $E_s \cap E_L$ is disjoint from $E_s \cap E_R$. Similarly, 
  $E_t \cap E_L$ is disjoint from $E_t \cap E_R$. 
\end{claim}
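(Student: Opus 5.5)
The plan is to reduce both statements to the ordering facts of \Cref{claim:lcdsf_path_order}, using the slack of $2\lc$ built into the definitions of $a_i$ and $b_i$. I will first pin down what $E_s \cap E_L$ and $E_s \cap E_R$ actually are. By definition, $a_i'$ is the \emph{first} vertex of $P$ on the chosen $s_i$-$t_i$ path. Hence $P_{s_i}$ meets $P$ only in its endpoint $a_i'$, so $P_{s_i}$ contains no edge of $P$. It follows that no edge of $P$ lies in $E_s$. Therefore $E_s \cap E_L \subseteq \bigcup_{(s_i,t_i)\in L} E(P_{s_i})$ and $E_s \cap E_R \subseteq \bigcup_{(s_{i'},t_{i'})\in R} E(P_{s_{i'}})$. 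The same reasoning applies to $E_t$: $b_i'$ is the last vertex of $P$ on the path, so $P_{t_i}$ meets $P$ only at $b_i'$. It therefore suffices to show that for $(s_i,t_i)\in L$ and $(s_{i'},t_{i'})\in R$, the paths $P_{s_i}$ and $P_{s_{i'}}$ share no edge, and likewise $P_{t_i}$ and $P_{t_{i'}}$.

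For the source side, I suppose for contradiction that $P_{s_i}$ and $P_{s_{i'}}$ share an edge, or even just a vertex $x$. I then build a walk from $s_{i'}$ to $P$: follow $P_{s_{i'}}$ from $s_{i'}$ to $x$, which has length at most $\ell(P_{s_{i'}}) \le \lc$, and then follow $P_{s_i}$ from $x$ to $a_i'$, which has length at most $\ell(P_{s_i})\le \lc$. So $s_{i'}$ reaches $a_i'\in P$ within length $2\lc$. Since $a_{i'}$ is the \emph{first} node of $P$ reachable from $s_{i'}$ within length $2\lc$, this gives $a_{i'} \le_P a_i'$. Combining with \Cref{claim:lcdsf_path_order} and the definition of $L$ gives $a_{i'} \le_P a_i' \le_P b_i' \le_P b_i <_P \mu$. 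This contradicts $a_{i'} >_P \mu$, which holds because $(s_{i'},t_{i'})\in R$.

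The sink side is symmetric. Suppose $P_{t_i}$ and $P_{t_{i'}}$ share a vertex $y$. Then $b_{i'}'$ reaches $t_i$ within length $2\lc$, by following $P_{t_{i'}}$ from $b_{i'}'$ to $y$ and then $P_{t_i}$ from $y$ to $t_i$. Since $b_i$ is the \emph{last} node of $P$ that reaches $t_i$ within length $2\lc$, we get $b_i \ge_P b_{i'}' \ge_P a_{i'}' \ge_P a_{i'} >_P \mu$. This contradicts $b_i <_P \mu$. The main point of care is the first paragraph, namely checking that the path edges $P[0,\mu-1]$ and $P[\mu+1,v_{|P|}]$ contribute nothing to $E_s$ or $E_t$. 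The rest is the one-line rerouting argument, and it works precisely because $a_i, b_i$ were defined with the relaxed bound $2\lc$ rather than $\lc$.
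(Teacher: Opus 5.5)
Your rerouting step is exactly the paper's argument: a common vertex lets $s_{i'}$ reach $a_i' <_P \mu$ within $2\lc$, contradicting $a_{i'} >_P \mu$. Your sink-side version correctly fills in the case the paper only calls similar.

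The flaw is the reduction in your first paragraph. Showing that no edge of $P$ lies in $E_s$ does not give $E_s\cap E_L\subseteq\bigcup_{(s_i,t_i)\in L}E(P_{s_i})$. The reason is that $E_L$ also contains $E(P_{t_i})$ for $(s_i,t_i)\in L$, and such an edge may lie on some $P_{s_{i'}}$, hence in $E_s$. This cross case is an edge on $P_{t_i}$ with $(s_i,t_i)\in L$ that is also on $P_{s_{i'}}$ with $(s_{i'},t_{i'})\in R$. Your $s$-$s$ and $t$-$t$ arguments do not cover it, and it cannot be excluded at all. For instance, take:
\begin{itemize}
\item $P_{t_i}=b_i'\to u\to w\to t_i$ and $P_{s_{i'}}=s_{i'}\to u\to w\to a_{i'}'$, with $b_i'<_P\mu<_P a_{i'}'$;
\item all edges short, and $P_{s_i}$, $P_{t_{i'}}$ pendant paths.
\end{itemize}
The only vertices of $P$ that reach $t_i$ are $b_i'$ and its predecessors. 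The only vertices of $P$ reachable from $s_{i'}$ are $a_{i'}'$ and its successors. Hence $(s_i,t_i)\in L$ and $(s_{i'},t_{i'})\in R$, yet $(u,w)$ lies in $E_s\cap E_L\cap E_R$ and also in $E_t\cap E_L\cap E_R$.

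So if $E_s,E_t,E_L,E_R$ are read as plain edge sets of $G^*$, the statement itself fails. What you need is the convention the paper states in the proof of \Cref{lem:lcdsf_one_path_main}: $E_s$, $E_t$, and $E(P)$ are made disjoint by taking separate copies of each edge. Under that convention, $E_s\cap E_L$ is by definition the set of source copies of edges of $\bigcup_{(s_i,t_i)\in L}P_{s_i}$, and likewise for $R$ and for $E_t$. Your reduction to pairwise disjointness of $P_{s_i}, P_{s_{i'}}$ (and of $P_{t_i}, P_{t_{i'}}$) then holds by definition. Your observation about edges of $P$ becomes unnecessary, and the rest of your proof goes through unchanged. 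The paper's proof uses this reading tacitly when it asserts that $e\in E_L\cap E_s$ forces $e\in P_{s_i}$ for some $(s_i,t_i)\in L$; you should invoke the convention explicitly instead of the argument about edges of $P$.
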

\begin{proof}
  We prove the claim for $E_s$; the argument for $E_t$ is similar. 
  Suppose for the sake of contradiction there exists an edge $e = (u,v)$ in 
  $E_s \cap E_L \cap E_R$. Since $e \in E_L \cap E_s$, there must be 
  a terminal $s_i$ such that $e \in P_{s_i}$ and $a_i \leq_P b_i <_P \mu$. By 
  \Cref{claim:lcdsf_path_order}, $a_i' <_P \mu$. Similarly, 
  since $e \in E_R$, there exists $s_{j}$ such that $e \in P_{s_j}$ 
  and $a_j >_P \mu$. Consider the path 
  $P' = P_{s_j}[s_j, u] \circ P_{s_i}[u, a_i']$. Since $P_{s_j}$ and 
  $P_{s_i}$ are both $\lc$-length paths, $P'$ has length at most $2\lc$. 
  Thus, $s_j$ can reach $a_i'$ with a path of length $\leq 2\lc$, and 
  $a_i' <_P a_j$, a contradiction. 
\end{proof}

\begin{proof}[Proof of \Cref{lem:lcdsf_one_path_main}]
  The recursive process described yields a series of junction trees 
  denoted $H_v$ for some vertices $v \in P$ that, in total, cover all terminal pairs. 
  By \Cref{claim:lcdsf_onepath_junction}, 
  each $H_v$ is a $3\lc$-length-constrained junction tree. 
  To bound the cost, it is helpful to think of $E_s, E_t$ and $E(P)$ as disjoint 
  sets; this can be achieved by making at most $3$ copies of each edge in $E^*$. 
  Then, by \Cref{claim:lcdsf_onepath_cost}, $E_L$ is disjoint from $E_R$. 
  Therefore, each edge is in at most one junction tree per recursive level, 
  so it is in at most $O(\log \nt)$ total junction trees. Therefore, 
  the total cost of all junction trees is at most 
  $3 \cdot O(\log \nt) c(E^*) = O(\log \nt) c(E^*)$. By an averaging argument, 
  there must be one junction tree of density at most $O(\log \nt) c(E^*)/\nt$. 
\end{proof}

To conclude, 
the proof of \Cref{thm:lcdsf_existence-main} follows from 
\Cref{lem:lcdsf_layers_main}, \Cref{lem:lcdsf_separator_main}, and 
\Cref{lem:lcdsf_one_path_main}.

\subsection{Finding an Approximate Low-Density Junction Tree}
\label{sec:dsf_algo}

The main lemma we prove in this section is the following. 
\begin{lemma}
\label{lem:lcdsf_algo-main}
  Suppose there exists a bicriteria $(\alpha, \beta)$-approximation for
  LC-DST in planar graphs
  with respect to the optimal solution to \ref{LC-DST-LP}. Then, given a
  planar-LC-DSF instance $(G, D, \lc)$, there exists an efficient algorithm to
  obtain a $\beta \lc$-length-constrained junction tree 
  of $G$ of density at most $O(\alpha \cdot \log \nt)$
  times the optimal.
\end{lemma}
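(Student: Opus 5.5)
We follow the bucketing-and-scaling approach of \cite{chks09}, as used in \cite{dsf_soda}. Let $\rho^*$ denote the minimum density of an $\lc$-length-constrained junction tree. By \Cref{thm:lcdsf_existence-main}, $\rho^*$ is within $O(\log^2\nt)$ of $\opt/\nt$, but here we only need to compete with $\rho^*$ itself. We guess the root $v$ of the optimal junction tree $H^*$ by trying all $v\in V$, and we guess $\rho^*$ up to a factor $2$. Since edge costs are polynomially bounded, there are only polynomially many guesses.

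For a fixed root $v$, we write an LP relaxation for a \emph{min-density} junction tree rooted at $v$. It has edge variables $x_e$ and a variable $y_i\in[0,1]$ for each pair, indicating whether pair $i$ is covered. There are two flow families. The first sends $y_i$ units from $s_i$ to $v$ along paths of length at most $\lc$. The second sends $y_i$ units from $v$ to $t_i$ along paths of length at most $\lc$. Each family has the capacity constraints $\sum_{p\ni e}f_p\le x_e$. We normalize $\sum_i y_i=1$ and minimize $\sum_e c(e)x_e$. Setting $y_i=1/|D_{H^*}|$ on the pairs covered by $H^*$ and scaling $x$ by the same factor shows that the LP optimum is at most $\rho^*$. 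As in \ref{LC-DST-LP}, this LP is solvable up to $(1+\eps)$ via a dual separation oracle that uses approximate restricted shortest path.

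Rounding proceeds by bucketing. Group the pairs into $O(\log \nt)$ classes according to $y_i\in(2^{-j-1},2^{-j}]$, ignoring pairs with $y_i<1/(2\nt^2)$, which carry total $y$-mass at most $1/(2\nt)$. Some class $j$ carries $\Omega(1/\log \nt)$ of the total $y$-mass. This class contains $N_j$ pairs with $N_j2^{-j}=\Omega(1/\log\nt)$. Scale $(x,f)$ by $2^{j+1}$. On the class-$j$ pairs, the first flow family then becomes a feasible fractional solution to \ref{LC-DST-LP} on the reversed graph (in-tree to $v$, still planar) with terminals $\{s_i\}$. Likewise, the second family is feasible for \ref{LC-DST-LP} with root $v$ and terminals $\{t_i\}$. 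Each has cost at most $2^{j+1}\sum_e c(e)x_e$. Now apply the assumed $(\alpha,\beta)$ LP-competitive LC-DST algorithm to both instances. It returns an in-tree and an out-tree at $v$ with $\beta\lc$-length paths, of total cost $O(\alpha)2^{j}\,\mathrm{LP}$. Their union is a $\beta\lc$-length-constrained junction tree covering $N_j$ pairs. Its density is $O(\alpha)2^{j}\mathrm{LP}/N_j=O(\alpha\log\nt)\,\mathrm{LP}\le O(\alpha\log \nt)\rho^*$.

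The main obstacle I expect is making the LP setup exactly match the hypothesis. The min-density LP must decompose into two independent LC-DST LPs on planar graphs, which requires reversing edges for the $s_i$-side; reversal clearly preserves planarity and the LP form. The other delicate point is handling the density normalization, since we scale by $2^{j+1}$ without losing more than the single $O(\log\nt)$ bucketing factor. If guessing $v$ interacts badly with approximate LP solving, a fallback is to absorb the $(1+\eps)$ loss into constants.
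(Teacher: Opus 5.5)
Your proposal is correct and follows essentially the same route as the paper. Both arguments guess the root $v$ and solve the min-density LP with $\sum_i y_i = 1$, bucket pairs geometrically by $y_i$, and pick a bucket carrying $\Omega(1/\log k)$ of the mass. They then scale by $2^{j+1}$ to obtain feasible LC-DST LP solutions on $G$ (for the $t_i$) and on the reversed graph $G^R$ (for the $s_i$), and invoke the LP-competitive LC-DST algorithm twice. The only differences are cosmetic: your $1/(2k^2)$ cutoff versus the paper's $1/(2k)$ cutoff with $\log k + 1$ classes, and your guess of $\rho^*$ up to a factor $2$, which is never used and can be dropped.
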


The algorithm is as follows. First, we guess the root of the junction tree; 
we can run this algorithm for each candidate root $v \in V$ and return the 
junction tree with minimum density. Next, we solve the following LP relaxation 
for the min-density junction tree rooted at $v$. 
We follow a similar structure to that of \ref{LC-DST-LP},
with additional variables $y_{i}$ for each
$i \in [\nt]$ to indicate whether or not the pair $(s_i, t_i)$ is 
covered in the solution. For each $(s_i, t_i) \in D$, 
we let $\calP_{s_i}^{\lc}$ denote the set of all $\lc$-length-constrained 
paths from $s_i$ to $v$, and $\calP_{t_i}^{\lc}$ denote the set of 
all $\lc$-length-constrained paths from $v$ to $t_i$. 
Following \ref{LC-DST-LP}, we have variables $x_e$ for each $e \in E$ indicating 
whether an edge $e$ is in junction tree, and 
variables $f_p$ indicating which paths are used to connect $s_i$ to $v$ and 
$v$ to $t_i$. 
The resulting density of the junction tree would be $(\sum_{e \in E} c(e)x_e)/
(\sum_{i \in [\nt]} y_{i})$; we normalize $\sum_{i \in [\nt]} y_{i} = 1$.

\begin{equation}
\label{LC-Den-LP}
\tag{LC-Den-LP}
\begin{aligned}
  \min\quad \sum_{e\in E}c(e)x_e& \\
  s.t.\quad \sum_{p \in \calP_{s_i}^{\lc}} f_p &\geq y_i \qquad~
  \forall i \in [\nt] \\
  \sum_{p \in \calP_{t_i}^{\lc}} f_p &\geq y_i \qquad~
  \forall i \in [\nt] \\
  \sum_{p \in \calP_{s_i}^{\lc}, e \in p} f_p &\leq x_e \qquad 
  \forall e \in E, i \in [\nt] \\
  \sum_{p \in \calP_{t_i}^{\lc}, e \in p} f_p &\leq x_e \qquad 
  \forall e \in E, i \in [\nt] \\
  \sum_{i \in [\nt]} y_{i} = 1 \\
  x_e, f_p, y_i &\geq 0 \qquad~
  \forall e\in E, p \in \cup_{i \in [\nt]} \left(\calP_{s_i}^\lc \cup \calP_{t_i}^\lc\right), i \in [k]
\end{aligned}
\end{equation}

Let $x^*, f^*, y^*$ be an optimal solution to \ref{LC-Den-LP}. 
Partition the set of 
terminal pairs into groups $D_0, \dots, D_{\log \nt}$, where 
$(s_i, t_i) \in D_j$ if 
$y_{i}^* \in (\frac 1 {2^{j+1}}, \frac 1 {2^j}]$.
We will abuse notation and sometimes write $i \in D_j$ for ease.

\begin{claim}
\label{claim:junction_good_group}
	There exists $\theta \in \{0, \dots, \log \nt\}$ such that 
	$\sum_{i \in D_{\theta}} y_i^* \geq 1/(2(\log \nt+1))$. 
\end{claim}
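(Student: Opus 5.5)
The plan is a simple averaging argument after discarding the pairs with tiny $y^*$-value. First I will note that the groups $D_0, \dots, D_{\log \nt}$ cover every pair with $y_i^* > \frac{1}{2^{\log \nt + 1}} = \frac{1}{2\nt}$. This holds because the constraint $\sum_{i \in [\nt]} y_i^* = 1$ together with nonnegativity forces $y_i^* \leq 1 = 1/2^0$. The intervals $(\frac{1}{2^{j+1}}, \frac{1}{2^j}]$ for $j = 0, \dots, \log \nt$ are disjoint and their union is $(\frac{1}{2\nt}, 1]$. If $\log \nt$ is not an integer, I would read the index range as $\lceil \log \nt \rceil$; this changes nothing below.

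Second, I will bound the total $y^*$-mass of the uncovered pairs. These are exactly the $i$ with $y_i^* \leq \frac{1}{2\nt}$. There are at most $\nt$ of them, so their total contribution is at most $\nt \cdot \frac{1}{2\nt} = \frac12$. Hence
\[
  \sum_{j=0}^{\log \nt} \sum_{i \in D_j} y_i^* \;\geq\; 1 - \tfrac12 = \tfrac12 .
\]

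Third, the left-hand side is a sum of $\log \nt + 1$ terms, so by pigeonhole some index $\theta$ satisfies $\sum_{i \in D_\theta} y_i^* \geq \frac{1}{2(\log \nt + 1)}$, which is the claim.

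I do not expect a real obstacle. The only point needing care is making sure the low-$y^*$ pairs lose at most half the mass, which is why the threshold $\frac{1}{2\nt}$ matches the number of groups $\log \nt + 1$.
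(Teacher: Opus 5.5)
Your proof is correct and follows the paper's argument: pairs with $y_i^* \le \frac{1}{2\nt}$ carry total mass at most $\frac12$, so the remaining mass of at least $\frac12$ is pigeonholed over the $\log \nt + 1$ groups. One small quibble is that with your ceiling reading the pigeonhole bound is literally $1/(2(\lceil \log \nt \rceil + 1))$ rather than $1/(2(\log \nt + 1))$, which is immaterial, and the paper tacitly treats $\log \nt$ as an integer anyway.
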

\begin{proof}
  If $(s_i, t_i) \in D \setminus (\cup_{j =0}^{\log \nt} D_j)$, then
  $y_{i}^* \leq \frac 1 {2^{\log \nt+1}} = \frac 1 {2\nt}$. Thus, 
  $\sum_{i \notin \cup_{j = 0}^{\log \nt} D_j} y_{i}^*
  \leq \sum_{i \notin \cup_{j = 0}^{\log \nt} D_j} \frac 1 {2\nt} \leq \frac 1 2$.
  Recall that $\sum_{i \in [\nt]} y_{i}^* = 1$. Therefore, 
  $\sum_{i \in \cup_{j = 0}^{\log \nt} D_j} y_{i}^* \geq \frac 1 2$.
  Since there are $\log \nt + 1$ disjoint groups, there must be at least one group
  $D_\theta$ with $\sum_{i \in D_{\theta}} y_i^*$ at least $1/(2(\log \nt+1))$.
\end{proof}

Let $\theta$ be given by \Cref{claim:junction_good_group}. We use the
$(\alpha, \beta)$-approximation algorithm for LC-DST twice: first, we consider the instance
on $G$ with root $v$ and terminal set $D_\theta^t = \{t_i: (s_i, t_i) \in D_\theta\}$, 
and obtain a solution $T_t$. Second, we let $G^R$ be obtained from $G$ 
by reversing the direction of all edges. We consider the LC-DST instance on 
$G^R$ with root $v$ and terminal set 
$D_\theta^s = \{s_i: (s_i, t_i) \in D_\theta\}$ and obtain a solution
$T_s'$ in $G^R$. Note that $T_s'$ contains an $h$-length-constrained $v$-$s_i$ path 
in $G^R$ for all $s_i \in D_\theta^s$. 
Thus, if we reverse all edges of $T_s'$, the resulting graph $T_s$ is a subgraph of $G$
containing an $h$-length-constrained $s_i$-$v$ path for all $s_i \in D_\theta^s$. 
Therefore, $T = T_t \cup T_s$ is a $(\beta h)$-length-constrained junction tree 
on terminal pairs $D_\theta$. 

\begin{claim}
\label{claim:junction_lp_feasible}
Let $\lambda = 2^{\theta + 1}$. Then $(\lambda x^*, \lambda f^*)$ is a feasible 
solution to \ref{LC-DST-LP} on both of the following instances:
  \begin{itemize}
    \item $G$ with terminal set $D_\theta^t$,
    \item $G^R$ with terminal set $D_\theta^s$.
  \end{itemize}
  In both cases, $f^*$ is restricted to relevant paths for terminals in $D_\theta^t$, 
  $D_\theta^s$. 
\end{claim}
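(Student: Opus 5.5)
We verify the constraints of \ref{LC-DST-LP} directly for the scaled solution $(\lambda x^*, \lambda f^*)$. We treat the instance on $G$ with terminal set $D_\theta^t$ in detail. The instance on $G^R$ with terminal set $D_\theta^s$ is handled by the same argument after one observation. Reversing every edge gives a bijection between $s_i$-$v$ paths in $G$ and $v$-$s_i$ paths in $G^R$. This bijection preserves edge sets (up to orientation) and lengths. So $\calP_{s_i}^{\lc}$ in $G$ corresponds exactly to the set of $\lc$-length-constrained $v$-$s_i$ paths in $G^R$, and we assign each reversed path the $f^*$ value of its original. For $t_i \in D_\theta^t$, the relevant path family in \ref{LC-DST-LP} rooted at $v$ is precisely $\calP_{t_i}^{\lc}$. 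We set $f$ to $\lambda f^*_p$ on these paths and to zero on all other paths.

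For the covering constraint, fix $i \in D_\theta$. By the definition of the groups, $y_i^* > 1/2^{\theta+1} = 1/\lambda$. The second constraint of \ref{LC-Den-LP} then gives
\[
\sum_{p \in \calP_{t_i}^{\lc}} \lambda f^*_p \;\geq\; \lambda y_i^* \;>\; 1 .
\]
The source side is identical: the first constraint of \ref{LC-Den-LP}, transferred through the reversal bijection, gives $\sum_{p} \lambda f^*_p \geq \lambda y_i^* > 1$ over the reversed paths of $\calP_{s_i}^{\lc}$ in $G^R$.

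For the capacity constraint, fix an edge $e$ and a terminal $t_i$. The fourth constraint of \ref{LC-Den-LP} gives $\sum_{p \in \calP_{t_i}^{\lc}, e \in p} f^*_p \leq x^*_e$. Multiplying by $\lambda$ yields $\sum_{p \in \calP_{t_i}^{\lc}, e\in p} \lambda f^*_p \leq \lambda x^*_e$. In $G^R$, the reversed edge $e^R$ receives value $\lambda x^*_e$. The third constraint of \ref{LC-Den-LP}, read through the bijection, gives the analogous bound.

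Nonnegativity is immediate because $\lambda > 0$. This completes the plan. The only point needing care is that \ref{LC-Den-LP} imposes its capacity constraints separately for each $i$, rather than summing over $i$. That is exactly the per-terminal form required by \ref{LC-DST-LP}, so no aggregation over terminals is needed and the scaling factor $\lambda$ suffices.
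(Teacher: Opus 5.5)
Your proposal is correct and follows essentially the same argument as the paper. The covering constraint comes from $\sum_p \lambda f^*_p \geq \lambda y_i^* > 1$ using $y_i^* > 1/2^{\theta+1}$, and the per-terminal capacity constraints of \ref{LC-Den-LP} scale directly to those of \ref{LC-DST-LP}; the $G^R$ instance is handled by reading each $s_i$-$v$ path in reverse, and your explicit reversal bijection and zeroing of irrelevant path variables simply spell out details the paper leaves implicit.
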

\begin{proof}
  We prove the claim for $G^R$ with terminal set $D_\theta^s$, as the
  argument for $(G, D_\theta^t)$ is analogous. 
  Recall that $\calP_{s_i}^h$ is the set of all $h$-length-constrained 
  $s_i$-$v$ paths in $G$. Thus, each path in $\calP_{s_i}^h$ is an $h$-length 
  constrained $v$-$s_i$ path in $G^R$. 
  Since $(x^*, f^*, y^*)$ is a feasible solution to \ref{LC-Den-LP}, 
  $\sum_{p \in \calP_{s_i}^{\lc}} f_p^* \geq y_i^*$. 
  Furthermore, since $i \in D_\theta$, $y_i^* > \frac 1 {2^{\theta+1}}$. 
  Thus,
  $\sum_{p \in \calP_{s_i}^{\lc}} \lambda f_p^* \geq \lambda y_i^* > 1.$

  The fact that $(\lambda x^*, \lambda f^*)$ satisfies the second constraint of 
  \ref{LC-DST-LP} follows immediately 
  from the fact that $(x^*, f^*)$ satisfies the third constraint of \ref{LC-DST-LP},
  since both $x^*$ and $f^*$ are scaled by the same value.
\end{proof}

\begin{claim}
\label{claim:junction_density}
  The density of $T$ is at most $O(\alpha \log \nt) \sum_{e \in E} c(e) x_e^*$.
\end{claim}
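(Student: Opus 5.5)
The plan is to bound the numerator $c(T)$ and the denominator $|D_\theta|$ of the density separately, and then combine them using $\lambda = 2^{\theta+1}$.

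\emph{Cost.} By \Cref{claim:junction_lp_feasible}, $(\lambda x^*, \lambda f^*)$ is feasible for \ref{LC-DST-LP} on both instances, namely $(G, v, D_\theta^t)$ and $(G^R, v, D_\theta^s)$. Each such solution has LP value $\lambda \sum_e c(e) x_e^*$. Reversing edges preserves costs, and reversal of a planar digraph remains planar. Since the assumed LC-DST algorithm is $\alpha$-approximate relative to the LP optimum, we get
\[
c(T_t) \le \alpha \lambda \sum_e c(e)x_e^*
\quad\text{and}\quad
c(T_s) \le \alpha \lambda \sum_e c(e)x_e^*.
\]
Since $T = T_t \cup T_s$, this gives $c(T) \le 2\alpha\lambda \sum_e c(e)x_e^* = \alpha 2^{\theta+2}\sum_e c(e)x_e^*$.

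\emph{Pair count.} $T$ covers every pair in $D_\theta$, so its density is at most $c(T)/|D_\theta|$. For each $i \in D_\theta$ we have $y_i^* \le 2^{-\theta}$. Combining with \Cref{claim:junction_good_group},
\[
\frac{1}{2(\log \nt + 1)} \le \sum_{i\in D_\theta} y_i^* \le |D_\theta|\, 2^{-\theta},
\]
so $|D_\theta| \ge 2^{\theta}/(2(\log\nt+1))$.

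\emph{Combining.} Dividing the two bounds, the factors of $2^\theta$ cancel:
\[
\frac{c(T)}{|D_\theta|} \le \frac{\alpha\, 2^{\theta+2}\sum_e c(e)x_e^*}{2^\theta/(2(\log \nt+1))} = 8\alpha(\log\nt+1)\sum_e c(e)x_e^*,
\]
which is $O(\alpha\log\nt)\sum_e c(e)x_e^*$.

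The only delicate point is this cancellation: it is exactly why the bucketing by $y_i^*$ is done in powers of two. The lower bound $y_i^* > 2^{-(\theta+1)}$ feeds into the scaling $\lambda$ (through \Cref{claim:junction_lp_feasible}), while the upper bound $y_i^* \le 2^{-\theta}$ gives the lower bound on $|D_\theta|$. I would also note that $\sum_e c(e)x_e^*$ lower-bounds the optimal density of an $\lc$-junction tree rooted at $v$. To see this, any such tree $H$ covering $D_H$ yields a feasible LP solution by setting $y_i = 1/|D_H|$ for covered pairs and scaling $x,f$ by $1/|D_H|$, which has value $c(H)/|D_H|$. This is what connects the claim to \Cref{lem:lcdsf_algo-main}.
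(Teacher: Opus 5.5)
Your proposal is correct and follows essentially the same argument as the paper. The paper bounds $c(T_t)$ and $c(T_s)$ each by $\alpha\lambda\sum_e c(e)x_e^*$ via \Cref{claim:junction_lp_feasible}, lower-bounds $|D_\theta|$ by $2^\theta/(2(\log\nt+1))$ using $y_i^*\le 2^{-\theta}$ and \Cref{claim:junction_good_group}, and cancels the $2^\theta$ factors to get $8\alpha(\log\nt+1)\sum_e c(e)x_e^*$, exactly as you do.
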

\begin{proof}
  The algorithms used to construct $T_s$ and $T_t$ are $(\alpha, \beta)$-approximations 
  with respect to the LP. This, along with \Cref{claim:junction_lp_feasible}, 
  implies that the costs of $T_t$ and $T_s$ are each upper bounded by
  $\alpha \lambda \sum_{e \in E} c(e) x_e^*$, where $\lambda = 2^{\theta + 1}$. 

  To bound $|D_{\theta}|$,
  observe that $\sum_{i \in D_\theta} y_{i}^* \leq \sum_{i \in D_{\theta}} 1/2^\theta
  = |D_\theta| /2^{\theta}$. Furthermore, recall that $\theta$ was defined by 
  \Cref{claim:junction_good_group} such that $\sum_{i \in D_{\theta}} y_{i}^* 
  \geq 1/(2(\log k + 1))$. Therefore, 
  $|D_{\theta}| \geq 2^{\theta} \sum_{i \in D_\theta} y_{i}^* \geq 
  \frac {2^\theta}{2(\log k + 1)}$. The density of $T$ is at most

  \begin{align*}
    &\frac{c(T_t) + c(T_s)}{|D_{\theta}|}
  \leq \frac {2\log \nt + 2} {2^{\theta}} \left(2\alpha \lambda \sum_{e \in E} c(e)x_e^*\right)
  = 8\alpha(\log \nt + 1) \sum_{e \in E} c(e) x_e^*.
  \end{align*}
\end{proof}

\Cref{lem:lcdsf_algo-main} follows from \Cref{claim:junction_density}
and the fact that $T$ is a $(\beta h)$-length-constrained junction tree.
Combining \Cref{lem:lcdsf_algo-main} with the bicriteria LP-competitive approximation
for LC-DST in planar digraphs (\Cref{thm:lcdst_frac})
then yields \Cref{thm:lcdsf_algo-main}.

\begin{theorem}
\label{thm:lcdsf_algo-main}
  Given an instance $(G, D, \lc)$ of LC-DSF in planar digraphs, there exists an 
  efficient algorithm to obtain an $O(\log \nt)$-length-constrained 
  junction tree of $G$ of density at most 
  $O(\log^3 \nt)$ times the optimal.
\end{theorem}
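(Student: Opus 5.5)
The plan is to combine \Cref{lem:lcdsf_algo-main} with \Cref{thm:lcdst_frac} and then compare against the existence result \Cref{thm:lcdsf_existence-main}.

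First, instantiate \Cref{lem:lcdsf_algo-main} with the LP-competitive algorithm of \Cref{thm:lcdst_frac}, which gives $\alpha = O(\log^2 \nt)$ and $\beta = O(\log \nt)$. One point needs care: the terminal count in the DST instances is $|D_\theta| \le \nt$, so $\alpha = O(\log^2 |D_\theta|) \le O(\log^2 \nt)$. The algorithm in \Cref{sec:dsf_algo} then outputs a $(\beta \lc)$-length-constrained junction tree, that is, an $O(\log\nt)\lc$-length-constrained one, of density at most $O(\alpha \log \nt)\sum_e c(e)x^*_e = O(\log^3 \nt)\cdot \mathrm{LP}_v$. Here $\mathrm{LP}_v$ is the optimum of \ref{LC-Den-LP} for root $v$, and we take the best root $v$ over all candidate roots.

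The remaining issue, and the main obstacle, is what ``the optimal'' refers to. The LP is written for length bound $\lc$, but \Cref{thm:lcdsf_existence-main} only guarantees a $3\lc$-length-constrained junction tree of density $O(\log^2\nt)\opt/\nt$. To make the comparison, run the whole procedure with length bound $3\lc$ in \ref{LC-Den-LP}. Let $H$ be the junction tree from \Cref{thm:lcdsf_existence-main}, with root $v$ covering pairs $D_H$. Set $y_i = 1/|D_H|$ for $i\in D_H$, set $x_e = 1/|D_H|$ for $e\in H$, and put flow $1/|D_H|$ on one $3\lc$-length path from $s_i$ to $v$ and one from $v$ to $t_i$ inside $H$. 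This solution is feasible for the $3\lc$ version of \ref{LC-Den-LP} and has value $c(H)/|D_H|$. Hence $\mathrm{LP}_v \le O(\log^2\nt)\,\opt/\nt$.

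Finally, $\opt/\nt$ is exactly the density benchmark used for the iterative covering reduction. Measured against the best $3\lc$-junction tree density, the output has density $O(\log^3\nt)$ times that optimum. Its length bound is $\beta\cdot 3\lc = O(\log\nt)\lc$, which is the claimed $O(\log\nt)$-length-constrained junction tree.

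Two routine checks remain. The LP with exponentially many path variables must be solvable approximately: its dual separation oracle is a length-constrained shortest path, which is approximable to within $(1+\eps)$, as noted after \ref{LC-DST-LP}. Guessing the root adds only a factor of $n$ to the running time.
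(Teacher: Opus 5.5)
Your proposal is correct and matches the paper's argument. You plug the LP-competitive $(O(\log^2\nt), O(\log\nt))$ algorithm of \Cref{thm:lcdst_frac} into \Cref{lem:lcdsf_algo-main}, which gives length slack $O(\log\nt)$ and density $O(\log^3\nt)$ times the optimum of \ref{LC-Den-LP}, and that optimum lower-bounds the best junction-tree density. You also run \ref{LC-Den-LP} with bound $3\lc$, so that it relaxes the junction tree of \Cref{thm:lcdsf_existence-main}; the paper leaves this step implicit when it chains this theorem with the existence result to get density $O(\log^5\nt)\,\opt/\nt$.
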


This, along with \Cref{thm:lcdsf_existence-main}, 
yields an efficient algorithm to find an 
$O(\log \nt)$-length-constrained junction tree of density at most
$O(\log^5 \nt) \opt/\nt$. By the standard covering argument, this yields 
a bicriteria $(O(\log^6 \nt), O(\log \nt))$-approximation for 
LC-DSF in planar digraphs, concluding the proof of 
\Cref{thm:lcdsf_main}.

\bibliographystyle{plainurl}
\bibliography{refs}

@article{warburton1987approximation,
  title={Approximation of Pareto optima in multiple-objective, shortest-path problems},
  author={Warburton, Arthur},
  journal={Operations research},
  volume={35},
  number={1},
  pages={70--79},
  year={1987},
  publisher={INFORMS}
}

@article{chekuri2026node,
  title={Node-Weighted Multicut in Planar Digraphs},
  author={Chekuri, Chandra and Jain, Rhea},
  journal={arXiv preprint arXiv:2601.20038},
  year={2026}
}

@inproceedings{grigorescu2025directed,
  title={Directed Buy-At-Bulk Spanners},
  author={Grigorescu, Elena and Kumar, Nithish and Lin, Young-San},
  booktitle={Approximation, Randomization, and Combinatorial Optimization. Algorithms and Techniques (APPROX/RANDOM 2025)},
  pages={22--1},
  year={2025},
  organization={Schloss Dagstuhl--Leibniz-Zentrum f{\"u}r Informatik}
}

@article{hassin1992approximation,
  title={Approximation schemes for the restricted shortest path problem},
  author={Hassin, Refael},
  journal={Mathematics of Operations research},
  volume={17},
  number={1},
  pages={36--42},
  year={1992},
  publisher={INFORMS}
}

@inproceedings{bernstein2012near,
  title={Near Linear Time $(1+\epsilon)$-Approximation for Restricted Shortest Paths in Undirected Graphs},
  author={Bernstein, Aaron},
  booktitle={Proceedings of the twenty-third annual ACM-SIAM symposium on Discrete Algorithms},
  pages={189--201},
  year={2012},
  organization={SIAM}
}

@article{joksch1966shortest,
  title={The shortest route problem with constraints},
  author={Joksch, Hans C},
  journal={Journal of Mathematical analysis and applications},
  volume={14},
  number={2},
  pages={191--197},
  year={1966},
  publisher={Elsevier}
}

@book{johnson1979computers,
  title={Computers and intractability: A guide to the theory of NP-completeness},
  author={Johnson, David S and Garey, Michael R},
  year={1979},
  publisher={WH Freeman}
}

@article{lorenz2001simple,
  title={A simple efficient approximation scheme for the restricted shortest path problem},
  author={Lorenz, Dean H and Raz, Danny},
  journal={Operations Research Letters},
  volume={28},
  number={5},
  pages={213--219},
  year={2001},
  publisher={Elsevier}
}

@inproceedings{goel2001efficient,
  title={Efficient computation of delay-sensitive routes from one source to all destinations},
  author={Goel, Ashish and Ramakrishnan, Kajamalai G and Kataria, Deepak and Logothetis, Dimitris},
  booktitle={Proceedings IEEE INFOCOM},
  volume={2},
  pages={854--858},
  year={2001},
  organization={IEEE}
}

@article{abboud2024reachability,
  title={Reachability preservers: New extremal bounds and approximation algorithms},
  author={Abboud, Amir and Bodwin, Greg},
  journal={SIAM Journal on Computing},
  volume={53},
  number={2},
  pages={221--246},
  year={2024},
  publisher={SIAM}
}

@article{laekhanukit2024integrality,
  title={On the Integrality Gap of Directed {Steiner} Tree {LPs} with Relatively Integral Solutions},
  author={Laekhanukit, Bundit},
  journal={arXiv preprint arXiv:2412.10744},
  year={2024}
}

@article{hershkowitz2025planar,
  title={Planar Length-Constrained Minimum Spanning Trees},
  author={Hershkowitz, D Ellis and Huang, Richard Z},
  journal={arXiv preprint arXiv:2510.09002},
  year={2025},
  note={To appear in STOC 2026}
}

@article{GhugeN22,
  title={Quasi-polynomial algorithms for submodular tree orienteering and directed network design problems},
  author={Ghuge, Rohan and Nagarajan, Viswanath},
  journal={Mathematics of Operations Research},
  volume={47},
  number={2},
  pages={1612--1630},
  year={2022},
  publisher={INFORMS}
}

@inproceedings{ChekuriP05,
  title={A recursive greedy algorithm for walks in directed graphs},
  author={Chekuri, Chandra and Pal, Martin},
  booktitle={46th annual IEEE symposium on foundations of computer science (FOCS'05)},
  pages={245--253},
  year={2005},
  organization={IEEE}
}

@inproceedings{HalperinK03,
  title={Polylogarithmic inapproximability},
  author={Halperin, Eran and Krauthgamer, Robert},
  booktitle={Proceedings of the thirty-fifth annual ACM symposium on Theory of computing},
  pages={585--594},
  year={2003}
}

@InProceedings{hop_esa,
  author =	{Chekuri, Chandra and Jain, Rhea},
  title =	{Approximation Algorithms for Hop Constrained and Buy-At-Bulk Network Design via Hop Constrained Oblivious Routing},
  booktitle =	{32nd Annual European Symposium on Algorithms (ESA 2024)},
  pages =	{41:1--41:21},
  series =	{Leibniz International Proceedings in Informatics (LIPIcs)},
  year =	{2024},
  volume =	{308},
  editor =	{Chan, Timothy and Fischer, Johannes and Iacono, John and Herman, Grzegorz},
  publisher =	{Schloss Dagstuhl -- Leibniz-Zentrum f{\"u}r Informatik},
  address =	{Dagstuhl, Germany},
  doi =		{10.4230/LIPIcs.ESA.2024.41}
}

@InProceedings{polymatroid_esa,
  author =	{Chekuri, Chandra and Jain, Rhea and Kulkarni, Shubhang and Zheng, Da Wei and Zhu, Weihao},
  title =	{{From Directed Steiner Tree to Directed Polymatroid Steiner Tree in Planar Graphs}},
  booktitle =	{32nd Annual European Symposium on Algorithms (ESA 2024)},
  pages =	{42:1--42:19},
  series =	{Leibniz International Proceedings in Informatics (LIPIcs)},
  ISBN =	{978-3-95977-338-6},
  ISSN =	{1868-8969},
  year =	{2024},
  volume =	{308},
  editor =	{Chan, Timothy and Fischer, Johannes and Iacono, John and Herman, Grzegorz},
  publisher =	{Schloss Dagstuhl -- Leibniz-Zentrum f{\"u}r Informatik},
  address =	{Dagstuhl, Germany},
  URL =		{https://drops.dagstuhl.de/entities/document/10.4230/LIPIcs.ESA.2024.42},
  URN =		{urn:nbn:de:0030-drops-211134},
  doi =		{10.4230/LIPIcs.ESA.2024.42},
}

@inproceedings{dsf_soda,
author = {Chandra Chekuri and Rhea Jain},
title = {A Polylogarithmic Approximation for {Directed Steiner Forest} in Planar Digraphs},
booktitle = {Proceedings of the 2025 Annual ACM-SIAM Symposium on Discrete Algorithms (SODA)},
pages = {2095-2110},
year = {2025},
doi = {10.1137/1.9781611978322.67}
}

@inproceedings{bab_stoc,
  author       = "Chandra Chekuri and Rhea Jain",
  title        = "A Polylogarithmic Approximation for Buy-at-Bulk Network Design with Protection", 
  booktitle    = "Proceedings of the Fifty-Eighth Annual ACM Symposium on Theory of Computing (STOC)",
  year         = "2026",
  note         = "to be published"
}

@article{chks09,
author = {Chekuri, C. and Hajiaghayi, M. T. and Kortsarz, G. and Salavatipour, M. R.},
title = {Approximation Algorithms for Nonuniform Buy-at-Bulk Network Design},
journal = {SIAM Journal on Computing},
volume = {39},
number = {5},
pages = {1772-1798},
year = {2010},
doi = {10.1137/090750317}
}

@inproceedings{hop_congestion21,
author = {Ghaffari, Mohsen and Haeupler, Bernhard and Zuzic, Goran},
title = {Hop-Constrained Oblivious Routing},
year = {2021},
isbn = {9781450380539},
publisher = {Association for Computing Machinery},
address = {New York, NY, USA},
url = {https://doi.org/10.1145/3406325.3451098},
doi = {10.1145/3406325.3451098},
booktitle = {Proceedings of the 53rd Annual ACM SIGACT Symposium on Theory of Computing},
pages = {1208-1220},
numpages = {13},
location = {Virtual, Italy},
series = {STOC 2021}
}

@inproceedings{hop_distance21,
author = {Haeupler, Bernhard and Hershkowitz, D. Ellis and Zuzic, Goran},
title = {Tree Embeddings for Hop-Constrained Network Design},
year = {2021},
isbn = {9781450380539},
publisher = {Association for Computing Machinery},
address = {New York, NY, USA},
url = {https://doi.org/10.1145/3406325.3451053},
doi = {10.1145/3406325.3451053},
booktitle = {Proceedings of the 53rd Annual ACM SIGACT Symposium on Theory of Computing},
pages = {356-369},
numpages = {14},
location = {Virtual, Italy},
series = {STOC 2021}
}

@INPROCEEDINGS {filtser22,
author = {A. Filtser},
booktitle = {2021 IEEE 62nd Annual Symposium on Foundations of Computer Science (FOCS)},
title = {Hop-Constrained Metric Embeddings and their Applications},
year = {2022},
volume = {},
issn = {},
pages = {492-503},
doi = {10.1109/FOCS52979.2021.00056},
url = {https://doi.ieeecomputersociety.org/10.1109/FOCS52979.2021.00056},
publisher = {IEEE Computer Society},
address = {Los Alamitos, CA, USA},
month = feb
}

@article{acsz11,
author = {Antonakopoulos, Spyridon and Chekuri, Chandra and Shepherd, Bruce and Zhang, Lisa},
title = {Buy-at-Bulk Network Design with Protection},
journal = {Mathematics of Operations Research},
volume = {36},
number = {1},
pages = {71-87},
year = {2011},
doi = {10.1287/moor.1110.0484}
}

@article{kortsarz_nutov11,
title = {Approximating some network design problems with node costs},
journal = {Theoretical Computer Science},
volume = {412},
number = {35},
pages = {4482-4492},
year = {2011},
issn = {0304-3975},
doi = {https://doi.org/10.1016/j.tcs.2011.04.013},
url = {https://www.sciencedirect.com/science/article/pii/S0304397511003021},
author = {Guy Kortsarz and Zeev Nutov},
}

@article{akgun_tansel11,
  title={New formulations of the Hop-Constrained Minimum Spanning Tree problem via {Miller-Tucker-Zemlin} constraints},
  author={Akg{\"u}n, Ibrahim and Tansel, Barbaros {\c{C}}},
  journal={European Journal of Operational Research},
  volume={212},
  number={2},
  pages={263--276},
  year={2011},
  publisher={Elsevier}
}

@article{ba92,
author = {Balakrishnan, Anantaram and Altinkemer, Kemal},
title = {Using a Hop-Constrained Model to Generate Alternative Communication Network Design},
journal = {ORSA Journal on Computing},
volume = {4},
number = {2},
pages = {192-205},
year = {1992},
doi = {10.1287/ijoc.4.2.192}
}

@article{pirkul_soni03,
title = {New formulations and solution procedures for the hop constrained network design problem},
journal = {European Journal of Operational Research},
volume = {148},
number = {1},
pages = {126-140},
year = {2003},
issn = {0377-2217},
doi = {https://doi.org/10.1016/S0377-2217(02)00366-1},
url = {https://www.sciencedirect.com/science/article/pii/S0377221702003661},
author = {Hasan Pirkul and Samit Soni},
}

@article{dahl98,
title = {The 2-hop spanning tree problem},
journal = {Operations Research Letters},
volume = {23},
number = {1},
pages = {21-26},
year = {1998},
doi = {https://doi.org/10.1016/S0167-6377(98)00029-7},
author = {Geir Dahl}
}

@article{althaus05,
title = {Approximating k-hop minimum-spanning trees},
journal = {Operations Research Letters},
volume = {33},
number = {2},
pages = {115-120},
year = {2005},
issn = {0167-6377},
doi = {https://doi.org/10.1016/j.orl.2004.05.005},
author = {Ernst Althaus and Stefan Funke and Sariel Har-Peled and Jochen Könemann and Edgar A. Ramos and Martin Skutella}}

@article{kls05,
  title={Approximating the degree-bounded minimum diameter spanning tree problem},
  author={K{\"o}nemann, Jochen and Levin, Asaf and Sinha, Amitabh},
  journal={Algorithmica},
  volume={41},
  pages={117--129},
  year={2005},
  publisher={Springer}
}

@inproceedings{kp97,
author = {Kortsarz, Guy and Peleg, David},
title = {Approximating Shallow-Light Trees},
year = {1997},
isbn = {0898713900},
publisher = {Society for Industrial and Applied Mathematics},
address = {USA},
booktitle = {Proceedings of the Eighth Annual ACM-SIAM Symposium on Discrete Algorithms},
pages = {103–110},
numpages = {8},
location = {New Orleans, Louisiana, USA},
series = {SODA '97}
}

@InProceedings{kp06,
author="Kantor, Erez
and Peleg, David",
editor="Calamoneri, Tiziana
and Finocchi, Irene
and Italiano, Giuseppe F.",
title="Approximate Hierarchical Facility Location and Applications to the Shallow Steiner Tree and Range Assignment Problems",
booktitle="Algorithms and Complexity",
year="2006",
publisher="Springer Berlin Heidelberg",
address="Berlin, Heidelberg",
pages="211--222",
isbn="978-3-540-34378-3"
}

@INPROCEEDINGS{ravi94,
  author={Ravi, R.},
  booktitle={Proceedings 35th Annual Symposium on Foundations of Computer Science}, 
  title={Rapid rumor ramification: approximating the minimum broadcast time}, 
  year={1994},
  volume={},
  number={},
  pages={202-213},
  doi={10.1109/SFCS.1994.365693}}

@article{marathe98,
title = {Bicriteria Network Design Problems},
journal = {Journal of Algorithms},
volume = {28},
number = {1},
pages = {142-171},
year = {1998},
issn = {0196-6774},
doi = {https://doi.org/10.1006/jagm.1998.0930},
url = {https://www.sciencedirect.com/science/article/pii/S0196677498909300},
author = {Madhav V Marathe and R Ravi and Ravi Sundaram and S.S Ravi and Daniel J Rosenkrantz and Harry B Hunt},
}

@article{hks09,
  title={Approximating buy-at-bulk and shallow-light $k$-Steiner trees},
  author={Hajiaghayi, Mohammad Taghi and Kortsarz, Guy and Salavatipour, Mohammad R},
  journal={Algorithmica},
  volume={53},
  pages={89--103},
  year={2009},
  publisher={Springer}
}

@article{ks11,
author = {Khani, M. Reza and Salavatipour, Mohammad R.},
title = {Improved Approximations for Buy-at-Bulk and Shallow-Light $k$-Steiner Trees and $(k,2)$-Subgraph},
year = {2016},
issue_date = {February  2016},
publisher = {Springer-Verlag},
address = {Berlin, Heidelberg},
volume = {31},
number = {2},
issn = {1382-6905},
url = {https://doi.org/10.1007/s10878-014-9774-5},
doi = {10.1007/s10878-014-9774-5},
journal = {J. Comb. Optim.},
month = feb,
pages = {669–685},
numpages = {17}
}

@article{dkr16,
author = {Dinitz, Michael and Kortsarz, Guy and Raz, Ran},
title = {Label Cover Instances with Large Girth and the Hardness of Approximating Basic K-Spanner},
year = {2016},
issue_date = {February 2016},
publisher = {Association for Computing Machinery},
address = {New York, NY, USA},
volume = {12},
number = {2},
issn = {1549-6325},
url = {https://doi.org/10.1145/2818375},
doi = {10.1145/2818375},
journal = {ACM Trans. Algorithms},
month = dec,
articleno = {25},
numpages = {16}
}

@inproceedings{HZ25_sosa_mst,
author = {D Ellis Hershkowitz and Richard Z Huang},
title = {Simple Length-Constrained Minimum Spanning Trees},
booktitle = {2025 Symposium on Simplicity in Algorithms (SOSA)},
chapter = {},
pages = {341-349},
year = {2025},
doi = {10.1137/1.9781611978315.25},
URL = {https://epubs.siam.org/doi/abs/10.1137/1.9781611978315.25},
eprint = {https://epubs.siam.org/doi/pdf/10.1137/1.9781611978315.25}
}

@article{FriggstadM23,
  title={{A $O(\log k)$-Approximation for Directed Steiner Tree in Planar Graphs}},
  author={Friggstad, Zachary and Mousavi, Ramin},
  journal={ACM Transactions on Algorithms},
  volume={21},
  number={4},
  pages={1--14},
  year={2025},
  publisher={ACM New York, NY},
  note = "Preliminary version in ICALP 2023"
}

@article{Thorup04,
author = {Thorup, Mikkel},
title = {Compact oracles for reachability and approximate distances in planar digraphs},
year = {2004},
issue_date = {November 2004},
publisher = {Association for Computing Machinery},
address = {New York, NY, USA},
volume = {51},
number = {6},
issn = {0004-5411},
url = {https://doi.org/10.1145/1039488.1039493},
doi = {10.1145/1039488.1039493},
journal = {J. ACM},
month = nov,
pages = {993-1024},
numpages = {32}
}

@article{Charikaretal99,
  title={Approximation algorithms for directed {Steiner} problems},
  author={Charikar, Moses and Chekuri, Chandra and Cheung, To-Yat and Dai, Zuo and Goel, Ashish and Guha, Sudipto and Li, Ming},
  journal={Journal of Algorithms},
  volume={33},
  number={1},
  pages={73--91},
  year={1999},
  publisher={Elsevier}
}

@inproceedings{ZosinK02,
  title={On directed {Steiner} trees},
  author={Zosin, Leonid and Khuller, Samir},
  booktitle={Proceedings of ACM-SIAM SODA},
  pages={59--63},
  year={2002}
}

@article{LiL22,
  title={Polynomial integrality gap of flow {LP} for directed {Steiner} tree},
  author={Li, Shi and Laekhanukit, Bundit},
  journal={ACM Transactions on Algorithms},
  volume={21},
  number={1},
  pages={1--9},
  year={2024},
  publisher={ACM New York, NY}
}

@article{GrandoniLL22,
  title={${O}(\log^2 k/\log \log k)$-Approximation Algorithm for Directed {Steiner} Tree: A Tight Quasi-Polynomial Time Algorithm},
  author={Grandoni, Fabrizio and Laekhanukit, Bundit and Li, Shi},
  journal={SIAM Journal on Computing},
  volume={52},
  number={2},
  pages={298--322},
  year={2022},
  publisher={SIAM},
  note = "Preliminary version in Proc. of STOC 2019"
}

@article{Zelikovsky97,
  title={A series of approximation algorithms for the acyclic directed {Steiner} tree problem},
  author={Zelikovsky, Alexander},
  journal={Algorithmica},
  volume={18},
  number={1},
  pages={99--110},
  year={1997},
  publisher={Springer}
}

@article{DemaineHK14,
  title={Node-weighted {Steiner} tree and group {Steiner} tree in planar graphs},
  author={Demaine, Erik D and Hajiaghayi, MohammadTaghi and Klein, Philip N},
  journal={ACM Transactions on Algorithms (TALG)},
  volume={10},
  number={3},
  pages={1--20},
  year={2014},
  publisher={ACM New York, NY, USA}
}

@article{BorradaileKM09,
  title={An ${O}(n \log n)$ approximation scheme for {Steiner} tree in planar graphs},
  author={Borradaile, Glencora and Klein, Philip and Mathieu, Claire},
  journal={ACM Transactions on Algorithms (TALG)},
  volume={5},
  number={3},
  pages={1--31},
  year={2009},
  publisher={ACM New York, NY, USA}
}

@inproceedings{FriggstadKKLST14,
  title={Linear programming hierarchies suffice for directed {Steiner} tree},
  author={Friggstad, Zachary and K{\"o}nemann, Jochen and Kun-Ko, Young and Louis, Anand and Shadravan, Mohammad and Tulsiani, Madhur},
  booktitle={International Conference on Integer Programming and Combinatorial Optimization},
  pages={285--296},
  year={2014},
  organization={Springer}
}

@article{Rothvoss11,
  author       = {Thomas Rothvo{\ss}},
  title        = {Directed {Steiner} Tree and the {Lasserre} Hierarchy},
  journal      = {CoRR},
  volume       = {abs/1111.5473},
  year         = {2011},
  url          = {http://arxiv.org/abs/1111.5473},
  eprinttype    = {arXiv},
  eprint       = {1111.5473}
}

@article{ChekuriEGS11,
  title={Set connectivity problems in undirected graphs and the directed steiner network problem},
  author={Chekuri, Chandra and Even, Guy and Gupta, Anupam and Segev, Danny},
  journal={ACM Transactions on Algorithms (TALG)},
  volume={7},
  number={2},
  pages={1--17},
  year={2011},
  publisher={ACM New York, NY, USA}
}

@inproceedings{KawarabayashiS21,
  author       = {Ken{-}ichi Kawarabayashi and Anastasios Sidiropoulos},
  title        = {Embeddings of Planar Quasimetrics into Directed $\ell_1$ and Polylogarithmic Approximation for Directed Sparsest-Cut},
  booktitle    = {62nd {IEEE} Annual Symposium on Foundations of Computer Science, {FOCS} 2021, Denver, CO, USA, February 7-10, 2022},
  pages        = {480--491},
  publisher    = {{IEEE}},
  year         = {2021},
  url          = {https://doi.org/10.1109/FOCS52979.2021.00055},
  doi          = {10.1109/FOCS52979.2021.00055}
}

@article{feldman_improved_2012,
	series = {{JCSS} {Knowledge} {Representation} and {Reasoning}},
	title = {Improved approximation algorithms for {Directed} {Steiner} {Forest}},
	volume = {78},
	issn = {0022-0000},
	url = {https://www.sciencedirect.com/science/article/pii/S0022000011000584},
	doi = {10.1016/j.jcss.2011.05.009},
	number = {1},
	urldate = {2024-06-23},
	journal = {Journal of Computer and System Sciences},
	author = {Feldman, Moran and Kortsarz, Guy and Nutov, Zeev},
	month = jan,
	year = {2012},
	pages = {279--292}
}

@inproceedings{cohen-addad_bypassing_2022,
	address = {New York, NY, USA},
	title = {Bypassing the surface embedding: approximation schemes for network design in minor-free graphs},
	shorttitle = {Bypassing the surface embedding},
	doi = {10.1145/3519935.3520049},
	booktitle = {Proceedings of the 54th {Annual} {ACM} {SIGACT} {Symposium} on {Theory} of {Computing}},
	publisher = {Association for Computing Machinery},
	author = {Cohen-Addad, Vincent},
	month = jun,
	year = {2022},
	pages = {343--356},
}

@InProceedings{FriggstadM21,
  author =	{Friggstad, Zachary and Mousavi, Ramin},
  title =	{A Constant-Factor Approximation for Quasi-Bipartite Directed {Steiner} Tree on Minor-Free Graphs},
  booktitle =	{Approximation, Randomization, and Combinatorial Optimization. Algorithms and Techniques (APPROX/RANDOM 2023)},
  pages =	{13:1--13:18},
  year =	{2023},
  URL =		{https://drops.dagstuhl.de/entities/document/10.4230/LIPIcs.APPROX/RANDOM.2023.13},
  URN =		{urn:nbn:de:0030-drops-188389},
  doi =		{10.4230/LIPIcs.APPROX/RANDOM.2023.13},
}

@article{BermanBKRY13,
title = {Approximation algorithms for spanner problems and Directed {Steiner} Forest},
journal = {Information and Computation},
volume = {222},
pages = {93-107},
year = {2013},
note = {38th International Colloquium on Automata, Languages and Programming (ICALP 2011)},
issn = {0890-5401},
doi = {https://doi.org/10.1016/j.ic.2012.10.007},
author = {Piotr Berman and Arnab Bhattacharyya and Konstantin Makarychev and Sofya Raskhodnikova and Grigory Yaroslavtsev},
}

@article{CDKL20,
  title={Approximating spanners and directed steiner forest: Upper and lower bounds},
  author={Chlamt{\'a}{\v{c}}, Eden and Dinitz, Michael and Kortsarz, Guy and Laekhanukit, Bundit},
  journal={ACM Transactions on Algorithms (TALG)},
  volume={16},
  number={3},
  pages={1--31},
  year={2020},
  publisher={ACM New York, NY, USA}
}

@article{LiptonTarjan79,
author = {Lipton, Richard J. and Tarjan, Robert Endre},
title = {A Separator Theorem for Planar Graphs},
journal = {SIAM Journal on Applied Mathematics},
volume = {36},
number = {2},
pages = {177-189},
year = {1979},
doi = {10.1137/0136016},
URL = {https://doi.org/10.1137/0136016},
eprint = {https://doi.org/10.1137/0136016}
}

@article{BateniHM11,
  title={Approximation schemes for Steiner forest on planar graphs and graphs of bounded treewidth},
  author={Bateni, M. and Hajiaghayi, M. and Marx, D.},
  journal={Journal of the ACM (JACM)},
  volume={58},
  number={5},
  pages={21},
  year={2011},
  publisher={ACM}
}

@book{my_thesis,
  title={Approximation Algorithms for Connectivity and Fault-Tolerant Network Design},
  author={Jain, Rhea},
  year={2026},
  publisher={University of Illinois at Urbana-Champaign}
}

@inproceedings{DodisK99,
  title={Design networks with bounded pairwise distance},
  author={Dodis, Yevgeniy and Khanna, Sanjeev},
  booktitle={Proceedings of the thirty-first annual ACM Symposium on Theory of Computing (STOC)},
  pages={750--759},
  year={1999},
  url={https://doi.org/10.1145/301250.301447}
}

\end{document}